\documentclass[11pt]{article}
	\usepackage[margin=1in]{geometry}

	\usepackage{amsmath, amssymb, amsthm, mathtools, amsfonts}
	\usepackage{bm}
	\usepackage{physics}
	\usepackage{array}
	\usepackage{mathrsfs}
	\usepackage{bbm}
	\usepackage{pifont}
	\usepackage{xcolor}
	
	\newtheorem{corollary}{Corollary}
	\newtheorem{proposition}{Proposition}
	\newtheorem{definition}{Definition}
	\newtheorem{lemma}{Lemma}
	\newtheorem{assumption}{Assumption}
	\theoremstyle{definition}
	\newtheorem{example}{Example}
	\newtheorem{remark}{Remark}
	
	\newcommand{\C}{\mathbb{C}}
	
	\newcommand{\RR}{\mathbb{R}}

	\DeclareMathOperator{\diag}{diag}
	\newcommand{\Spin}{\mathrm{Spin}}

	\newcommand{\Cl}{\mathrm{Cl}}

	\newcommand{\id}{\mathrm{id}}

	\providecommand{\CC}{\mathbb{C}}
	\providecommand{\DD}{\mathbb{D}}
	
	\providecommand{\id}{\mathrm{id}}
	\providecommand{\End}{\mathrm{End}}
	\DeclareMathOperator{\Sym}{Sym}

	\usepackage[authoryear,round]{natbib}

	\usepackage{hyperref}
	
	\title{Information Geometry of Bounded Rationality:\\
		\Large 
			 Entropy--Regularised Choice with Hyperbolic and Elliptic Quantum Geometries}
				
	\author{Anders Karlstr\" om, Christer Persson\\{\scriptsize KTH Royal Institute of Technology }}

\begin{document}
		
	\maketitle

	\begin{abstract}
		Models of bounded rationality span a variety of formalisms.  Two prominent softened--choice paradigms are
		quantum--like (QL) models, which use Hilbert--space amplitudes to account for context and order effects, and
		entropy--regularised (ER) models, including rational inattention, which modify expected utility by adding an
		information cost.  This paper develops a single information--geometric framework in which these families arise
		as different manifestations of the same underlying geometric structure on the probability simplex.
		
		Starting from the Fisher--Rao geometry of the open simplex $\Delta^{n-1}$, we introduce \emph{least--action rationality}
		(LAR) as a variational principle for decision dynamics, formulated in amplitude (square--root) coordinates and then
		lifted to the cotangent phase space $N:=T^*\RR^n$ of unnormalised amplitudes.  The lift carries its canonical symplectic
		form and, from the underlying Hessian data, a canonical para--K\"ahler (neutral) structure.  In the linear evaluator case
		$\widehat V=\widehat S+\widehat F$ with $\widehat S^\top=\widehat S$ and $\widehat F^\top=-\widehat F$, the lifted dynamics
		separate an evaluative (potential) channel generated by $\widehat S$ from a comparative (co--utility) channel generated by
		$\widehat F$, the latter extending Fishburn--type skew--symmetric bilinear regret.  On the distinguished zero--residual
		Lagrangian leaf, the flow admits a compact split--complex (Schr\"odinger--type) representation, and observable choice
		probabilities are obtained by the hyperbolic Born--type normalisation.
		
		Boundedly rational behaviour appears when the latent lifted dynamics are reduced back to epistemic motion on the Fisher--Rao
		simplex.  The induced simplex preference one--form admits a canonical exact/complementary decomposition with respect to the
		Fisher--Rao metric: an integrable utility component $dU$ and a divergence--free co--utility component whose curvature $d\beta$
		measures the resulting path dependence (holonomy) of preferences.  Further loss of latent structure through projection and
		context--dependent readouts renders the induced simplex process generically contextual, producing order effects, violations of
		the law of total probability, and interference--like terms as systematic shadows of the underlying rational flow.
		
		Finally, we show how standard complex (elliptic) quantum dynamics arises from the same real symplectic phase space by imposing
		additional modelling input: a K\"ahler polarisation (equivalently, a complex Lagrangian eigenbundle admissibility restriction)
		that replaces the full Hamilton equation by a projected Hamilton evolution.  In this sense, unitary quantum dynamics is not a
		primitive postulate but a coherent restriction of the underlying least--action framework.
	\end{abstract}

\clearpage           
\setcounter{page}{1}  

\section{Introduction}

Models of bounded rationality often describe behaviour through probabilistic rather than deterministic choice.
In discrete choice, such behaviour is represented by a probability vector
\(q=(q_1,\dots,q_n)\in\Delta^{n-1}\), interpreted either as a stochastic choice rule in the sense of
\cite{Luce1959} or as a von~Neumann--Morgenstern lottery over \(n\) outcomes \citep{vonNeumannMorgenstern1944}.
Two influential paradigms build on this representation in markedly different styles.
Quantum--like cognitive (QL) models represent behaviour through amplitudes and contextual projections, whereas
entropy--regularised (ER) models derive smooth probabilistic choice rules directly on the simplex.
The aim of this paper is to show that \emph{information geometry} provides a natural common framework for these approaches.
Moreover, we show that the same geometric lift underlying ER models also canonically induces a \emph{hyperbolic} (split--signature) quantum--like structure.

Quantum--like cognitive (QL) models describe the cognitive state as an amplitude---either a state vector or a
density operator---in a (typically complex) Hilbert space.  Deliberation and contextual change are represented
by unitary (or more generally completely positive) maps, and choice probabilities follow from the Born rule
(e.g.\ \citet{Aerts1995,BusemeyerEtAl2006,Aerts2009,BusemeyerBruza2012,YearsleyBusemeyer2016,Yearsley2017}).
Because contextual projections and observables need not commute, QL models naturally generate several empirical
signatures of bounded rationality, such as context effects, order effects, and deviations from the classical
law of total probability (e.g.\ \citet{Khrennikov2005,KhrennikovHaven2009,BruzaKittoNelsonMcEvoy2009,PothosBusemeyer2013,HavenKhrennikov2016}).
In this perspective bounded rationality reflects a contextual, noncommutative calculus, rather than noisy
perturbations of an underlying classical utility model.\footnote{The ``quantum'' label here refers to the use
	of Hilbert--space contextuality and interference, rather than to any claim of microscopic quantum computation in
	neural tissue \citep{deBarrosSuppes2009}.}

Entropy--regularised (ER) models provide smooth probabilistic choice rules directly on the simplex.
A canonical example is Rational Inattention, in which a decision maker trades off deterministic utility against
an information cost.  The Luce--logit (softmax) rule arises when this cost is Shannon entropy, while more
general variational formulations allow broader classes of information costs \citep{Sims2003,MatejkaMcKay2015}.
More broadly, information--theoretic ideas relating probabilistic choice, entropy, and minimum information have
appeared in the econometric and psychological literatures in several forms.\footnote{See, e.g.,
	\cite{McFadden1974,SnickarsWeibull1977,Luce2003}.}  In all cases, the resulting choice rules reflect a balance
between preference and information cost.

Although QL and ER models are usually presented as distinct paradigms, they admit a common interpretation in
information geometry.
On the ER side, entropy penalties and exponential--family structure arise from convex duality on the Fisher--Rao
statistical manifold \((\Delta^{n-1}_{>0},g_F)\).\footnote{See \cite{Ly2017} for an exposition of Fisher--Rao
	geometry in mathematical psychology.}
On the QL side, a closely related geometric structure appears after the amplitude reparametrisation
\(q\mapsto\rho\) given by \(\rho_k=\sqrt{q_k}\).  This map identifies the simplex interior with the positive
orthant of the unit sphere, pulls back \(g_F\) (up to scale) to the round metric, and places the amplitude
description in the standard symplectic/Hamiltonian framework familiar from geometric formulations of quantum
mechanics \citep{Kibble1979,Heslot1985,BrodyHughston2001,Goyal2010,ReginattoHall2012,Reginatto2013,Molitor2012,
	Caticha2021}.  Under this reparametrisation, expected utilities become quadratic forms in amplitudes
\citep{AertsHavenSozzo2018}, and the dynamics of probabilities may be analysed via an induced dynamics on
amplitudes.

Our starting point is this shared information--geometric foundation.  We formulate a local
\emph{least--action rationality} (LAR) principle for decision dynamics, in the spirit of the differential
approach to preferences \citep{Machina1982,Russell1991} and related least--action proposals in cognitive
modelling (cf.\ \citet{BettiGori2016LeastCognitiveAction,FoxKotelba2018PLPA}), but applied at the level of
\emph{amplitudes} rather than directly on the simplex.
At each state, the realised velocity is the metric projection (in the Fisher--Rao geometry induced by the
amplitude embedding) of a preferred direction specified by a preference covector field.
In the linear case this yields an evolution for \emph{unnormalised} amplitudes
\(\dot{\tilde\rho}=\widehat V\tilde\rho\), where the generator decomposes canonically as
\(\widehat V=\widehat S+\widehat F\) with \(\widehat S^{\!\top}=\widehat S\) (evaluative channel) and
\(\widehat F^{\!\top}=-\widehat F\) (co--utility/rotational channel).

Passing from amplitudes to their cotangent lift yields an ambient first--order Hamiltonian system on
\(\mathbb R^{2n}\cong T^*\mathbb R^n\).
The lifted Fisher--Rao structure canonically induces a flat para--K\"ahler geometry on this space (neutral metric,
symplectic form, and product structure).
On a distinguished \emph{zero--residual} sector (where the least--action projection constraint is exactly met),
the dynamics admit a split--complex (\emph{hyperbolic}) Schr\"odinger--type representation.

This para--Hermitian lift is rich enough to reproduce several signatures typically attributed to quantum--like
models. It yields a hyperbolic Born--type readout, permits noncommutative observables, and generates phase
sensitivity, interference patterns, and contextual effects, all without positing a complex Hilbert space.
Moreover, the hyperbolic geometry provides a canonical neutral quadratic index whose evolution obeys an exact
balance law with a nonnegative source term, thereby organising episodes by the cumulative accumulation of
least--action residual.

Although QL models are often formulated in complex Hilbert space, it is frequently acknowledged---and
sometimes adopted---that real--valued formulations already capture the essential contextual--projection
structure without appealing to complex phases (e.g.\ \cite{BruzaKittoNelsonMcEvoy2009}).
The para--K\"ahler framework developed here provides a principled foundation for such real formulations by
deriving a genuinely hyperbolic (split--signature) quantum geometry directly from classical Fisher--Rao
decision dynamics, connecting also to earlier work on hyperbolic quantum mechanics
\citep{Khrennikov2000hyperbolicquantummechanics}.
In this view, the split--complex notation used later is a convenient packaging of an underlying real geometry,
not an approximation to an elliptic (complex) quantum model.

QL models with coherent (unitary) evolution form a different and equally important strand of the QL
literature (e.g.\ \cite{BusemeyerWangLambertMogiliansky2009,BusemeyerZhangBalakrishnanWang2020}).
We show in \autoref{sec:4} how such \emph{elliptic} (Riemannian/complex) quantum dynamics arises
in our framework only after an additional coherence input: one must impose an admissibility restriction that
selects a holomorphic Lagrangian sector (a polarisation) compatible with the symplectic structure, so that the
projected dynamics become unitary on the corresponding complex state space.

The rest of the paper is organised as follows.
\autoref{sec:2} formulates least--action rationality on the probability simplex from three geometric
ingredients: Fisher--Rao distinguishability, a preference covector field, and a least--action variational
principle.
\autoref{sec:3} lifts this structure to an ambient amplitude model on \(\mathbb R^{2n}\): we derive the
quadratic LAR Lagrangian, the associated Hamiltonian system, its canonical para--K\"ahler geometry, and the
split--complex (hyperbolic) Schr\"odinger representation on the distinguished zero--residual sector.
\autoref{sec:4} then shows how standard (elliptic) quantum geometry arises by imposing a
polarised admissibility restriction, making precise in what sense coherence--preserving unitary dynamics is a
special case of the lifted theory.
\autoref{sec:BR} returns to behavioural implications, analysing how the para--K\"ahler lift
organises systematic deviations from expected utility.
\autoref{sec:conclusions} concludes and discusses directions for future work.

\section{A least--action rationality principle}
\label{sec:2}
In this section we begin with making explicit the three foundational assumptions that constitute a principle of rationality used in this paper. Each assumption
isolates a specific geometric ingredient: (i) a notion of
\emph{distinguishability} between nearby redistributions of a lottery,
(ii) an \emph{evaluative} (preference) structure defined on such
redistributions, and (iii) a \emph{variational} principle.

\subsection{Distinguishability}
\label{subsec:manifold}

We take as primitive observable objects lotteries over the finite outcome set $\{1,\dots,n\}$, represented by probability
vectors. Accordingly, we consider the open probability simplex
\[
\mathcal M:=\Delta^{n-1}_{>0}
=\Bigl\{\, q=(q_1,\dots,q_n)\in\RR^n \ \Big|\ q_i>0,\ \sum_{i=1}^n q_i=1 \,\Bigr\},
\]
which is a smooth embedded submanifold of the affine hyperplane $\{q\in\RR^n:\sum_i q_i=1\}$ of dimension $n-1$. Its closure
is the closed simplex
\[
\Delta^{n-1}
=\Bigl\{\, q\in\RR^n \ \Big|\ q_i\ge 0,\ \sum_{i=1}^n q_i=1 \,\Bigr\},
\]
so that $\mathcal M$ is the interior of $\Delta^{n-1}$. For each $q\in\mathcal M$ the tangent space is canonically
identified with the codimension--one subspace
\[
T_q\mathcal M=\Bigl\{\, v\in\RR^n \ \Big|\ \sum_{i=1}^n v_i=0 \,\Bigr\}.
\]

To formalise ``distinguishability'' we endow lotteries with a perceptual metric.  We obtain it by lifting lotteries to the
unit sphere, where the round geometry provides a canonical notion of distance, and then pulling this distance back to the simplex. Let
$\langle\cdot,\cdot\rangle$ be the Euclidean inner product on $\RR^n$ and $\|x\|:=\sqrt{\langle x,x\rangle}$ its norm.
Given any nonzero \emph{unnormalised amplitude} $\tilde\rho\in\RR^n\setminus\{0\}$, set
\[
\rho:=\frac{\tilde\rho}{\|\tilde\rho\|}\in\mathbb S^{n-1},
\qquad
\mathbb S^{n-1}:=\{\,\rho\in\RR^n:\|\rho\|=1\,\}.
\]
We read out an observable lottery by componentwise squaring,
\[
q_i=\rho_i^2=\frac{\tilde\rho_i^{\,2}}{\|\tilde\rho\|^2}\qquad(i=1,\dots,n),
\]
equivalently via the smooth surjection
\[
\pi:\mathbb S^{n-1}\to\Delta^{n-1},
\qquad
\pi(\rho)=(\rho_1^2,\dots,\rho_n^2)=\rho^{\odot 2},
\]
where $x^{\odot 2}$ denotes componentwise squaring. The map $\pi$ is sign--blind: for $q\in\mathcal M$ the fibre
$\pi^{-1}(q)$ consists of the $2^n$ sign choices $\rho\mapsto s\rho$ with $s=\mathrm{diag}(\pm1)$, whereas on faces the same
action produces degeneracies when some components vanish. For geometric constructions on lotteries it is convenient to choose
the canonical square--root section into the nonnegative orthant
\[
\overline{\mathbb S}^{\,n-1}_+:=\{\,\rho\in\mathbb S^{n-1}:\rho_i\ge 0\ \forall i\,\},
\qquad
\iota:\Delta^{n-1}\to\overline{\mathbb S}^{\,n-1}_+,
\qquad
\iota(q)=\sqrt q:=(\sqrt{q_1},\dots,\sqrt{q_n}),
\]
so that $\pi\circ\iota=\mathrm{id}_{\Delta^{n-1}}$. On the interior $\mathcal M$ the restriction of $\iota$ is smooth and
maps into the open orthant
\[
\mathbb S^{n-1}_+:=\{\,\rho\in\mathbb S^{n-1}:\rho_i>0\ \forall i\,\}.
\]
We emphasise that $\iota$ serves only as a smooth lift (equivalently, a diffeomorphism
$\iota:\mathcal M\to\mathbb S^{n-1}_+$ with inverse $\pi\!\mid_{\mathbb S^{n-1}_+}$) for expressing simplex--level geometric objects;
it does not constrain the sign pattern of the underlying normalised amplitude $\tilde\rho/\|\tilde\rho\|$, which remains unconstrained.

Let $g_{\mathbb S}$ denote the round metric on $\mathbb S^{n-1}$ induced by $\langle\cdot,\cdot\rangle$, and let $d_{\mathbb S}$
be the associated geodesic distance. Diagonal sign flips $s=\mathrm{diag}(\pm 1)\in O(n)$ act by isometries on
$(\mathbb S^{n-1},g_{\mathbb S})$, hence any sign section $q\mapsto s\sqrt q$ induces the same pullback metric on $\mathcal M$.

We now postulate that perceptual distance between lotteries is the angular distance between their square--root lifts on the sphere, up to an overall scale.

\begin{assumption}[Distinguishability (A1)]\label{assump:distinguishability}
	Fix a constant $c>0$ and define the perceptual distance between lotteries $q,q'\in\Delta^{n-1}$ by
	\[
	d_{\mathrm{perc}}(q,q')
	:=c\,d_{\mathbb S}\!\bigl(\iota(q),\iota(q')\bigr)
	=c\,\arccos\!\Bigl(\bigl\langle \iota(q),\iota(q')\bigr\rangle\Bigr)
	=c\,\arccos\!\Bigl(\sum_{i=1}^n \sqrt{q_iq'_i}\Bigr).
	\]
\end{assumption}
Since $\iota$ is injective, $d_{\mathrm{perc}}$ is the restriction of the spherical geodesic distance to the embedded set
$\iota(\Delta^{n-1})\subset\mathbb S^{n-1}$, hence a metric on $\Delta^{n-1}$. Moreover, because
$\iota(\Delta^{n-1})\subset\overline{\mathbb S}^{\,n-1}_+$, one has $0\le d_{\mathrm{perc}}(q,q')\le c\pi/2$.

The definition above provides a global perceptual metric on $\Delta^{n-1}$. On the interior $\mathcal M$ we refine it
infinitesimally to obtain a local distinguishability tensor by differentiating the lift $\iota$. Since the square--root
map becomes singular as some coordinates approach zero, this refinement is asserted on $\mathcal M$ and must be interpreted
piecewise if a trajectory contacts the boundary.

\begin{remark}[Boundary contacts: piecewise interior reading]\label{rem:piecewise-interior-dist}
	The map $q\mapsto\sqrt q$ is smooth on $\mathcal M$ but becomes singular as some $q_i\to 0$. Accordingly, all differential
	identities derived from $d\iota$ are understood on $\mathcal M$ and, if a trajectory reaches the boundary, are to be read
	piecewise on each open time interval during which $q(t)$ remains in the relative interior of a fixed face (restrict to the
	active coordinates). Switching times are treated by one--sided limits; no additional boundary rule is imposed.
\end{remark}

For $q\in\mathcal M$ and $v,w\in T_q\mathcal M$, writing $\rho=\iota(q)=\sqrt q$, differentiation of $q_i=\rho_i^2$ gives
$dq_i=2\rho_i\,d\rho_i$, hence
\[
d\rho_i(v)=\frac{v_i}{2\rho_i},
\qquad
\bigl(\iota^*g_{\mathbb S}\bigr)(q)[v,w]
=\frac14\sum_{i=1}^n\frac{v_iw_i}{q_i}.
\]
It follows that
\[
\iota^*g_{\mathbb S}=\tfrac14\,g_F,
\qquad
g_F(q)[v,w]:=\sum_{i=1}^n\frac{v_iw_i}{q_i}\quad\text{for all }v,w\in T_q\mathcal M,
\]
where $g_F$ is the Fisher--Rao metric on $\mathcal M$. 
Thus the induced local distinguishability metric on lotteries is (up to the constant factor $1/4$) the Fisher--Rao metric; we take $g_F$ as the canonical simplex--level tensor associated with A1.

The constant $c$ in Assumption~\ref{assump:distinguishability}
sets the overall units of distance; infinitesimally, the induced quadratic form is proportional to $\iota^*g_{\mathbb S}$,
namely $c^2\,\iota^*g_{\mathbb S}=\tfrac{c^2}{4}\,g_F$ on $T_q\mathcal M$. In particular, $g_F$ weights perturbations in
rare outcomes more heavily via the factors $1/q_i$, and becomes singular at the boundary of $\Delta^{n-1}$, motivating our
use of the interior $\mathcal M$ for smooth Riemannian geometry.

Finally, for any smooth trajectory $t\mapsto q(t)\in\mathcal M$, the lifted curve $\rho(t):=\iota(q(t))\in\mathbb S^{n-1}_+$
is an auxiliary geometric lift and satisfies
\[
\|\dot q(t)\|_{g_F}^2=4\,\|\dot\rho(t)\|_{g_{\mathbb S}}^2.
\]

\begin{remark}
	In psychophysics the same local \textit{discriminability} index is modelled by the quadratic form
	$v\mapsto \sum_{i=1}^n v_i^2/q_i$ on $T_q\mathcal M$ (equivalently $v\mapsto \|d\iota_q(v)\|_{g_{\mathbb S}}^2$ under
	$\iota(q)=\sqrt q$); see \citep{Stevens1957,Sakitt1973,Siomopoulos1975}. In this paper we use the information-theoretic
	terminology, where the Fisher--Rao metric $g_F$ measures local statistical \emph{distinguishability}.
\end{remark}

\subsection{Preference covector field}
\label{subsec:PreferenceOperators-full}

As emphasised already by~\cite{Russell1991}, differential geometry provides a natural language for preferences:
marginal evaluation is encoded by a covector field (a $1$--form).
In our framework, this covector field is induced by a fixed \emph{preference operator}
$\widehat V\in\End(\RR^n)$ acting on latent amplitudes.

Let $g_{\mathbb S}$ be the round metric on the amplitude sphere
$\mathbb S^{n-1}\subset\RR^n$ induced by the Euclidean inner product $\langle\cdot,\cdot\rangle$.
For $\rho\in\mathbb S^{n-1}$ denote by
\[
Q_\rho:=I-\rho\rho^\top
\]
the orthogonal projector onto the tangent space
\[
T_\rho\mathbb S^{n-1}=\{\eta\in\RR^n:\langle\rho,\eta\rangle=0\}.
\]
A (spherical) preference covector is a $1$--form $\bar\alpha^{\mathbb S}\in\Omega^1(\mathbb S^{n-1})$ assigning to each
infinitesimal adjustment $\eta\in T_\rho\mathbb S^{n-1}$ an instantaneous evaluative score
$\bar\alpha^{\mathbb S}_\rho(\eta)\in\RR$.
We restrict attention to the ambient--operator class: fix $\widehat V\in\End(\RR^n)$ and define
\begin{equation}\label{eq:pref-1form}
	\bar\alpha^{\mathbb S}_\rho(\eta)
	=\langle \eta,\,Q_\rho\,\widehat V\rho\rangle,
	\qquad \eta\in T_\rho\mathbb S^{n-1}.
\end{equation}
Since $\eta\perp\rho$, we have $\langle \eta,Q_\rho \widehat V\rho\rangle=\langle \eta,\widehat V\rho\rangle$.
Thus $\bar\alpha^{\mathbb S}$ is the $g_{\mathbb S}$--musical dual of the tangent vector field
\begin{equation}\label{eq:pref-vectorfield}
	X_{\widehat V}(\rho):=Q_\rho\,\widehat V\rho\in T_\rho\mathbb S^{n-1},
	\qquad
	\bar\alpha^{\mathbb S}=g_{\mathbb S}(X_{\widehat V},\cdot).
\end{equation}
Within this class, $\bar\alpha^{\mathbb S}$ is invariant under the shift $\widehat V\mapsto \widehat V+\lambda I$,
since $Q_\rho(\lambda\rho)=0$. Hence only the class $[\widehat V]\in\End(\RR^n)/(\RR\cdot I)$ is observable at the
spherical level; when convenient we fix a representative by imposing $\tr(\widehat V)=0$.

Observable lotteries live on the interior simplex
\[
\mathcal M=\Delta^{n-1}_{>0}
=\Bigl\{q\in\RR^n_{>0}:\sum_{i=1}^n q_i=1\Bigr\}.
\]
We represent $q\in\mathcal M$ by its positive square root
\[
\iota:\mathcal M\to\mathbb S^{n-1}_{+},
\qquad
\iota(q)=\rho=\sqrt q,
\]
so that $\rho_i=\sqrt{q_i}$ and $\sum_i\rho_i^2=1$.
Restricting $\bar\alpha^{\mathbb S}$ to the open set $\mathbb S^{n-1}_{+}$ and pulling back along $\iota$ yields the
\emph{marginal preference covector} on lotteries:
\[
\mathcal T:\End(\RR^n)\to\Omega^1(\mathcal M),
\qquad
\mathcal T(\widehat V):=\beta:=\iota^*\!\bigl(\bar\alpha^{\mathbb S}\!\mid_{\mathbb S^{n-1}_{+}}\bigr).
\]
Here $\iota$ is used only as a smooth lift (equivalently, a diffeomorphic identification of $\mathcal M$ with $\mathbb S^{n-1}_+$); it does not constrain the sign pattern of any latent amplitude trajectory.

Decompose
\[
\widehat V=\widehat S+\widehat F,
\qquad
\widehat S^\top=\widehat S,
\qquad
\widehat F^\top=-\widehat F.
\]
The central structural fact proved below is that $\widehat S$ induces an exact (utility) component on $\mathcal M$,
while $\widehat F$ induces a purely circulatory component which is co--exact in Fisher--Rao geometry.

\begin{assumption}[Preference structure]\label{assump:preference}
	The DM's instantaneous evaluation of infinitesimal changes on the amplitude sphere is encoded by a spherical $1$--form of
	ambient--operator type:
	\begin{enumerate}
		\item[\textnormal{(A2)}] There exists a fixed $\widehat V\in\End(\RR^n)$ such that for every
		$\rho\in\mathbb S^{n-1}$ and $\eta\in T_\rho\mathbb S^{n-1}$,
		\[
		\bar\alpha^{\mathbb S}_\rho(\eta)=\langle \eta,\,Q_\rho\widehat V\rho\rangle,
		\qquad Q_\rho:=I-\rho\rho^\top .
		\]
		\item[\textnormal{(A2$'$)}] Decompose $\widehat V=\widehat S+\widehat F$ with
		$\widehat S^\top=\widehat S$ and $\widehat F^\top=-\widehat F$. Passing to $\rho\in\mathbb S^{n-1}$,
		only the induced tangential component is behaviourally relevant (i.e.\ relevant for $\bar\alpha^{\mathbb S}$), i.e.
		\[
		\bar\alpha^{\mathbb S}_\rho(\eta)=\langle \widehat V\rho,\eta\rangle
		\qquad(\eta\in T_\rho\mathbb S^{n-1}),
		\]
		equivalently
		\[
		\bar\alpha^{\mathbb S}
		=\bigl((Q_\rho\widehat V\rho)^\top d\rho\bigr)\big|_{T\mathbb S^{n-1}}
		\quad(\text{or }(\widehat V\rho)^\top d\rho\big|_{T\mathbb S^{n-1}}).
		\]
	\end{enumerate}
	The form $\bar\alpha^{\mathbb S}$ is invariant under the additive shift $\widehat V\mapsto \widehat V+\lambda I$
	(since $Q_\rho\rho=0$), hence only the class $[\widehat V]\in\End(\RR^n)/(\RR\cdot I)$ is observable.
	When convenient we adopt the trace--free convention $\tr(\widehat V)=0$ (equivalently $\tr(\widehat S)=0$).
\end{assumption}

The equivalence of \textnormal{(A2)} and \textnormal{(A2$'$)} is immediate from tangency: if $\eta\in T_\rho\mathbb S^{n-1}$
then $Q_\rho\eta=\eta$, hence
\[
\bar\alpha^{\mathbb S}_\rho(\eta)
=\langle \eta,\,Q_\rho\widehat V\rho\rangle
=\langle Q_\rho\eta,\,\widehat V\rho\rangle
=\langle \eta,\,\widehat V\rho\rangle,
\]
so $\bar\alpha^{\mathbb S}=(\widehat V\rho)^\top d\rho=(Q_\rho\widehat V\rho)^\top d\rho$.

\begin{remark}[Boundary contacts: stratified simplex reading]\label{rem:stratified-reading}
The observable lottery is read out from an unnormalised amplitude $\tilde\rho\in\RR^n\setminus\{0\}$ by
$q_i=\tilde\rho_i^2/\|\tilde\rho\|^2$ (equivalently, $q=\pi(\rho)=\rho^{\odot 2}$ for $\rho=\tilde\rho/\|\tilde\rho\|\in\mathbb S^{n-1}$).
The amplitude--level dynamics introduced later is formulated for unnormalised amplitudes $\tilde\rho\in\RR^n$.
Thus crossings of coordinate hyperplanes $\tilde\rho_i=0$ are not singular for the amplitude dynamics; they only imply that the
readout $q_i=\tilde\rho_i^2/\|\tilde\rho\|^2$ touches the boundary.
However, the square--root chart
$q\mapsto\sqrt q$ and the Fisher--Rao tensor are smooth only on the relative interior of a face of $\Delta^{n-1}$.
Accordingly, any simplex--level differential identity derived via $d\iota$ or $g_F$ is to be read \emph{piecewise}:
for a trajectory $q(t)$, fix any open time interval $I$ on which the support
$J=\{i:q_i(t)>0\}$ is constant.  On $I$ we identify $q_J(t)$ with a point of the lower--dimensional interior simplex
$\Delta^{|J|-1}_{>0}$ (active coordinates) and apply all formulas there.  Switching times (when some $q_i(t)=0$) are treated
by one--sided limits; no additional boundary rule is imposed.
For statements relying on $H^1(\mathbb S^{m-1})=0$ (e.g.\ the co--exact part below), the active set must satisfy $|J|\ge3$;
when $|J|=2$ a harmonic remainder may appear.
\end{remark}

\begin{proposition}[Utility and co--utility fields on the simplex]\label{prop:beta-decomp}
Let $\mathcal M=\Delta^{n-1}_{>0}$ and assume $n\ge3$.  The statements below are understood on $\mathcal M$ and,
when a trajectory touches the boundary, piecewise on each interval of constant support as in \autoref{rem:stratified-reading}.
	Let $\beta=\mathcal T(\widehat V)\in\Omega^1(\mathcal M)$ be the marginal preference covector induced by
	$\widehat V\in\End(\RR^n)$, and decompose $\widehat V=\widehat S+\widehat F$ with
	$\widehat S^\top=\widehat S$ and $\widehat F^\top=-\widehat F$. Define
	\[
	\beta^{\widehat S}:=\mathcal T(\widehat S),\qquad
	\beta^{\widehat F}:=\mathcal T(\widehat F),
	\qquad\text{so that}\qquad
	\beta=\beta^{\widehat S}+\beta^{\widehat F}.
	\]
	Then:
	\begin{enumerate}
		\item \emph{Utility (exact part).} There exists a smooth potential
		\begin{equation} \label{eq:U-def}
		U(q):=\tfrac12(\sqrt q)^\top \widehat S\,\sqrt q
		\end{equation}
		such that $\beta^{\widehat S}=dU$.
		
		\item \emph{Co--utility (co--exact part).} With respect to the Fisher--Rao metric $g_F$ on $\mathcal M$ (and on each
		fixed--support face with $|J|\ge3$ under \autoref{rem:stratified-reading}), there exists $\gamma\in\Omega^2(\mathcal M)$ such that
		$\beta^{\widehat F}=\delta_{g_F}\gamma$.
		Equivalently,
		\[
		\beta=dU+\delta_{g_F}\gamma .
		\]
	\end{enumerate}
	Define $\mathcal U:=dU$ and $\mathcal R:=\beta-\mathcal U$.
\end{proposition}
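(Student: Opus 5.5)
\textbf{Part 1 (utility).} I would work on the amplitude sphere and then pull back along $\iota$, using that every object in the statement is natural under the diffeomorphism $\iota:\mathcal M\to\mathbb S^{n-1}_+$. Consider the ambient function $u(x)=\tfrac12 x^\top\widehat S x$ on $\RR^n$. Since $\widehat S^\top=\widehat S$, we have $du_x(\eta)=\langle \widehat S x,\eta\rangle$. Restricting to $\mathbb S^{n-1}$ and comparing with (A2$'$) gives $\bar\alpha^{\mathbb S}[\widehat S]=d(u|_{\mathbb S^{n-1}})$. Pulling back along $\iota$, and using $\iota^*d=d\iota^*$, yields $\beta^{\widehat S}=d(u\circ\iota)=dU$ with $U(q)=\tfrac12(\sqrt q)^\top\widehat S\sqrt q$. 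Linearity of $\mathcal T$ in $\widehat V$ gives $\beta=\beta^{\widehat S}+\beta^{\widehat F}$.

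\textbf{Part 2 (co-utility): Killing field on the sphere.} Rather than appeal to a Hodge decomposition on the non-compact $\mathcal M$, I would construct $\gamma$ explicitly on the compact sphere $\mathbb S^{n-1}$, where $\xi:=\bar\alpha^{\mathbb S}[\widehat F]=(\widehat F\rho)^\top d\rho$ is defined globally. Because $\widehat F$ is skew, $e^{t\widehat F}\in SO(n)$, so the vector field $\rho\mapsto \widehat F\rho$ (already tangent, since $\langle\rho,\widehat F\rho\rangle=0$) is a Killing field of $g_{\mathbb S}$, and $\xi$ is its metric dual. For a Killing $1$-form two facts hold:
\begin{itemize}
\item $\delta\xi=0$, since the divergence of a Killing field vanishes;
\item the Bochner identity $\nabla^*\nabla\xi=\mathrm{Ric}(\xi)$ holds.
\end{itemize}
Combining the second with the Weitzenb\"ock formula $\Delta_{\mathrm{Hodge}}=\nabla^*\nabla+\mathrm{Ric}$ gives $(d\delta+\delta d)\xi=2\,\mathrm{Ric}(\xi)$. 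On the round unit sphere $\mathrm{Ric}=(n-2)g_{\mathbb S}$. Hence
\[
\delta_{g_{\mathbb S}} d\xi=2(n-2)\,\xi .
\]
For $n\ge3$ we can therefore set $\gamma_{\mathbb S}:=\tfrac{1}{2(n-2)}\,d\xi$, and then $\xi=\delta_{g_{\mathbb S}}\gamma_{\mathbb S}$. As a cross-check, $\delta\xi=0$ means $\ast\xi$ is a closed $(n-2)$-form, which is exact because $H^{n-2}(\mathbb S^{n-1})=0$ for $n\ge3$. This gives the same conclusion non-constructively. It also explains why the case $|J|=2$, i.e.\ the circle, is excluded: there a harmonic remainder appears.

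\textbf{Transfer to the simplex.} Restrict to the open orthant $\mathbb S^{n-1}_+$; $\delta$ is local, so the identity persists. Then pull back along $\iota$. Since $\iota$ is an isometry from $(\mathcal M,\tfrac14 g_F)$ onto $\mathbb S^{n-1}_+$, we have $\iota^*\delta_{g_{\mathbb S}}=\delta_{g_F/4}\iota^*$. Under a constant rescaling $g\mapsto\lambda g$, the codifferential on $2$-forms scales as $\delta_{\lambda g}=\lambda^{-1}\delta_g$. Hence
\[
\beta^{\widehat F}=\delta_{g_F}\gamma,
\qquad
\gamma=\frac{4}{2(n-2)}\,\iota^*d\xi=\frac{2}{n-2}\,d\bigl(\iota^*\xi\bigr).
\]
For boundary contacts under \autoref{rem:stratified-reading}, the same argument applies verbatim with $n$ replaced by $m=|J|\ge3$ on the active-coordinate sphere $\mathbb S^{m-1}$. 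There the relevant generator is the principal submatrix $\widehat F_{JJ}$, which is still skew.

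\textbf{Main obstacle.} The step needing the most care is the curvature identity $\delta d\xi=2(n-2)\xi$, specifically pinning down sign conventions. These are: $\Delta=d\delta+\delta d$ being positive, $\nabla^*\nabla$ being the rough Laplacian, and the Killing identity $\nabla^*\nabla\xi=\mathrm{Ric}(\xi)$. I would verify the constant directly in the case $n=3$, with $\widehat F$ generating rotation about an axis, where $d\xi$ is a multiple of the area form.
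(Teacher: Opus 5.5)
Your proposal is correct. Part 1 and the transfer step (restriction to $\mathbb S^{n-1}_+$, isometric pullback, and $\delta_{\lambda g}=\lambda^{-1}\delta_g$) coincide with the paper's proof; the genuine difference is in Part 2.

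The paper uses only that $\bar\alpha^{\widehat F}$ is co-closed (a Killing field is divergence-free). It then invokes the Hodge decomposition on the compact sphere, together with $H^1(\mathbb S^{n-1})=0$, to obtain some $\psi\in\Omega^2(\mathbb S^{n-1})$ with $\bar\alpha^{\widehat F}=\delta\psi$. That is essentially your ``cross-check'' in Poincar\'e-dual form, since $H^{n-2}\cong H^1$.

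You instead exploit the full Killing property through Bochner--Weitzenb\"ock and the Einstein condition $\mathrm{Ric}=(n-2)g_{\mathbb S}$, obtaining $\delta d\xi=2(n-2)\xi$. The constant is right: on $\mathbb S^2$ with $\xi=\sin^2\theta\,d\phi$ one gets $d\xi=2\cos\theta\,\mathrm{vol}$ and $\delta d\xi=2\xi$.

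What each route buys:
\begin{itemize}
\item The paper's argument is shorter and applies to any co-closed $1$-form on the sphere, but it is non-constructive.
\item Yours yields the explicit, canonical choice $\gamma=\frac{2}{n-2}\,d\beta^{\widehat F}=\frac{2}{n-2}\,d\beta$, using $d\beta^{\widehat S}=ddU=0$. So the co-exact potential is a constant multiple of the preference curvature $\mathcal K=d\beta$ that the paper later interprets as holonomy/regret.
\item It also shows $\beta^{\widehat F}$ satisfies $\Delta_{g_F}\beta^{\widehat F}=\frac{n-2}{2}\,\beta^{\widehat F}$.
\item Your argument is purely local, since the Killing and Bochner identities are pointwise. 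It therefore needs neither compactness nor global elliptic theory.
\item It makes the exclusion of $n=2$ quantitative, because $\mathrm{Ric}\equiv 0$ on $\mathbb S^1$.
\end{itemize}

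Finally, your treatment of the fixed-support strata via the skew principal block $\widehat F_{JJ}$ makes explicit a step the paper leaves implicit.
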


\begin{proof}
	On $\mathcal M$ the square--root map $\iota(q)=\rho=\sqrt q$ is smooth and identifies $\mathcal M$
	with the open set $\iota(\mathcal M)=\mathbb S^{n-1}_{+}\subset\mathbb S^{n-1}$. On $\mathbb S^{n-1}$ write
	\[
	\bar\alpha^{\mathbb S}=(\widehat V\rho)^\top d\rho
	=(\widehat S\rho)^\top d\rho+(\widehat F\rho)^\top d\rho
	=:\bar\alpha^{\widehat S}+\bar\alpha^{\widehat F}.
	\]
	Pull back along $\iota$ to $\mathcal M$ to obtain $\beta^{\widehat S}=\iota^*\bar\alpha^{\widehat S}$ and
	$\beta^{\widehat F}=\iota^*\bar\alpha^{\widehat F}$.
	
	\smallskip\noindent\emph{Utility part.}
	Let $\Phi(\rho):=\tfrac12\rho^\top\widehat S\rho$ on $\mathbb S^{n-1}$. Since $\widehat S$ is symmetric,
	$d\Phi=(\widehat S\rho)^\top d\rho=\bar\alpha^{\widehat S}$. Hence
	$\beta^{\widehat S}=d(\Phi\circ\iota)=dU$ with $U(q)=\tfrac12(\sqrt q)^\top\widehat S\,\sqrt q$.
	
	\smallskip\noindent\emph{Co--utility part.}
	If $\widehat F^\top=-\widehat F$, then $X(\rho):=\widehat F\rho$ is Killing on the round sphere, hence divergence-free,
	so its dual 1-form $\bar\alpha^{\widehat F}=g_{\mathbb S}(X,\cdot)$ is co-closed:
	$\delta_{g_{\mathbb S}}\bar\alpha^{\widehat F}=0$. For $n\ge3$, $H^1(\mathbb S^{n-1})=0$, so by Hodge theory on
	$\mathbb S^{n-1}$ there exists $\psi\in\Omega^2(\mathbb S^{n-1})$ with
	$\bar\alpha^{\widehat F}=\delta_{g_{\mathbb S}}\psi$.
	Restrict this identity to the open subset $\iota(\mathcal M)=\mathbb S^{n-1}_{+}\subset\mathbb S^{n-1}$ and pull back along
$\iota$.  Since $\iota:(\mathcal M,\iota^*g_{\mathbb S})\to(\mathbb S^{n-1}_{+},g_{\mathbb S})$ is an isometry, the
codifferential commutes with pullback on restricted forms, i.e.
$\iota^*(\delta_{g_{\mathbb S}}\psi)=\delta_{\iota^*g_{\mathbb S}}(\iota^*\psi)$.
Using $g_F=4\,\iota^*g_{\mathbb S}$ and the scaling rule $\delta_{\lambda g}=\lambda^{-1}\delta_g$ for $\lambda>0$, we obtain
\[
\beta^{\widehat F}=\iota^*\bar\alpha^{\widehat F}
=\delta_{\iota^*g_{\mathbb S}}(\iota^*\psi)
=\delta_{g_F}\bigl(4\,\iota^*(\psi|_{\mathbb S^{n-1}_{+}})\bigr).
\]
Thus $\beta^{\widehat F}=\delta_{g_F}\gamma$ with $\gamma:=4\,\iota^*(\psi|_{\mathbb S^{n-1}_{+}})$.

\end{proof}

\begin{remark}[Comparison with other operator models]
At this static, epistemic level the potential $U$ in~\eqref{eq:U-def}
encodes an operator-based expected-utility functional, where evaluation takes the
form of a quadratic expectation, as in \textit{projective expected utility (PEU)}  of ~\cite{LaMura2009}
and in the quantum expected-utility framework of~\cite{AertsHavenSozzo2018}.
 Our approach departs from these models at the latent level, where we keep track of the full decomposition
	$\widehat V=\widehat S+\widehat F$ and, in the (split-)complex lifts developed
	later, identify the antisymmetric part $\widehat F$ with the split-complex or
	complex sector of the underlying (para-)/K\"ahler geometry in
	Sections~\S\ref{sec:3} and~\S\ref{sec:4}. In particular, PEU is further discussed in Example \ref{example:PEU-EUT} below, see \autoref{subsec:R2n}.
\end{remark}

Before concluding this discussion of preference structure, it is useful to separate the symmetric and skew--symmetric parts
of the latent preference operator, $\widehat V=\widehat S+\widehat F$, and to discuss the distinct behavioural roles they play
when the induced marginal covector is expressed on $\mathcal M$ via the square--root chart.

Starting with the symmetric component $\widehat S^\top=\widehat S$, the induced simplex preference field is exact within the
linear lift class: it admits the potential
\[
U(q)=\tfrac12\,(\sqrt q)^\top \widehat S\,\sqrt q
=\tfrac12\sum_{i,j}\widehat S_{ij}\sqrt{q_i q_j}.
\]
This expression is quadratic in amplitudes (square--root probabilities).  When $\widehat S$ is diagonal, it reduces to the
standard additive expected--utility form $U(q)=\sum_i u_i q_i$ with $u_i=\tfrac12\widehat S_{ii}$, recovering classical
utility.  More generally, off--diagonal entries encode pairwise evaluative couplings between alternatives, weighted by belief
amplitudes rather than probabilities.  In this sense the symmetric potential provides a canonical pairwise--interaction
extension of additive utility; within the square--root chart, these are exactly the operator--induced utility potentials
(unique up to an additive constant, reflecting the $\RR\cdot I$ gauge): by Proposition~\ref{prop:beta-decomp}(1),
$\mathcal T(\widehat S)=dU$.

By contrast, the skew--symmetric component $\widehat F^\top=-\widehat F$ does not contribute to the exact (utility) channel
and instead generates circulation.  On the amplitude sphere it induces the divergence--free Killing field
$X_{\widehat F}(\rho)=\widehat F\rho$, whose dual $1$--form
$\bar\alpha^{\widehat F}(\rho)=(\widehat F\rho)^\top d\rho$ is co--closed.  Moreover, for any pair of amplitude states
$\rho,\rho'\in\mathbb S^{n-1}$ it defines the skew bilinear form
\[
R(\rho,\rho'):=\rho^\top \widehat F\,\rho',\qquad R(\rho,\rho')=-R(\rho',\rho).
\]
When restricted to the principal section $\rho=\sqrt q$ this yields the corresponding skew bilinear comparison
$R(\sqrt q,\sqrt{q'})$ on lotteries, matching the structure of Fishburn's skew--symmetric bilinear (SSB) preference form
in these coordinates \cite{Fishburn1984}.  Accordingly, $\widehat F$ may be interpreted as a continuous regret--type field,
encoding non--potential components of preference intensity that generate circulation rather than gradient flow.  This aligns
with Russell's differential--geometric view of regret as the nonconservative part of a preference field \citep{Russell1991},
where non--closed components can generate holonomy around loops.  In summary, $\widehat S$ governs the conservative evaluative
channel through a utility potential, whereas $\widehat F$ governs an intrinsic pairwise comparison (co--utility) channel that
can generate path dependence.

\subsection{Least--action rationality}
\label{subsec:least-action-rationality}

Given Assumption~A2 specifying a spherical preference covector field $\bar\alpha^{\mathbb S}\in\Omega^1(\mathbb S^{n-1})$, we
consider infinitesimal adjustments of an agent's state on $\mathbb S^{n-1}$.  When relating spherical objects to lotteries
$q\in\mathcal M$, we evaluate them on the principal section $\iota(q)=\sqrt q\in\mathbb S^{n-1}_+$, i.e.\ we use the restriction
$\alpha^{\mathbb S}:=\bar\alpha^{\mathbb S}\!\mid_{\mathbb S^{n-1}_+}$; this is a representative choice and does not constrain
the latent amplitude trajectory.

At $\rho\in\mathbb S^{n-1}$ admissible instantaneous displacements are tangent vectors $\dot\rho\in T_\rho\mathbb S^{n-1}$.
To compare such displacements with a (generally non-tangent) preferred direction in the ambient space, fix any smooth
\emph{ambient lift} $\widetilde X:\mathbb S^{n-1}\to\RR^n$ of the covector field, meaning
\[
\bar\alpha^{\mathbb S}_\rho(\delta\rho)=\langle \delta\rho,\,\widetilde X(\rho)\rangle
\qquad\text{for all }\rho\in\mathbb S^{n-1},\ \delta\rho\in T_\rho\mathbb S^{n-1}.
\]
(Such $\widetilde X$ always exists, and is unique up to addition of a radial term $\lambda(\rho)\rho$.)  Let
$Q_\rho:=I-\rho\rho^\top$ denote the orthogonal projector onto $T_\rho\mathbb S^{n-1}$.  Least--action (least--effort)
rationality postulates that the realised displacement is the closest admissible direction to $\widetilde X(\rho)$:
\[
\dot\rho
=\arg\min_{\xi\in T_\rho\mathbb S^{n-1}}\|\xi-\widetilde X(\rho)\|^2
=Q_\rho\,\widetilde X(\rho).
\]
Equivalently, since $Q_\rho\widetilde X(\rho)$ is the unique tangent vector satisfying
$\langle\delta\rho,Q_\rho\widetilde X(\rho)\rangle=\bar\alpha^{\mathbb S}_\rho(\delta\rho)$ for all
$\delta\rho\in T_\rho\mathbb S^{n-1}$, one has
\[
\dot\rho = g_{\mathbb S}^{-1}\bar\alpha^{\mathbb S}_\rho \in T_\rho\mathbb S^{n-1}.
\]

When $\bar\alpha^{\mathbb S}$ is induced by a linear evaluator in the ambient space,
\[
\bar\alpha^{\mathbb S}_\rho(\delta\rho)=\langle \delta\rho,\,Q_\rho\,\widehat V\rho\rangle
=\langle \delta\rho,\,\widehat V\rho\rangle,
\qquad \delta\rho\in T_\rho\mathbb S^{n-1},
\]
we may take $\widetilde X(\rho)=\widehat V\rho$, and the least--action direction becomes
\[
\dot\rho = Q_\rho\,\widehat V\rho.
\]
This defines the smooth vector field
\begin{equation}\label{eq:LAR-vector-field}
	X_{\widehat V}(\rho):=Q_\rho\,\widehat V\rho\in T_\rho\mathbb S^{n-1},
\end{equation}
which we refer to as the \emph{Least--Action Rationality (LAR) vector field}.

\begin{assumption}[Least--action rationality (LAR)]\label{assump:least-action}
	At each state $\rho\in\mathbb S^{n-1}$, fix an ambient representative $\widetilde X(\rho)\in\RR^n$ of the preference covector
	$\bar\alpha^{\mathbb S}_\rho$ as above.  The infinitesimal adjustment $\dot\rho\in T_\rho\mathbb S^{n-1}$ is the pointwise
	least--squares fit of $\widetilde X(\rho)$ by an admissible tangent vector:
	\[
	\dot\rho
	=\arg\min_{\xi\in T_\rho\mathbb S^{n-1}}\|\xi-\widetilde X(\rho)\|^2
	=Q_\rho\,\widetilde X(\rho).
	\]
	Equivalently, $\dot\rho=g_{\mathbb S}^{-1}\bar\alpha^{\mathbb S}_\rho$.
\end{assumption}

Restricting to the principal section $\iota(\mathcal M)\subset\mathbb S^{n-1}_+$, write
$\alpha^{\mathbb S}:=\bar\alpha^{\mathbb S}\!\mid_{\mathbb S^{n-1}_+}$ and recall that the induced simplex preference form is
$\beta=\iota^*\alpha^{\mathbb S}\in\Omega^1(\mathcal M)$.  On $\mathcal M$ define the Fisher--Rao natural--gradient field
$X_{\mathcal M}$ by
\[
g_F\bigl(X_{\mathcal M},\cdot\bigr)=\beta,
\qquad\text{where}\qquad
g_F=4\,\iota^*g_{\mathbb S}.
\]
Then for each $q\in\mathcal M$ one has the pushforward relation
\[
d\iota_q\bigl(X_{\mathcal M}(q)\bigr)
=\tfrac14\,g_{\mathbb S}^{-1}\alpha^{\mathbb S}_{\iota(q)}\ \in\ T_{\iota(q)}\mathbb S^{n-1},
\]
i.e.\ the simplex Fisher--Rao gradient flow is the $\iota^{-1}$--image of the spherical least--action direction field,
up to the constant time rescaling by a factor $4$.

\section{Cotangent lift and para-Kähler geometry}
\label{sec:3}

The previous section formulated Least--Action Rationality (LAR) as a variational principle on the information--geometric manifold of amplitudes. In the present section we recast it as an equivalent first--order system on the doubled ambient space \(\mathbb{R}^{2n}\cong\mathbb{R}^n\times\mathbb{R}^n\) of amplitudes and their residuals (momenta), remaining entirely within the classical information--geometric framework. The same equations can be represented as  a hyperbolic Schr\"odinger--type equation in the split--complex algebra \(\mathbb D\) (\autoref{subsec:split-complex}).

\subsection{In $\RR^{2n}$}
\label{subsec:R2n}

To pass from the constrained, pointwise projection dynamics on the unit sphere to an unconstrained system on a linear space, we
encode the LAR least--squares condition by a quadratic \emph{residual} Lagrangian.
Given a preferred ambient direction field $\widehat V\tilde\rho$, define the instantaneous residual
$u:=\dot{\tilde\rho}-\widehat V\tilde\rho$ and penalise its Euclidean norm.
The resulting first--order cotangent lift on $T^*\RR^n\simeq\RR^{2n}$ will be shown below to reproduce the spherical LAR flow for
the normalised amplitudes $\rho=\tilde\rho/\|\tilde\rho\|$ on the distinguished zero--residual sector.

To do so, we define the least--action Lagrangian
\begin{equation}
	\label{eq:least-action-lagrangian-lifted}
L(\tilde\rho,\dot{\tilde\rho})
=\tfrac12\|\dot{\tilde\rho}-\widehat V\tilde\rho\|^2,
\qquad \text{where } \widehat V=\widehat S+\widehat F,\ 
\widehat S^\top=\widehat S,\ 
\widehat F^\top=-\widehat F,
\end{equation}
defined on the ambient space of \emph{unnormalised amplitudes} $\tilde\rho\in\widetilde{\mathcal M}=\RR^n$
equipped with the Euclidean metric.  Throughout Section~\ref{sec:3}, unless otherwise stated, we adopt the trace--free
convention $\tr(\widehat V)=0$ (as in Section~\ref{sec:2}).

The simplex $\mathcal M=\Delta^{n-1}_{>0}$ enters only through the sign--blind readout map
$q(\tilde\rho)$ defined for $\tilde\rho\in\RR^n\setminus\{0\}$ by
\[
q_i=\frac{\tilde\rho_i^2}{\|\tilde\rho\|^2},
\]
and, when expressing simplex--level geometric objects, through the square--root section
$\iota:\mathcal M\to\mathbb S^{n-1}_+$, $\iota(q)=\sqrt q$.
No sign restriction is imposed on the latent amplitude $\tilde\rho$: trajectories may have any sign pattern and may cross
coordinate hyperplanes without affecting the readout. Accordingly, Fisher--Rao geometry (the metric $g_F$ and its derived operators) is invoked only on intervals where the readout
satisfies $q(t)\in\mathcal M$ (and otherwise is interpreted in the stratified sense of Remark \ref{rem:stratified-reading}),
whereas the amplitude/cotangent dynamics are global on $\widetilde{\mathcal M}=\RR^n$ and
$T^*\widetilde{\mathcal M}\simeq\RR^{2n}$.

The Legendre transform of \ref{eq:least-action-lagrangian-lifted} 
\[
\mathbb FL:\;(\tilde\rho,\dot{\tilde\rho})\longmapsto(\tilde\rho,y),
\qquad
y=\partial_{\dot{\tilde\rho}}L=\dot{\tilde\rho}-\widehat V\tilde\rho,
\]
is hyperregular and therefore a diffeomorphism.
Moreover, the Poincaré--Cartan form
$\theta_L=y^\top d\tilde\rho$ satisfies
$\omega_L=-d\theta_L$, and the pullback identity
\[
(\mathbb FL)^*\Omega=\omega_L,
\qquad
\Omega=d\tilde\rho^i\wedge dy_i,
\]
where $\Omega$ is the canonical symplectic $2$--form on $T^*\widetilde{\mathcal M}$,
shows that $\mathbb FL:(T\widetilde{\mathcal M},\omega_L)\to
(T^*\widetilde{\mathcal M},\Omega)$ is a \emph{symplectomorphism}.
Thus the residual variable
\[
u=\dot{\tilde\rho}-\widehat V\tilde\rho
\]
is the canonical momentum $y$ expressed in velocity coordinates,
and $(\tilde\rho,y)$ are Darboux coordinates for the ambient phase space
$T^*\widetilde{\mathcal M}\simeq\RR^{2n}$.

The corresponding Hamiltonian
\begin{equation} \label{eq:hamiltonian-fn}
	H(\tilde\rho,y)
	=\tfrac12\|y\|^2+\langle y,\widehat V\tilde\rho\rangle
\end{equation}
generates the linear Hamiltonian system
\begin{equation}\label{eq:ham-system}
	\dot{\tilde\rho}=\widehat V\tilde\rho+y,
	\qquad
	\dot y=-\,\widehat V^{\!\top}y,
\end{equation}
or in vector notation
\begin{equation} \label{eq:ham-system-vectornotation}
	\dot Z=\mathsf A\, Z,
	\qquad
	Z=\begin{pmatrix}\tilde\rho\\[2pt]y\end{pmatrix},
	\qquad
	\mathsf A=
	\begin{pmatrix}
		\widehat V & I\\[2pt]
		0 & -\widehat V^{\!\top}
	\end{pmatrix},
	\qquad
	\mathsf A^\top J+J\mathsf A=0.
\end{equation}
Hence $\Phi_t=\exp(t\mathsf A)\in\mathrm{Sp}(2n,\RR)$ preserves the canonical symplectic form.

\medskip

The geometric structure underlying this Hamiltonian system arises from the
information--geometric origin of the model.
In information geometry, exponential and mixture families are \emph{dually flat} and hence carry a Hessian structure
\citep{Shima2007,Nielsen2020Entropy}.
In particular, the probability simplex $\mathcal M=\Delta^{n-1}_{>0}$ may be viewed as a mixture family in the coordinates $q$,
so it comes equipped with the flat mixture connection $\nabla^{(m)}$ and the Fisher--Rao metric $g_F$, which is locally Hessian.
Concretely, with the (negative) entropy potential
$
\Phi(q):=\sum_{i=1}^n q_i\log q_i,
$
one has $g_F=(\nabla^{(m)} d\Phi)\!\mid_{T\mathcal M}$.

The square--root lift
\[
\iota:\mathcal M\longrightarrow\mathbb S^{n-1}_+\subset \mathbb S^{n-1}\subset\widetilde{\mathcal M}\simeq\RR^n,
\qquad \iota(q)=\sqrt q,
\]
is a smooth embedding that transports the metric geometry to the ambient amplitude space: it satisfies
$\iota^*g_{\mathbb S}=\tfrac14\,g_F$, where $g_{\mathbb S}=g_{\mathrm{Euc}}|_{\mathbb S^{n-1}}$ is the round metric.
We view $\mathbb S^{n-1}_+$ as a submanifold of the Euclidean space
$(\widetilde{\mathcal M},g_{\mathrm{Euc}})$ and endow $\widetilde{\mathcal M}$ with the standard flat (hence Levi--Civita) connection
$\nabla^{\mathrm{Euc}}$.
Thus $(\widetilde{\mathcal M},g_{\mathrm{Euc}},\nabla^{\mathrm{Euc}})$ provides a canonical ambient affine structure for the
cotangent/Hamiltonian formulation, naturally associated with the underlying statistical geometry of $(\mathcal M,g_F,\nabla^{(m)})$.

\medskip
A crucial geometric consequence of the Hessian structure is that the flat affine connection is part of the given data.
In the present setting this distinguished connection is $\nabla^{(m)}$ on $\mathcal M$.
In general, Sasaki--type lifts on tangent and cotangent bundles depend on a chosen connection; for a Hessian manifold
$(M,g,\nabla)$ the torsion--free flat connection $\nabla$ is fixed in advance.
On the tangent side, this underlies the
classical Sasaki/Dombrowski construction and its Hessian refinement: the
tangent bundle \(TM\) of a dually flat or Hessian manifold carries a canonical
K\"ahler structure induced by \((g,\nabla)\), see
\cite{Dombrowski1962} and Section 2.2 in \cite{Shima2007}, and its
information--geometric applications to quantum mechanics developed by
\cite{Molitor2013JGP}. 
Here, we will instead follow the corresponding approach on the cotangent
side, following \citep{Reginatto2013,Caticha2021}, where the chosen flat
connection provides a distinguished way of lifting curves and covectors:
parallel transport of covectors with respect to the dual connection induced by
$\nabla$ defines, at each point $\alpha\in T^*M$, a horizontal subspace
$H_\alpha\subset T_\alpha(T^*M)$ complementary to the vertical subspace
$V_\alpha:=\ker d\nu_\alpha$, where $\nu:T^*M\to M$ is the bundle projection.
 These subspaces vary smoothly and give a canonical splitting
\[
T(T^*M)=H\oplus V
\]
into \(\nabla\)--horizontal and vertical subbundles.  The next lemma summarise
their basic symplectic properties.

\begin{lemma}[Horizontal--vertical Lagrangian structure on \(T^*M\)]
	\label{lem:HV-bilagrangian}
	Let \(M\) be a smooth manifold equipped with a torsion--free affine
	connection \(\nabla\), and let \(\nu:T^*M\to M\) be the bundle projection.
	Denote by \(\theta\) the tautological one--form on \(T^*M\) and by
	\(\Omega:=-d\theta\) its canonical symplectic form.
	Let \(V:=\ker d\nu\subset T(T^*M)\) be the vertical distribution and let
	\(H\subset T(T^*M)\) be the horizontal distribution determined by \(\nabla\)
	via parallel transport of covectors.
	Then \(V\) and \(H\) are Lagrangian distributions for \((T^*M,\Omega)\),
	they are everywhere transversal, and
	\[
	T(T^*M) = H \oplus V.
	\]
\end{lemma}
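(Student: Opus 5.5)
I would prove the lemma in local canonical coordinates. Choose a chart $(x^1,\dots,x^n)$ on $M$ and the induced coordinates $(x^i,p_i)$ on $T^*M$, so that $\theta=p_i\,dx^i$ and $\Omega=-d\theta=dx^i\wedge dp_i$. Let $\Gamma^k_{ij}$ be the Christoffel symbols of $\nabla$. Both distributions are defined intrinsically, so a pointwise verification in an arbitrary chart suffices.

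\emph{Vertical distribution.} First I describe $V$ explicitly: since $\nu(x,p)=x$, we have $V_\alpha=\ker d\nu_\alpha=\mathrm{span}\{\partial_{p_1},\dots,\partial_{p_n}\}$. This is an $n$-dimensional subspace of the $2n$-dimensional symplectic space $T_\alpha(T^*M)$. The form $\Omega=dx^i\wedge dp_i$ vanishes on any pair of $\partial_p$ vectors, so $V$ is isotropic of half dimension, hence Lagrangian.

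\emph{Horizontal distribution.} Next I describe $H$. A curve $t\mapsto(x(t),p(t))$ is horizontal exactly when $p(t)$ is a $\nabla$-parallel covector along $x(t)$, that is, $\dot p_k=\Gamma^j_{ik}\dot x^i p_j$. Differentiating gives
\[
H_\alpha=\mathrm{span}\{h_i\},\qquad h_i:=\partial_{x^i}+\Gamma^j_{ik}(x)\,p_j\,\partial_{p_k}.
\]
Transversality is then immediate. We have $d\nu(h_i)=\partial_{x^i}$, so $d\nu|_{H_\alpha}$ is an isomorphism onto $T_xM$. Hence $H_\alpha\cap V_\alpha=0$, and $H_\alpha$ has dimension $n$. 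Together these give $T(T^*M)=H\oplus V$ pointwise. Smoothness follows from the smooth dependence of $h_i$ on $(x,p)$.

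\emph{The main step} is showing $H$ is Lagrangian, and this is where torsion-freeness enters. Compute
\[
\Omega(h_i,h_l)=dx^a\wedge dp_a(h_i,h_l)=\delta^a_i\,\Gamma^j_{la}p_j-\delta^a_l\,\Gamma^j_{ia}p_j=(\Gamma^j_{li}-\Gamma^j_{il})\,p_j.
\]
This is exactly $-p_j T^j_{il}$ up to sign. Since $\nabla$ is torsion-free, it vanishes for all $i,l$. Thus $H$ is isotropic of dimension $n$, hence Lagrangian.

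I would double-check two things. The sign conventions for the covector transport ($\nabla$ on $T^*M$ induced by duality) only flip the sign of $\Gamma$ terms and do not affect the symmetry argument. The coordinate result is chart-independent because $H$, $V$ and $\Omega$ are intrinsic. In the paper's flat Euclidean/mixture setting one may also note that affine coordinates give $\Gamma=0$, so $H=\mathrm{span}\{\partial_{x^i}\}$ trivially.
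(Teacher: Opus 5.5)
Your proposal is correct and follows essentially the same route as the paper: local canonical coordinates, $V=\mathrm{span}\{\partial_{p_i}\}$ isotropic of half dimension, and horizontal lifts $h_i=\partial_{x^i}+\Gamma^j_{ik}p_j\,\partial_{p_k}$ with $\Omega(h_i,h_l)=(\Gamma^j_{li}-\Gamma^j_{il})p_j$, which vanishes because $\nabla$ is torsion--free. Your transversality argument, that $d\nu(h_i)=\partial_{x^i}$ makes $d\nu|_{H}$ an isomorphism onto $T_xM$, just spells out what the paper asserts ``by construction''.
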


\begin{proof}
	In local coordinates \((x^i)\) on \(M\) with corresponding coordinates
	\((x^i,y_i)\) on \(T^*M\), the tautological one--form is
	\(\theta = y_i\,dx^i\), so \(\Omega = -d\theta = -\,dy_i\wedge dx^i = dx^i\wedge dy_i\).
	The vertical distribution \(V=\ker d\nu\) is spanned by
	\(\partial/\partial y_i\), and since \(\Omega=dx^i\wedge dy_i\) one has
	\(\Omega(\partial/\partial y_i,\partial/\partial y_j)=0\), hence \(\Omega|_V = 0\); since \(\dim V = \tfrac12\dim T(T^*M)\),
	\(V\) is Lagrangian.

	The connection \(\nabla\) with Christoffel symbols \(\Gamma^k_{ij}\)
	determines the horizontal lifts
	\[
	\frac{\delta}{\delta x^i}
	:= \frac{\partial}{\partial x^i}
	+ \Gamma^k_{ij}(x)\,y_k\,\frac{\partial}{\partial y_j},
	\]
	which span the horizontal distribution \(H\).  One computes
	\[
	\Omega\Big(\frac{\delta}{\delta x^i},\frac{\delta}{\delta x^j}\Big)
	= (\Gamma^l_{ji}-\Gamma^l_{ij})\,y_l.
	\]
	Since \(\nabla\) is torsion--free, \(\Gamma^l_{ij}=\Gamma^l_{ji}\) and hence
	\(\Omega|_H=0\).  Again \(\dim H = \tfrac12\dim T(T^*M)\), so \(H\) is
	Lagrangian.  By construction \(H\cap V = \{0\}\) and every tangent vector
	splits uniquely into horizontal and vertical components, giving
	\(T(T^*M) = H\oplus V\).
\end{proof}
Lemma~\ref{lem:HV-bilagrangian} provides the symplectic input: for any
torsion--free affine connection $\nabla$ the induced horizontal distribution
$H$ is Lagrangian, complementary to the vertical Lagrangian $V=\ker d\nu$, and
hence determines a canonical Lagrangian splitting
$T(T^*M)=H\oplus V$.
This splitting already defines an \emph{almost} para--Hermitian package by
setting $K|_H=+\id$, $K|_V=-\id$ and $G:=\Omega(\,\cdot\,,K\,\cdot\,)$; however,
to obtain a \emph{para--K\"ahler} structure in the strict sense one must also
have integrability of $H$ (equivalently, of $K$), which is ensured by flatness
of $\nabla$.  We therefore specialise to the Hessian case:
\begin{proposition}[Cotangent para--K\"ahler structure of a Hessian manifold]
	\label{thm:cotangent-parakahler}
	Let \((M,g,\nabla)\) be a Hessian manifold, so that \(\nabla\) is torsion--free
	and flat.  Then the cotangent bundle \(T^*M\) carries a canonical
	para--K\"ahler structure \((G,K,\Omega)\), where
	\begin{itemize}
		\item \(\Omega\) is the canonical symplectic form on \(T^*M\),
		\item \(K\) is the para--complex structure
		\(K:T(T^*M)\to T(T^*M)\) induced by the
		\(\nabla\)--horizontal/vertical splitting
		\(T(T^*M)=H\oplus V\), and
		\item \(G\) is the unique neutral metric satisfying
		\(\Omega(u,v)=G(Ku,v)\) for all \(u,v\in T(T^*M)\).
	\end{itemize}
	The \(\pm1\) eigensubbundles \(E_\pm=\ker(K\mp\mathrm{id})\) coincide with
	\(H\) and \(V\), are maximally isotropic for \(G\), and yield the canonical
	Lagrangian splitting
	\[
	T(T^*M)=E_+\oplus E_-.
	\]
\end{proposition}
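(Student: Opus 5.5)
We plan to build on Lemma~\ref{lem:HV-bilagrangian}, which already provides the Lagrangian splitting $T(T^*M)=H\oplus V$ for any torsion--free $\nabla$. The new inputs needed are integrability and the compatibility conditions, and both should follow from flatness of $\nabla$. Since $\nabla$ is flat and torsion--free, we first choose $\nabla$--affine coordinates $(x^i)$ around any point, in which $\Gamma^k_{ij}\equiv0$. These coordinates are unique up to affine changes $x\mapsto Ax+b$. In the induced cotangent coordinates $(x^i,y_i)$ the horizontal lifts reduce to $\delta/\delta x^i=\partial/\partial x^i$, so $H=\mathrm{span}\{\partial_{x^i}\}$ and $V=\mathrm{span}\{\partial_{y_i}\}$. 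We then define $K$ by $K|_H=+\id$ and $K|_V=-\id$, which in these coordinates reads
\[
K=\partial_{x^i}\otimes dx^i-\partial_{y_i}\otimes dy_i .
\]
This is a constant--coefficient tensor. An affine change of coordinates induces the linear change $y\mapsto A^{-\top}y$, which preserves both distributions, so $K$ is globally well defined. This also justifies the word "canonical": the splitting depends only on $\nabla$.

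\textbf{Integrability.} $H$ and $V$ are spanned by commuting coordinate fields, so both are involutive. It follows that the Nijenhuis tensor $N_K$ vanishes. Alternatively, this is immediate because $K$ has constant coefficients in the chart $(x,y)$. Without flatness, $[\delta_i,\delta_j]=R_{ij}{}^{l}{}_k\,y_l\,\partial_{y_k}$ up to sign; this computation shows exactly where flatness is used, namely to make the curvature term vanish.

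\textbf{Metric and compatibility.} We set $G(u,v):=\Omega(K^{-1}u,v)=\Omega(Ku,v)$, using $K^2=\id$, so that the required identity $\Omega(u,v)=G(Ku,v)$ holds by construction. With $\Omega=dx^i\wedge dy_i$ we compute
\begin{align*}
G(\partial_{x^i},\partial_{y_j})&=\Omega(\partial_{x^i},\partial_{y_j})=\delta_i^j,\\
G(\partial_{y_j},\partial_{x^i})&=-\Omega(\partial_{y_j},\partial_{x^i})=\delta_i^j,
\end{align*}
and all other pairings vanish. Hence $G=dx^i\otimes dy_i+dy_i\otimes dx^i$, which is symmetric, nondegenerate and of neutral signature $(n,n)$. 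Uniqueness of $G$ follows because $K$ is invertible. The eigenbundles $E_+=H$ and $E_-=V$ are $G$--isotropic; this is immediate from the formula and equivalent to their being $\Omega$--Lagrangian. They are maximally isotropic since each has dimension $n$. Anti--compatibility $G(Ku,Kv)=-G(u,v)$ can be checked on the basis directly. Finally, $\nabla^G$ is the flat coordinate connection and $\Omega$ and $K$ are constant, so $\nabla^G K=0$ and $d\Omega=0$. These are exactly the para--K\"ahler conditions.

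\textbf{Main obstacle.} The substantive point is to make sure the construction is intrinsic rather than chart--dependent. This means checking that $K$, and hence $G$, is invariant under the transition maps between affine charts, which are affine by flatness, so that the local constant--coefficient description patches globally. A second subtlety is whether the Hessian metric $g$ itself should enter $G$. In the statement $G$ is determined by $(\Omega,K)$ alone, so $g$ enters only through the choice of $\nabla$. We would remark on this point rather than try to force $g$ into the formula.
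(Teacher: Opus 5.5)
Your proof is correct, but at the key step it takes a different route from the paper. The paper also starts from Lemma~\ref{lem:HV-bilagrangian} and uses flatness to make $H$ integrable, so that $(T^*M,\Omega;H,V)$ is bi--Lagrangian. It then simply invokes the general result of Etayo et al.\ (Theorem~6 and Proposition~7(b) of the cited work). That result says every bi--Lagrangian manifold carries a canonical neutral metric and product structure, with the two foliations as $\pm1$ eigendistributions and $\Omega(u,v)=G(Ku,v)$, and with $G$ unique. Neutrality, uniqueness and the para--K\"ahler property are therefore imported rather than checked.

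You instead pass to $\nabla$--affine coordinates, where the whole package becomes the constant flat model $K=\mathrm{diag}(I,-I)$, $G=dx^i\otimes dy_i+dy_i\otimes dx^i$. You then verify everything by hand, including $N_K=0$ and $\nabla^G K=0$.

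What each route buys:
\begin{itemize}
\item Your route is self--contained and pinpoints where flatness enters, namely the curvature term in $[\delta_i,\delta_j]$. It also shows that every Hessian cotangent lift is locally isomorphic to the Euclidean model the paper only computes later in Corollary~\ref{cor:ambient-parasplitting-geom}.
\item The paper's route is shorter and places the result inside the general equivalence between bi--Lagrangian and para--K\"ahler structures, with no need for affine charts.
\end{itemize}

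One small correction: the chart--patching issue you call the main obstacle is not really one. $H$ (defined by $\nabla$) and $V=\ker d\nu$ are intrinsic distributions, so $K$ and $G=\Omega(K\cdot,\cdot)$ are global tensors from the outset. Tensorial identities such as $N_K=0$ or $\nabla^G K=0$ can therefore be verified in any local chart.
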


\begin{proof}
	Since $(M,g,\nabla)$ is Hessian, the affine connection $\nabla$ is
	torsion--free and flat. Let $\nu:T^*M\to M$ be the projection, $\theta$ the
	tautological one--form and $\Omega:=-d\theta$ the canonical symplectic form on
	$T^*M$.  Denote by $V:=\ker d\nu\subset T(T^*M)$ the vertical distribution
	and by $H\subset T(T^*M)$ the $\nabla$--horizontal distribution obtained by
	parallel transport of covectors. By Lemma~\ref{lem:HV-bilagrangian}, $H$ and
	$V$ are transversal Lagrangian distributions for $(T^*M,\Omega)$ and
	$T(T^*M)=H\oplus V$.  Flatness of $\nabla$ implies that $H$ is integrable,
	while $V$ is always integrable, so $H$ and $V$ define transverse Lagrangian
	foliations and $(T^*M,\Omega;H,V)$ is a bi--Lagrangian manifold.
	
	By \cite[Theorem~6 and Proposition~7(b)]{Etayo2006}, any
	bi--Lagrangian manifold $(N,\omega;F_1,F_2)$ admits a canonical neutral metric
	$G$ and an almost product (para--complex) structure $K$ with $K^2=\id$, whose
	$\pm1$ eigendistributions are the tangent bundles of $F_1$ and $F_2$, and
	such that $G(u,v)=\omega(Ku,v)$ for all $u,v$.  Applying this with
	$N=T^*M$, $\omega=\Omega$, $F_1=H$, $F_2=V$ yields a neutral metric $G$ and
	an almost para--complex structure $K$ on $T^*M$ with
	\[
	K|_H=\id,\qquad K|_V=-\id,\qquad
	\Omega(u,v)=G(Ku,v)\quad\forall u,v\in T(T^*M).
	\]
	In particular, the $\pm1$ eigensubbundles
	\[
	E_+ := \ker(K-\id)=H,\qquad E_- := \ker(K+\id)=V
	\]
	are maximally isotropic for $G$ and give the Lagrangian splitting
	$T(T^*M)=E_+\oplus E_-$.  The uniqueness of $G$ with
	\(\Omega(u,v)=G(Ku,v)\) is also part of \cite[Theorem~6]{Etayo2006}.
\end{proof}

Lemma~\ref{lem:HV-bilagrangian} requires only torsion--freeness to obtain a
transversal Lagrangian splitting $T(T^*M)=H\oplus V$ and hence an \emph{almost}
para--Hermitian package $(G,K,\Omega)$.  The additional ``Hessian'' hypothesis
(flatness of $\nabla$) is used only to guarantee integrability of $H$ (and
therefore of $K$), so that the structure is para--K\"ahler in the strict sense.
Then Proposition~\ref{thm:cotangent-parakahler} shows that, once a torsion--free flat
connection \(\nabla\) is fixed (in particular, when \((M,g,\nabla)\) is a
Hessian manifold), the cotangent bundle \(T^*M\) carries a \emph{canonical}
para--K\"ahler structure \((G,K,\Omega)\).  The construction depends only on
the canonical symplectic form \(\Omega=-d\theta\) on \(T^*M\) and on the
\(\nabla\)--horizontal/vertical splitting \(T(T^*M)=H\oplus V\). In this sense the 
para--complex structure \(K\) is intrinsically determined by this splitting,
and \(G\) is then uniquely determined by the relation \(\Omega(u,v)=G(Ku,v)\).

By contrast, there is no complex structure \(J\) on \(T^*M\) that is canonically
determined by the affine/Hessian data \((M,g,\nabla)\) alone and compatible with
the canonical symplectic form \(\Omega\).
Of course, once one chooses an \(\Omega\)--compatible integrable almost complex
structure \(J\) on \(T^*M\), the associated metric \(g_J(\cdot,\cdot):=\Omega(\cdot,J\cdot)\)
makes \((T^*M,g_J,J,\Omega)\) K\"ahler.  The point is that \(J\) is extra
structure: it is not canonically selected by \(\nabla\) or by the Hessian metric
on \(M\), and different choices of \(J\) lead to genuinely different K\"ahler
lifts.  In this sense, the para--K\"ahler structure of
Proposition~\ref{thm:cotangent-parakahler} is canonical relative to the Hessian
data, whereas an elliptic K\"ahler structure on \(T^*M\) is not.
We return to this in \autoref{subsec:non-hermitian-packaging} and \autoref{sec:4}.

Next, specialising Proposition~\ref{thm:cotangent-parakahler} to the lifted Hessian
manifold $(\widetilde{\mathcal M},g_{\mathrm{Euc}},\nabla^{\mathrm{Euc}})$
yields a canonical para--K\"ahler structure $(G,K,\Omega)$ on
$T^*\widetilde{\mathcal M}\simeq\RR^{2n}$.  The least--action Hamiltonian
system~\eqref{eq:ham-system} is Hamiltonian; hence its flow preserves the
canonical symplectic form $\Omega$.  In particular, in the present linear
setting on $\RR^{2n}$ its generator $\mathsf A$ is an element of
$\mathfrak{sp}(2n,\RR)$.

\begin{corollary}[Canonical \(K\)--splitting and neutral form in the Euclidean Hessian lift]
	\label{cor:ambient-parasplitting-geom}
	Let \((G,K,\Omega)\) be the canonical para--K\"ahler structure on \(T^*\widetilde{\mathcal M}\)
	induced by the Hessian data \((\widetilde{\mathcal M},g_{\mathrm{Euc}},\nabla^{\mathrm{Euc}})\)
	as in Proposition~\ref{thm:cotangent-parakahler}.  Then the \(\pm1\) eigendistributions of \(K\)
	coincide with the horizontal/vertical splitting:
	\[
	E_+=H,\qquad E_-=V,\qquad K|_{E_+}=+\id,\ \ K|_{E_-}=-\id,
	\]
	where \(H\) is the \(\nabla^{\mathrm{Euc}}\)--horizontal distribution and \(V=\ker d\nu\) is the vertical
	distribution for the bundle projection \(\nu:T^*\widetilde{\mathcal M}\to\widetilde{\mathcal M}\).
	In the global Darboux chart \((\tilde\rho,y)\) on \(T^*\widetilde{\mathcal M}\simeq\RR^{2n}\) one has
	\[
	H=\mathrm{span}\Bigl\{\frac{\partial}{\partial\tilde\rho^i}\Bigr\},
	\qquad
	V=\mathrm{span}\Bigl\{\frac{\partial}{\partial y_i}\Bigr\},
	\]
	and
	\[
	\Omega=d\tilde\rho^i\wedge dy_i,
	\qquad
	G=\sum_{i=1}^n\Bigl(d\tilde\rho^i\otimes dy_i+dy_i\otimes d\tilde\rho^i\Bigr).
	\]
	The associated neutral quadratic form \(Q:T^*\widetilde{\mathcal M}\to\RR\) is
	\begin{equation}\label{eq:neutral-Q}
		Q(\tilde\rho,y):=G_{(\tilde\rho,y)}\bigl((\tilde\rho,y),(\tilde\rho,y)\bigr)
		\quad\text{(using the translation identification }T_{(\tilde\rho,y)}\RR^{2n}\simeq\RR^{2n}\text{)}
		=2\langle \tilde\rho,y\rangle.
	\end{equation}
\end{corollary}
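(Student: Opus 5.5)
The plan is to specialise Proposition~\ref{thm:cotangent-parakahler} and Lemma~\ref{lem:HV-bilagrangian} to the flat Euclidean case and read everything off in the global chart. First I identify the distributions. In the global coordinates $\tilde\rho^i$ on $\widetilde{\mathcal M}=\RR^n$, the connection $\nabla^{\mathrm{Euc}}$ has vanishing Christoffel symbols, $\Gamma^k_{ij}\equiv 0$. The horizontal lifts from the proof of Lemma~\ref{lem:HV-bilagrangian},
\[
\frac{\delta}{\delta\tilde\rho^i}=\frac{\partial}{\partial\tilde\rho^i}+\Gamma^k_{ij}y_k\frac{\partial}{\partial y_j},
\]
therefore reduce to $\partial/\partial\tilde\rho^i$. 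This gives $H=\mathrm{span}\{\partial_{\tilde\rho^i}\}$. Independently of the connection, $V=\ker d\nu=\mathrm{span}\{\partial_{y_i}\}$. Proposition~\ref{thm:cotangent-parakahler} then gives $E_+=H$ and $E_-=V$, with $K|_{E_\pm}=\pm\id$. In matrix form, $K=\diag(I_n,-I_n)$ in the basis $(\partial_{\tilde\rho},\partial_y)$.

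Next comes the symplectic form. The tautological form is $\theta=y_i\,d\tilde\rho^i$, so $\Omega=-d\theta=d\tilde\rho^i\wedge dy_i$, exactly as computed in the proof of Lemma~\ref{lem:HV-bilagrangian}.

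For the metric, I use the defining relation $\Omega(u,v)=G(Ku,v)$ and the fact that $K^{2}=\id$. Together these give $G(u,v)=\Omega(Ku,v)$, and it suffices to evaluate this on basis vectors.
\begin{itemize}
\item For $u=\partial_{\tilde\rho^i}$ and $v=\partial_{y_j}$: $G=\Omega(\partial_{\tilde\rho^i},\partial_{y_j})=\delta_{ij}$.
\item For $u=\partial_{y_j}$ and $v=\partial_{\tilde\rho^i}$: $G=-\Omega(\partial_{y_j},\partial_{\tilde\rho^i})=\delta_{ij}$.
\item For pure horizontal or pure vertical pairs, $G=0$, since $H$ and $V$ are Lagrangian.
\end{itemize}
Hence $G=\sum_i(d\tilde\rho^i\otimes dy_i+dy_i\otimes d\tilde\rho^i)$. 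This expression is symmetric and of signature $(n,n)$, and $H$ and $V$ are $G$-isotropic, consistent with the proposition. By the uniqueness clause of Proposition~\ref{thm:cotangent-parakahler}, this is the canonical $G$.

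Finally, all of these tensors have constant coefficients in the chart, so the translation identification $T_{(\tilde\rho,y)}\RR^{2n}\simeq\RR^{2n}$ is legitimate. Evaluating $G$ on the position vector $(\tilde\rho,y)$ gives $Q=2\sum_i\tilde\rho^i y_i=2\langle\tilde\rho,y\rangle$.

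The one point needing care is the sign convention in $\Omega(u,v)=G(Ku,v)$. With the opposite convention one would obtain $-G$, which would flip the sign of $Q$. I would check this convention directly against the Etayo convention cited in the proposition and confirm that the resulting $G$ is symmetric. That check is the only real obstacle; everything else is a computation with constant coefficients.
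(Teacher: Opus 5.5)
Your proposal is correct and follows essentially the same route as the paper. Vanishing Christoffel symbols give $H=\mathrm{span}\{\partial/\partial\tilde\rho^i\}$, $\Omega=-d\theta$ is read off from $\theta=y_i\,d\tilde\rho^i$, and $G$ is obtained from $\Omega(u,v)=G(Ku,v)$ with $K|_H=+\id$, $K|_V=-\id$ before $G$ is evaluated on the position vector. The sign issue you flag is already settled: the proposition fixes the convention $\Omega(u,v)=G(Ku,v)$, and your basis computation under it gives the symmetric $G=\sum_i(d\tilde\rho^i\otimes dy_i+dy_i\otimes d\tilde\rho^i)$, as in the paper.
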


\begin{proof}
	Since $\nabla^{\mathrm{Euc}}$ is the standard flat torsion--free connection on
	$\widetilde{\mathcal M}\subset\RR^n$, its Christoffel symbols vanish in the
	global chart $(\tilde\rho^i)$.  Hence the horizontal lifts are
	$\delta/\delta\tilde\rho^i=\partial/\partial\tilde\rho^i$, giving
	$H=\mathrm{span}\{\partial/\partial\tilde\rho^i\}$, while
	$V=\ker d\nu=\mathrm{span}\{\partial/\partial y_i\}$.  Proposition~\ref{thm:cotangent-parakahler}
	defines $K$ by $K|_H=+\id$ and $K|_V=-\id$, so $E_+=H$ and $E_-=V$.
	
	In the Darboux chart, $\theta=y_i\,d\tilde\rho^i$ and hence
	$\Omega=-d\theta=d\tilde\rho^i\wedge dy_i$.  The metric $G$ is determined by
	$\Omega(u,v)=G(Ku,v)$; with $K|_H=+\id$ and $K|_V=-\id$ this yields
	$G=\sum_i(d\tilde\rho^i\otimes dy_i+dy_i\otimes d\tilde\rho^i)$.
	Finally, $Q(\tilde\rho,y)=G((\tilde\rho,y),(\tilde\rho,y))=2\langle\tilde\rho,y\rangle$
	(in the translation identification $T_{(\tilde\rho,y)}\RR^{2n}\simeq\RR^{2n}$),
	and $\Lambda=Q\circ z$ by definition.
\end{proof}

\medskip
\noindent
With the canonical $K$--splitting $T(T^*\widetilde{\mathcal M})=E_+\oplus E_-$ from
Corollary~\ref{cor:ambient-parasplitting-geom} at hand, we rewrite the least--action
Hamiltonian system in the corresponding $K$--adapted (Witt) coordinates $(z_+,z_-)=(\tilde\rho,y)$ on
$T^*\widetilde{\mathcal M}\simeq\RR^{2n}$, where (under the translation identification) the distributions
$E_+$ and $E_-$ are spanned by $\partial/\partial z_+^i$ and $\partial/\partial z_-{}_i$, respectively.
This exposes the dynamics as a
block flow perturbed by a shear mixing $E_-$ into $E_+$, and it yields the
monotonicity identity for the neutral coordinate $\Lambda(t)=2\langle z_+(t),z_-(t)\rangle$:

\begin{proposition}[Witt form and shear decomposition of the least--action lift]
	\label{prop:shear-form}
	In the $K$--adapted coordinates $(z_+,z_-)=(\tilde\rho,y)$ of
	Corollary~\ref{cor:ambient-parasplitting-geom}, the least--action Hamiltonian system
	\eqref{eq:ham-system} takes the Witt form
	\begin{equation}\label{eq:eom-witt}
		\dot z_+=\widehat V z_+ + z_-,
		\qquad
		\dot z_-=-\widehat V^{\!\top}z_-.
	\end{equation}
	Equivalently, its generator decomposes as
	\begin{equation}\label{eq:A-decomp-pu-shear}
		\mathsf A=
		\underbrace{\begin{pmatrix}\widehat V&0\\[2pt]0&-\widehat V^{\!\top}\end{pmatrix}}_{\mathsf A_{\mathrm{pu}}}
		+
		\underbrace{\begin{pmatrix}0&I\\[2pt]0&0\end{pmatrix}}_{\mathsf A_{\mathrm{sh}}},
	\end{equation}
	where $\mathsf A_{\mathrm{pu}}$ commutes with $K$ (hence preserves $E_\pm$) and
	$\mathsf A_{\mathrm{sh}}$ is a nilpotent shear ($\mathsf A_{\mathrm{sh}}^2=0$) mapping
	$E_-$ into $E_+$.  In particular, the zero--residual leaf $\{z_-=0\}$ is invariant.
	
	Moreover, the neutral coordinate $\Lambda(t)=2\langle z_+(t),z_-(t)\rangle$ satisfies
	\begin{equation}\label{eq:Lambda-dot-here}
		\dot\Lambda(t)=2\|z_-(t)\|^2=2\|y(t)\|^2\ \ge\ 0.
	\end{equation}
\end{proposition}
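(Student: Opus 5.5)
The argument is a direct computation in the global Darboux chart of Corollary~\ref{cor:ambient-parasplitting-geom}; no new structure is needed. I would substitute $z_+=\tilde\rho$ and $z_-=y$ into the Hamiltonian system~\eqref{eq:ham-system}. This gives \eqref{eq:eom-witt} verbatim. Reading off the block matrix $\mathsf A$ of \eqref{eq:ham-system-vectornotation} and splitting it into its block--diagonal part and its upper off--diagonal part then produces the decomposition \eqref{eq:A-decomp-pu-shear}.

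Next I would verify the structural claims about the two pieces.
\begin{itemize}
\item In the chart, $K=\diag(I,-I)$, because $E_+=H$ is spanned by $\partial/\partial\tilde\rho^i$ and $E_-=V$ by $\partial/\partial y_i$. Any block--diagonal matrix commutes with $K$, so $\mathsf A_{\mathrm{pu}}K=K\mathsf A_{\mathrm{pu}}$ and $\mathsf A_{\mathrm{pu}}$ preserves each of $E_\pm$.
\item For the shear, $\mathsf A_{\mathrm{sh}}(0,z_-)^\top=(z_-,0)^\top\in E_+$ and $\mathsf A_{\mathrm{sh}}(z_+,0)^\top=0$. Hence $\mathsf A_{\mathrm{sh}}$ maps $E_-$ into $E_+$ and annihilates $E_+$, and $\mathsf A_{\mathrm{sh}}^2=0$ follows immediately.
\item For invariance of the leaf, note that the equation $\dot z_-=-\widehat V^{\!\top}z_-$ is linear and decoupled from $z_+$. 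Its solution is therefore $z_-(t)=e^{-t\widehat V^{\!\top}}z_-(0)$, which vanishes for all $t$ whenever $z_-(0)=0$. So $\{z_-=0\}$ is invariant.
\end{itemize}

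For the balance law, I would differentiate $\Lambda=2\langle z_+,z_-\rangle$ and insert \eqref{eq:eom-witt}:
\[
\dot\Lambda=2\langle \widehat V z_+ + z_-,z_-\rangle+2\langle z_+,-\widehat V^{\!\top}z_-\rangle .
\]
The adjoint identity $\langle z_+,\widehat V^{\!\top}z_-\rangle=\langle \widehat V z_+,z_-\rangle$ cancels the two $\widehat V$ terms, leaving $\dot\Lambda=2\|z_-\|^2=2\|y\|^2\ge0$.

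The only point requiring any care is the sign and transpose bookkeeping in this cancellation. It relies on $\dot y$ carrying $-\widehat V^{\!\top}$ rather than $-\widehat V$, which is exactly what the Hamiltonian \eqref{eq:hamiltonian-fn} produces through $\dot y=-\partial_{\tilde\rho}H$. I would double--check that derivation, together with the identification $Q=\Lambda$ from \eqref{eq:neutral-Q}. Everything else is immediate.
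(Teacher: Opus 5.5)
Your proposal is correct and follows essentially the same route as the paper: substitute $(z_+,z_-)=(\tilde\rho,y)$, split $\mathsf A$ into block-diagonal and strictly upper-triangular parts, use $K=\diag(I,-I)$, and cancel the $\widehat V$ terms in $\dot\Lambda$ via the adjoint identity. Your invariance argument via the explicit solution $z_-(t)=e^{-t\widehat V^{\!\top}}z_-(0)$ is slightly cleaner than the paper's check that $\dot z_-(t_0)=0$, which implicitly relies on uniqueness of solutions.
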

\begin{proof}[Proof of Proposition~\ref{prop:shear-form}]
	We start with equation \eqref{eq:ham-system-vectornotation}.	
	With the \(K\)--adapted splitting \(TN=E_+\oplus E_-\) and the corresponding
	coordinates \(z_+=\tilde\rho\), \(z_-=y\) (cf.\ Corollary~\ref{cor:ambient-parasplitting-geom}),
	this is precisely \eqref{eq:eom-witt}.  Next, decompose \(\mathsf A\) into its
	block--diagonal and strictly upper--triangular parts,
	\[
	\mathsf A=\mathsf A_{\mathrm{pu}}+\mathsf A_{\mathrm{sh}},
	\qquad
	\mathsf A_{\mathrm{pu}}=
	\begin{pmatrix}
		\widehat V & 0\\[2pt]
		0 & -\widehat V^{\!\top}
	\end{pmatrix},
	\qquad
	\mathsf A_{\mathrm{sh}}=
	\begin{pmatrix}
		0 & I\\[2pt]
		0 & 0
	\end{pmatrix}.
	\]
	This is the canonical block decomposition of \(\mathsf A\).  The shear part
	\(\mathsf A_{\mathrm{sh}}\) is nilpotent (\(\mathsf A_{\mathrm{sh}}^2=0\)) and
	satisfies \(\mathsf A_{\mathrm{sh}}(E_-)\subset E_+\) and
	\(\mathsf A_{\mathrm{sh}}(E_+)=\{0\}\), i.e.\ it maps \(E_-\) into \(E_+\).
	Moreover, in these \(K\)--adapted coordinates one has \(K=\mathrm{diag}(I,-I)\) on
	\(E_+\oplus E_-\).  Since \(\mathsf A_{\mathrm{pu}}\) is block--diagonal, it
	commutes with \(K\), hence preserves \(E_+\) and \(E_-\).
	Invariance of the zero--residual leaf follows directly from \eqref{eq:eom-witt}:
	if \(z_-(t_0)=0\) then \(\dot z_-(t_0)=-\widehat V^{\!\top}z_-(t_0)=0\), hence
	\(z_-(t)\equiv 0\).
	
	Finally, for \(\Lambda(t):=2\langle z_+(t),z_-(t)\rangle=2\langle \tilde\rho(t),y(t)\rangle\),
	differentiate and use \eqref{eq:ham-system}:
	\[
	\dot\Lambda
	=2\bigl(\langle \dot{\tilde\rho},y\rangle+\langle \tilde\rho,\dot y\rangle\bigr)
	=2\bigl(\langle \widehat V\tilde\rho+y,\,y\rangle+\langle \tilde\rho,\,-\widehat V^{\!\top}y\rangle\bigr).
	\]
	The mixed terms cancel because
	\(\langle \widehat V\tilde\rho,\,y\rangle=\langle \tilde\rho,\,\widehat V^{\!\top}y\rangle\),
	leaving \(\dot\Lambda=2\|y\|^2\), which is \eqref{eq:Lambda-dot-here}.
\end{proof}

\begin{remark}[On real polarisation]\label{remark:polarization}
	The para--K\"ahler structure $(G,K,\Omega)$ on $N=T^*\widetilde{\mathcal M}$
	canonically determines a bi--Lagrangian splitting
	\[
	TN=E_+\oplus E_-,
	\qquad
	E_\pm=\ker(K\mp\id),
	\]
	so in the hyperbolic (para--)theory there is a distinguished \emph{real}
	polarisation built into the geometry, namely the integrable Lagrangian
	distribution $E_+$ (equivalently, the Lagrangian foliation by $E_+$--leaves).
	In the global Darboux chart $z=(\tilde\rho,y)$ on $N$, the zero--residual set
	\[
	\Sigma_0:=\{(\tilde\rho,y):y=0\}
	\]
	is the $E_+$--leaf through the origin (the zero section).  In particular,
	$\Sigma_0$ is Lagrangian and $T\Sigma_0=E_+|_{\Sigma_0}$.  Moreover, $\Sigma_0$
	is invariant under the lifted flow, since $\dot y=-\widehat V^{\!\top}y$
	implies $y(t)\equiv 0$ whenever $y(t_0)=0$.  On $\Sigma_0$ the dynamics closes
	as $\dot{\tilde\rho}=\widehat V\tilde\rho$, and the induced simplex process is
	read out by normalisation,
	\[
	q(t)=\frac{\tilde\rho(t)^{\odot2}}{\|\tilde\rho(t)\|^2}\qquad(\tilde\rho(t)\neq 0).
	\]
	We return to this point when making contact with the (para--)geometric
	quantisation template, and contrast it with the genuinely \emph{additional}
	polarisation data required in the elliptic (K\"ahler) setting of
	\autoref{sec:4}.
\end{remark}

In the following, we use the terminology \emph{on/off-shell} relative to the invariant zero section
$
\Sigma_0=\{y=0\}.
$
A trajectory is \emph{on-shell} if it is entirely contained in \(\Sigma_0\), equivalently if \(y(t)\equiv 0\) on the interval under consideration.  It is \emph{off-shell} if it is not confined to \(\Sigma_0\), equivalently if \(y(t)\neq 0\) at least at some time.  Thus off-shell trajectories are not, in general, confined to a fixed leaf of the \(E_+\)-foliation, since the residual variable \(y\) typically evolves.

We are now in the position where we can define a para--qudit model on the zero--residual leaf:
\begin{corollary}[The para--qudit model]
	\label{cor:onshell-simplex-projection}
	In the coordinates $(z_+,z_-)=(\tilde\rho,y)$ of Proposition~\ref{prop:shear-form}, the zero--residual leaf
	\[
	\Sigma_0:=\{(\tilde\rho,y):\ y=0\}
	\]
	is invariant. On $\Sigma_0$ the lifted dynamics reduce to the linear on-zero section amplitude equation
	\begin{equation}\label{eq:cor-amps-onshell}
		\dot{\tilde\rho}(t)=\widehat V\,\tilde\rho(t),
		\qquad
		\widehat V=\widehat S+\widehat F,
		\qquad
		\widehat S^{\!\top}=\widehat S,\quad \widehat F^{\!\top}=-\widehat F,
	\end{equation}
	and for any nonzero initial condition $\tilde\rho(0)\neq 0$ one has
	\begin{equation}\label{eq:cor-tilderho-exp}
		\tilde\rho(t)=e^{t\widehat V}\,\tilde\rho(0).
	\end{equation}
	Define the normaliser
	\begin{equation}\label{eq:cor-Z-def}
		\mathcal Z(t):=\|\tilde\rho(t)\|^{2}=\sum_{i=1}^n \tilde\rho_i(t)^2,
	\end{equation}
	and the simplex readout
	\begin{equation}\label{eq:cor-q-def}
		q_i(t):=\frac{\tilde\rho_i(t)^2}{\mathcal Z(t)}\quad(\tilde\rho(t)\neq 0),
		\qquad
		q(t)\in\Delta^{n-1}.
	\end{equation}
	Moreover, $q(t)\in\mathcal M=\Delta^{n-1}_{>0}$ on any interval on which $\tilde\rho_i(t)\neq 0$ for all $i$.
\end{corollary}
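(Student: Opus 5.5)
The plan is to read everything off the Witt form of Proposition~\ref{prop:shear-form} and then handle the readout by elementary linear ODE facts.

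\emph{Invariance of the leaf.} By \eqref{eq:eom-witt}, the residual obeys the autonomous linear equation $\dot z_-=-\widehat V^{\!\top}z_-$. Its unique solution is $z_-(t)=e^{-t\widehat V^{\!\top}}z_-(0)$, which vanishes identically whenever $z_-(0)=0$. Hence $\Sigma_0=\{y=0\}$ is invariant, forward and backward in time. This is the same argument already recorded in Proposition~\ref{prop:shear-form} and Remark~\ref{remark:polarization}.

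\emph{Reduced dynamics.} Substituting $z_-\equiv0$ into the first Witt equation gives $\dot{\tilde\rho}=\widehat V\tilde\rho$, which is \eqref{eq:cor-amps-onshell}. This is a constant-coefficient linear system, so its solution is $\tilde\rho(t)=e^{t\widehat V}\tilde\rho(0)$, which is \eqref{eq:cor-tilderho-exp}. Since $e^{t\widehat V}$ is invertible for every $t$, we get $\tilde\rho(t)\neq0$ for all $t$ whenever $\tilde\rho(0)\neq0$. Thus $\mathcal Z(t)>0$ and the readout \eqref{eq:cor-q-def} is globally defined and smooth in $t$. This is the one point needing care: nonvanishing of the normaliser must hold along the whole trajectory, not merely at $t=0$, and it follows from invertibility of the matrix exponential.

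\emph{Readout.} The components $q_i(t)=\tilde\rho_i^2/\mathcal Z$ are nonnegative and sum to $1$ by the definition of $\mathcal Z$, so $q(t)\in\Delta^{n-1}$. On any interval where every $\tilde\rho_i(t)\neq0$, each $q_i>0$, hence $q(t)\in\mathcal M$.

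No step presents a real obstacle.
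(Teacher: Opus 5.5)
Your proposal is correct and follows the paper's proof essentially step for step. The steps are: invariance of $\Sigma_0$ from $\dot y=-\widehat V^{\!\top}y$, the closed-form solution $\tilde\rho(t)=e^{t\widehat V}\tilde\rho(0)$, invertibility of the matrix exponential to keep $\mathcal Z(t)>0$, and then nonnegativity and normalisation of the readout. Writing out $z_-(t)=e^{-t\widehat V^{\!\top}}z_-(0)$ makes the uniqueness step behind invariance more explicit than the paper's version, and you also spell out the final interior claim, which the paper leaves implicit.
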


\begin{proof}
	Invariance of $\Sigma_0$ is immediate from $\dot y=-\widehat V^{\!\top}y$ in \eqref{eq:eom-witt}.  On $\Sigma_0$ one has
	$\dot{\tilde\rho}=\widehat V\tilde\rho$, hence $\tilde\rho(t)=e^{t\widehat V}\tilde\rho(0)$.
	If $\tilde\rho(0)\neq 0$ then $\tilde\rho(t)\neq 0$ for all finite $t$ since $e^{t\widehat V}$ is invertible, so
	$\mathcal Z(t)=\|\tilde\rho(t)\|^2>0$.  Therefore $q_i(t)=\tilde\rho_i(t)^2/\mathcal Z(t)\ge 0$ and $\sum_i q_i(t)=1$,
	hence $q(t)\in\Delta^{n-1}$ for all finite $t$.
	
	We use the closed simplex $\Delta^{n-1}$ because the readout map
	\[
	\tilde\rho\ \longmapsto\ q=\frac{\tilde\rho^{\odot2}}{\|\tilde\rho\|^2}
	\]
	is defined for all $\tilde\rho\neq 0$ and may attain boundary points when some coordinate $\tilde\rho_i(t)$ vanishes.
	Such boundary instants do not affect the well-posedness of the on--shell amplitude dynamics
	$\dot{\tilde\rho}=\widehat V\tilde\rho$ nor the positivity of $\mathcal Z(t)$.
\end{proof}

Next we can identify rational inattention (RI) and Projective Expected Utilities (PEU) as special cases in the following two examples:

\begin{example}[Standard ER/RI as a diagonal LAR case]
	\label{example:RI-ER}
	Let the generator be purely symmetric,
	\(
	\widehat V=\widehat S\in\mathrm{Sym}(\mathbb R^n)
	\)
	with \(\widehat F=0\).
	The amplitude dynamics
	\[
	\dot{\tilde\rho}_t=\widehat S\,\tilde\rho_t
	\quad\Rightarrow\quad
	\tilde\rho_T=e^{T\widehat S}\tilde\rho_0
	\]
	project to the simplex as
	\begin{equation}\label{eq:choice-probs}
		q_k(T)
		=\frac{\tilde\rho_{T,k}^2}{\sum_j \tilde\rho_{T,j}^2}
		=\frac{\bigl[(e^{T\widehat S}\tilde\rho_0)_k\bigr]^2}
		{\tilde\rho_0^\top e^{2T\widehat S}\tilde\rho_0},
		\qquad (\widehat S=\widehat S^\top),
	\end{equation}
	where \(q_k(0)=\tilde\rho_{0,k}^2/\sum_j \tilde\rho_{0,j}^2\) is the prior.
	For a general
	(non--diagonal) \(\widehat S\) this yields a coupled exponential tilt
	on the simplex determined by the full operator \(\widehat S\).
	
	In the diagonal (choice–separable) case
	\(
	\widehat S=\mathrm{diag}(\theta_1,\dots,\theta_n)
	\)
	we obtain
	\[
	\tilde\rho_{T,k}=e^{T\theta_k}\tilde\rho_{0,k},
	\qquad
	q_k(T)
	=\frac{q_k(0)\,e^{2T\theta_k}}
	{\sum_j q_j(0)\,e^{2T\theta_j}}.
	\]
	In standard entropy–regularised / RI choice, the posterior has the
	form
	\[
	q_k(\beta)
	=\frac{q_k(0)\,e^{\beta v_k}}{\sum_j q_j(0)\,e^{\beta v_j}},
	\]
	as the maximiser of a linear–plus–entropy objective
	(for an information--theoretic overview, see \cite{Harre2021} and \cite{Buckley2017}). 
	Identifying
	\(v_k=\theta_k\) and \(\beta=2T\), diagonal LAR reproduces the
	standard ER/RI posterior.  In information–geometric terms, the
	corresponding continuous--time flow on the simplex (on $\Delta^{n-1}_{>0}$) is the
	Fisher--Rao natural gradient flow of the linear functional
	\(q\mapsto\sum_k \theta_k q_k\) in the sense of
	\citet{AmariNagaoka2000}.  In the diagonal case this natural gradient
	coincides (up to a constant time rescaling) with the standard
	\textit{replicator dynamics} with payoffs \(\theta_k\)
	\citep{HarperInfoGeomEGT,SandholmEGT}.
\end{example}

\begin{example}[PEU and EUT]
	\label{example:PEU-EUT}
	Projective Expected Utility (PEU) \citep{LaMura2009} represents lotteries by rays
	$[\psi]\subset\RR^n$ and evaluates them by a homogeneous quadratic form
	\[
	U_{\mathrm{PEU}}(\psi)=\psi^\top W\psi,
	\qquad W=W^\top,
	\qquad \psi\sim\lambda\psi\ (\lambda>0).
	\]
	Assume the purely evaluative LAR dynamics of Example~\ref{example:RI-ER},
	\[
	\widehat V=\widehat S\in\Sym(\RR^n),\quad \widehat F=0,\qquad
	\tilde\rho_T=e^{T\widehat S}\tilde\rho_0,\qquad
	q_k(T)=\frac{\tilde\rho_{T,k}^2}{\sum_j \tilde\rho_{T,j}^2}.
	\]
	Assume the largest eigenvalue $\lambda^\star$ of $\widehat S$ is simple, with corresponding eigenvector
	$w^\star\neq 0$ satisfying $\widehat S w^\star=\lambda^\star w^\star$, and assume $\ip{w^\star}{\tilde\rho_0}\neq0$.
	Then the amplitude ray converges projectively,
	\[
	[\tilde\rho_T]\to[w^\star]\quad\text{as }T\to\infty,
	\qquad\text{hence}\qquad
	q(T)\to q^\star,\ \ q^\star_k:=\frac{(w^\star)_k^2}{\sum_j (w^\star)_j^2},
	\]
	where $(w^\star)_k$ denotes the $k$th component of $w^\star$ in the choice (standard) basis.
	Thus the large--$T$ (max--plus / tropical) limit of symmetric LAR selects a canonical projective state $[w^\star]$
	and an induced limiting lottery $q^\star$ on the simplex, providing the natural PEU object associated with $\widehat S$.
	
	In the diagonal (choice--separable) subcase $\widehat S=\diag(\theta_1,\dots,\theta_n)$ with unique maximiser
	$\theta_{k^\star}=\max_k\theta_k$, we have $w^\star=e_{k^\star}$ and therefore $q(T)\to\delta_{k^\star}$,
	recovering deterministic expected utility theory (EUT) choice.
	Moreover, if $W=\diag(u_k)$ and $\mathcal Z(T):=\sum_j\tilde\rho_{T,j}^2$, then
	\[
	U_{\mathrm{PEU}}(\tilde\rho_T)=\sum_k u_k\tilde\rho_{T,k}^2
	=\mathcal Z(T)\sum_k u_k q_k(T),
	\]
	so PEU evaluates the amplitude ray $[\tilde\rho_T]$ while EUT evaluates the expected utility $\sum_k u_k q_k(T)$
	of the induced lottery $q(T)$ on the simplex (they coincide up to the homogeneous scale factor $Z(T)$ in the diagonal case).
\end{example}

For a symmetric but non--diagonal $\widehat S$, the map $q(T)$ in \eqref{eq:choice-probs} is still given by the same
amplitude evolution $\tilde\rho_T=e^{T\widehat S}\tilde\rho_0$, but now depends on the full operator $\widehat S$
(and hence on all components of $\tilde\rho_0$ in the choice basis).  In particular, in the choice basis this
generically cannot be reduced to a coordinatewise exponential tilt
\[
q_k(T)\propto q_k(0)\exp\{\eta v_k\}
\qquad\text{(for full--support $q(0)\in\mathcal M$).}
\]
for any single vector $v\in\RR^n$.  Equivalently, the diagonal ER/RI form is recovered if and only if $\widehat S$
is simultaneously diagonal in the choice basis; otherwise the off--diagonal entries mix coordinates and act as a
rotation of the evaluative basis, inducing genuine \emph{contextual} couplings between alternatives.  Thus, within
the same LAR model, the diagonal (choice--separable) regime reproduces classical entropy--regularised / RI posteriors,
whereas non--diagonal symmetric generators produce the contextual, interference--like, and order--sensitive
distortions of choice probabilities characteristic of quantum--like (QL) models.  In this sense ER/RI and QL appear
as two regimes of a single information--geometric dynamical framework.

To summarise, we have now obtained the dynamical equation \eqref{eq:eom-witt} in the real doubled space
\(\RR^{2n}\).  This real formulation is the most direct, since it keeps the linear--algebraic dual spaces explicit and makes the canonical split metric transparent, but it is often advantageous to repackage the same system in different algebraic languages. 
Because the doubled space carries a natural split signature, it is convenient to 
introduce the split--complex \(\mathbb D^n\) representation in \autoref{subsec:split-complex}, deriving a (split) para--Schr\"odinger equation.  
This will also allow a clean comparison, in \autoref{subsec:non-hermitian-packaging}, with the 
standard complex packaging in \(\mathbb{C}^n\).

\subsection{In $\mathbb D$: the hyperbolic Schr\"odinger equation}
\label{subsec:split-complex}

Having derived the lifted least--action equations of motion on
\(N=T^*\widetilde{\mathcal M}\simeq\RR^{2n}\) (cf.\ \eqref{eq:ham-system}), we now show how the
shear--free dynamics on the zero--residual leaf admit an intrinsic split--complex
packaging.  In this representation the canonical Witt splitting is encoded by
the idempotents of \(\mathbb D\), and the on--shell flow takes the compact
hyperbolic Schr\"odinger form.

We stay in the global Darboux chart \((\tilde\rho,y)\in\RR^n\oplus\RR^n\) on
\(N=T^*\widetilde{\mathcal M}\), with canonical symplectic form
\[
\Omega=d\tilde\rho^i\wedge dy_i.
\]
By Proposition~\ref{prop:shear-form}, the flat Euclidean para--K\"ahler structure determines a para--complex
endomorphism \(K\) with Witt decomposition \(T(T^*\widetilde{\mathcal M})=E_+\oplus E_-\),
\(E_\pm=\ker(K\mp\id)\), and the lifted least--action flow is linear with generator
\begin{equation}\label{eq:A-split}
	\dot Z=\mathsf A Z,
	\qquad
	\mathsf A=\mathsf A_{\mathrm{pu}}+\mathsf A_{\mathrm{sh}},
\end{equation}
where \(\mathsf A_{\mathrm{pu}}\) preserves the Witt splitting (it commutes with \(K\)) and
\(\mathsf A_{\mathrm{sh}}\) is a nilpotent shear mapping \(E_-\) into \(E_+\).
In the \(K\)--adapted coordinates \((z_+,z_-)=(\tilde\rho,y)\), the zero--residual leaf
\(\Sigma_0=\{z_-=0\}\) is invariant, and on \(\Sigma_0\) the dynamics reduce to the decoupled Witt evolution
\begin{equation}\label{eq:witt-onshell}
	\dot z_\pm = (\widehat F\pm \widehat S)\,z_\pm,
	\qquad\text{with } z_-\equiv 0,
\end{equation}
which is the regime naturally encoded by the split--complex (hyperbolic) Schr\"odinger form.

\medskip

To see this and to package \eqref{eq:witt-onshell} intrinsically, we use the split--complex algebra
\(\mathbb D=\{a+bj\mid a,b\in\RR,\ j^2=+1\}\) and its idempotents
\(e_\pm=\tfrac12(1\pm j)\), which satisfy \(j e_\pm=\pm e_\pm\) and provide a complete pair of orthogonal
projectors.\footnote{Terminology varies by field.
	The scalar algebra generated by a unit \(j\) with \(j^{2}=+1\) is known as the \emph{split--complex},
	\emph{hyperbolic}, or \emph{double} numbers. In differential geometry, the corresponding endomorphism \(K\)
	with \(K^{2}=\id\) and equal--rank eigenspaces is a \emph{para--complex} structure; together with a neutral
	metric \(G\) and compatible symplectic form \(\Omega\), the triple \((G,K,\Omega)\) is \emph{para--K\"ahler}.}
These algebraic projectors mirror the Witt projectors \(P_\pm=\tfrac12(\id\pm K)\) onto \(E_\pm\), and thus
identify the real direct sum \(E_+\oplus E_-\cong\RR^n\oplus\RR^n\) with the free \(\mathbb D\)--module
\(\mathbb D^n\).  Accordingly, any Witt state \((z_+,z_-)\) can be encoded as the split--complex state
\[
\Psi_{\mathbb D} := z_+ e_+ + z_- e_- \in \mathbb D^n.
\]
Viewed as a real vector space, \(\RR^{2n}\) equipped with the neutral metric \(G\) is a Krein space; the
split--complex packaging simply recasts the same para--Hermitian geometry in module form.

\medskip

On the zero--residual leaf, it is convenient to introduce the para--Hermitian Hamiltonian operator
\[
\widehat H_{\mathbb D} = \widehat S + j\,\widehat F \in \End(\mathbb D^n),
\qquad
\widehat H_{\mathbb D}^{\#} = \widehat H_{\mathbb D},
\]
where \((\cdot)^\#\) denotes split--conjugate transpose, i.e.\ \((a+bj)^\#:=a-bj\) and
\(\widehat A^\#:=(\widehat A^{\#\textnormal{(entrywise)}})^{\!\top}\).
The two real equations in \eqref{eq:witt-onshell} then combine into the compact para--Schr\"odinger equation
\begin{equation}\label{eq:para-schrodinger}
	\dot\Psi_{\mathbb D} = j\,\widehat H_{\mathbb D}\,\Psi_{\mathbb D},
	\qquad
	\widehat H_{\mathbb D}^{\#}=\widehat H_{\mathbb D}.
\end{equation}
Projecting \eqref{eq:para-schrodinger} onto \(e_\pm\) recovers the Witt components, since \(j e_\pm=\pm e_\pm\)
implies \(\dot z_\pm=(\widehat F\pm\widehat S)z_\pm\). Away from the zero--residual leaf, the full lifted dynamics \eqref{eq:ham-system}
can be viewed as the para--Schr\"odinger evolution supplemented by the nilpotent
shear term \(\mathsf A_{\mathrm{sh}}\), i.e.\ a forcing from \(E_-\) into \(E_+\)
that vanishes exactly on \(\Sigma_0\).
In this formulation \(\widehat H_{\mathbb D}\) is para--Hermitian, \((\widehat H_{\mathbb D})^\#=\widehat H_{\mathbb D}\), hence
\((j\widehat H_{\mathbb D})^\#=-\,j\widehat H_{\mathbb D}\).  Therefore the propagator
\[
U(t):=\exp(j\widehat H_{\mathbb D}\,t)
\]
is para--unitary, \(U(t)^\#U(t)=I\), and preserves the induced para--Hermitian (Krein) pairing
\(\langle\Psi,\Phi\rangle_{\mathbb D}:=\Psi^\#\Phi\) (equivalently its real part \(G=\Re\langle\cdot,\cdot\rangle_{\mathbb D}\))
along the shear--free on--shell flow on \(\Sigma_0\).  The split--complex packaging thus provides a compact description in
which the hyperbolic Schr\"odinger form holds precisely on the zero--residual leaf.

\subsection{In $\mathbb C$: non--Hermitian packaging}
\label{subsec:non-hermitian-packaging}

The split--complex representation in \autoref{subsec:split-complex} is intrinsic to the Hessian
para--K\"ahler lift: the para--complex structure \(K\) and its Witt splitting are part of the
geometric data of the model, and the shear--free flow on the zero--residual leaf is naturally
expressed as a para--Schr\"odinger equation.  One may nevertheless \emph{repackage the same real lifted
	equations} in complex coordinates by choosing an \emph{auxiliary} almost complex structure
\(J\) with \(J^2=-\id\).  Such choices always exist (e.g.\ \(\Omega\)--compatible ones on a symplectic
manifold), but they are noncanonical and affect only how one represents the fixed symplectic form
\(\Omega\) (for instance via \(\Omega(\cdot,J\cdot)\)), not \(\Omega\) itself.
Consequently, the resulting complex form of the equations need not be unitary and, in general, does not
reflect the underlying para--Hermitian geometry.

To make this concrete, in the flat Darboux chart \((\tilde\rho,y)\in\RR^n\times\RR^n\) on
\(\RR^{2n}\simeq T^*\widetilde{\mathcal M}\) we choose the standard almost complex structure that pairs
base and fibre coordinates.  Define
\[
J(\tilde\rho,y)=(-y,\tilde\rho),
\qquad J^2=-\id,
\]
and identify \(\RR^{2n}\) with \(\mathbb C^n\) via
\begin{equation}
	\label{eq:complex-identification}
	z := \tilde\rho + i\,y,
	\qquad 
	\tilde\rho = \tfrac12(z+\bar z),
	\qquad 
	y = \tfrac{1}{2i}(z-\bar z).
\end{equation}
Under this identification, the real lifted Hamiltonian system \eqref{eq:ham-system},
\[
\dot{\tilde\rho}=\widehat V\,\tilde\rho+y,
\qquad 
\dot y=-\,\widehat V^{\!\top}y,
\qquad 
\widehat V=\widehat S+\widehat F,\ \
\widehat S^{\!\top}=\widehat S,\ \
\widehat F^{\!\top}=-\widehat F,
\]
becomes, after substituting \eqref{eq:complex-identification} and collecting the \(z\) and \(\bar z\) components,
\begin{equation}
	\label{eq:eom-non-hermitian}
	\dot z=\Bigl(\widehat F-\tfrac{i}{2}I\Bigr)z
	\;+\;
	\Bigl(\widehat S+\tfrac{i}{2}I\Bigr)\bar z.
\end{equation}
Equivalently,
\[
\dot z
=
\widehat F z + \widehat S \bar z
\;-\;\frac{i}{2}(z-\bar z)
=
\widehat F z + \widehat S \bar z + y,
\qquad
y=\frac{1}{2i}(z-\bar z),
\]
so the scalar terms \(\pm\tfrac{i}{2}I\) in \eqref{eq:eom-non-hermitian} are simply the canonical \(+y\)
coupling expressed in the complex coordinates \(z=\tilde\rho+iy\).
In particular, \eqref{eq:eom-non-hermitian} is \emph{not} complex--linear off the zero--residual leaf:
it has (the Bogoliubov) form that mixes \(z\) and \(\bar z\).

On the zero--residual leaf \(y=0\) the subspace \(y\equiv 0\) is invariant
(\(\dot y=-\widehat V^{\!\top}y\)), hence \(z=\bar z\) and \eqref{eq:eom-non-hermitian} reduces to
\[
\dot z=(\widehat S+\widehat F)z=\widehat V z.
\]
Thus the on--shell dynamics become complex--linear and can be written in a (generally nonunitary)
Schr\"odinger form
\begin{equation} \label{eq:schrodinger-complex-packaging}
\dot z=-\,i\,\widehat H_{\mathbb C}\,z,
\qquad
\widehat H_{\mathbb C}:= i\,\widehat V = i(\widehat S+\widehat F).
\end{equation}
However, \(\widehat H_{\mathbb C}\) is \emph{non--Hermitian} whenever \(\widehat S\neq 0\):
indeed \(\widehat H_{\mathbb C}^\dagger=-\,i\,\widehat V^{\!\top}\neq \widehat H_{\mathbb C}\) in general.
Off the zero--residual leaf (\(y\neq 0\)), the coupling to \(\bar z\) in \eqref{eq:eom-non-hermitian}
precludes any reduction to a unitary complex Schr\"odinger evolution.

This extrinsic complex packaging also makes clear why the usual Hermitian norm need not be conserved.
Differentiating \(\|z\|^2=z^\dagger z\) along solutions of \eqref{eq:eom-non-hermitian} yields
\[
\frac{d}{dt}\|z\|^2
= z^\dagger\!\Bigl(\widehat S+\tfrac{i}{2}I\Bigr)\bar z
+ \bar z^{\!\dagger}\!\Bigl(\widehat S-\tfrac{i}{2}I\Bigr) z
=2\,\Re\!\Bigl(z^\dagger\!\Bigl(\widehat S+\tfrac{i}{2}I\Bigr)\bar z\Bigr),
\]
which is generically nonzero.  In summary, identifying \(\RR^{2n}\simeq\mathbb C^n\) provides an
elegant coordinate description of the \emph{same} lifted real dynamics, but it obscures the
para--K\"ahler geometry: unless the real generator preserves the chosen almost complex structure \(J\),
the resulting representation is not a unitary Schr\"odinger evolution, and even on the zero--residual
leaf the apparent Hamiltonian \(\widehat H_{\mathbb C}\) is non--Hermitian whenever \(\widehat S\neq 0\).
This contrast will be further elaborated in \autoref{sec:4}.

\section{Elliptic quantum geometry and K\"ahler structure}
\label{sec:4}

The preceding sections show that the least--action rationality (LAR) lift admits
a natural \emph{hyperbolic} quantum--geometric reading.  Starting from the
information--geometric (Hessian) data, the cotangent lift
$
N:=T^*\widetilde{\mathcal M}
$
comes with its canonical symplectic form \(\Omega\) and a canonical
para--K\"ahler (para--Hermitian) way of \emph{organising} that symplectic space:
a product structure \(K\) with \(K^2=\id\) splits directions into two
complementary sectors, and a neutral metric \(G\) relates the splitting to the
symplectic form via
\[
\Omega(u,v)=G(Ku,v).
\]
This gives an intrinsic Witt decomposition
$
TN=E_+\oplus E_-,
$
so the hyperbolic lift does not merely say that the dynamics are Hamiltonian on
\((N,\Omega)\); it also supplies a canonical ``two--sector'' geometry in which
the split--complex Schr\"odinger picture becomes a convenient shorthand (most
transparently on the zero--residual sheet).

By contrast, much of the quantum cognition literature (e.g.\ \cite{BusemeyerBruza2012})
assumes an explicitly \emph{elliptic} quantum geometry: amplitudes live in a
complex Hilbert space and evolution is unitary.  From the present perspective,
the key change is not the symplectic data (since \((N,\Omega)\) is the same) but
the choice of \emph{coherence structure} used to interpret the lifted geometry (symplectic space).  In the
hyperbolic theory, coherence is organised by the canonical para--complex split
(\(K^2=+\id\)), leading at the linear level to the noncompact symmetry group
\[
\mathrm{Sp}(2n,\RR)\cap\mathrm{O}(n,n)\cong \mathrm{GL}(n,\RR).
\]
An elliptic model instead requires a K\"ahler organisation \((g,J,\Omega)\) with
\(J^2=-\id\), \(g\) positive definite, and
\[
\Omega(u,v)=g(Ju,v),
\]
whose linear symmetry group is the compact unitary group
\[
\mathrm{Sp}(2n,\RR)\cap\mathrm{O}(2n)\cong \mathrm{U}(n).
\]
The intuitive point is simple: hyperbolic structure is \emph{delivered by the
	lift}, whereas elliptic structure amounts to an extra modelling choice: a way
of selecting a complex notion of orthogonality/coherence on the same symplectic
space.

The purpose of this section is to make that additional choice explicit, and to
show how standard complex (unitary) quantum dynamics can be recovered from the
classical Hamiltonian framework.  Rather than postulating a complex structure
\emph{ad hoc}, we proceed variationally.  Starting from the Hamilton action on
\((N,\Omega)\), we impose an \emph{eigenbundle admissibility restriction}:
admissible trajectories are required to evolve within a distinguished
half--dimensional Lagrangian subbundle of the (complexified) tangent bundle,
while admissible variations are confined to the complementary Lagrangian
directions.  As shown below, this replaces the full Hamilton equation by a
projected evolution and is equivalent to choosing a K\"ahler polarisation.  In
this sense, polarisation is not an independent geometric postulate, but the
symplectic encoding of an admissibility constraint imposed at the level of the
variational principle.

\subsection{Elliptic quantum geometry}
\label{subsec:elliptic}

We keep the same lifted manifold \(N:=T^*\widetilde{\mathcal M}\simeq\RR^{2n}\) and its canonical
symplectic form \(\Omega\).  Since \(\Omega\) is exact, there exists a globally defined one--form
\(\theta\) on \(N\) with \(\Omega=-\,d\theta\).  Any smooth function \(H:N\to\RR\) then determines
a Hamiltonian vector field \(X_H\in\mathfrak X(N)\) by
\begin{equation}\label{eq:XH-def-elliptic}
	\iota_{X_H}\Omega=dH,
\end{equation}
equivalently, the integral curves \(z(t)\) satisfy \(\dot z(t)=X_H(z(t))\).  These same curves are
also characterised variationally as stationary points (with fixed endpoints) of the Hamilton action
\begin{equation}\label{eq:Ham-action-elliptic}
	\mathcal S[z]:=\int_{t_0}^{t_1}\bigl(\theta(\dot z(t)) - H(z(t))\bigr)\,dt.
\end{equation}
The purpose of the present discussion is to explain how an \emph{elliptic} (Hilbert--space) quantum
dynamics can be obtained from this \emph{same} real Hamiltonian \emph{structure} \((N,\Omega,H)\) by adding
an admissibility structure that selects a complex Lagrangian sector.  This additional structure is
not canonically fixed by \(\Omega\) (nor by the information--geometric lift) and therefore
constitutes genuine modelling input.

We consider the global Darboux chart $(\tilde\rho,y)\in\RR^n\oplus\RR^n$ on $N$, so that
$\Omega=d\tilde\rho^i\wedge dy_i$.  In these coordinates we consider the (formal) complexification
$N_\CC\simeq\CC^{2n}$ and its complexified tangent bundle $TN_\CC:=TN\otimes_\RR\CC$, and we extend
$\Omega$ $\CC$--bilinearly to $\Omega_\CC:=\Omega\otimes 1$.  Fix a complex symmetric matrix
$M\in\Sym(n,\CC)$ with $\Im M$ nondegenerate and define constant complex subbundles of $TN_\CC$
(by identifying each fibre with $\CC^{2n}$ in Darboux coordinates) as graphs
\begin{equation}\label{eq:LM-def}
	L_M^{(+)}:=\{(u,Mu):u\in\CC^n\},
	\qquad
	L_M^{(-)}:=\{(u,\bar M u):u\in\CC^n\}.
\end{equation}
The following elementary statement shows that this choice provides a complex bi--Lagrangian
splitting of $TN_\CC$.

\begin{proposition}[Complex bi--Lagrangian splitting determined by $M$]\label{prop:LM-splitting}
	The subbundles $L_M^{(\pm)}\subset TN_\CC$ are complex Lagrangian for $\Omega_\CC$.  Moreover,
	$\Im M$ nondegenerate implies transversality and hence a direct sum decomposition
	\begin{equation}\label{eq:LM-splitting}
		TN_\CC=L_M^{(+)}\oplus L_M^{(-)}.
	\end{equation}
	In particular, the projector $\Pi_M^{(+)}:TN_\CC\to L_M^{(+)}$ onto $L_M^{(+)}$ along
	$L_M^{(-)}$ is well--defined and fibrewise complex linear.
\end{proposition}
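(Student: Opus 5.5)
The plan is a direct linear-algebra computation in the global Darboux chart. Since $L_M^{(\pm)}$ are constant subbundles, it suffices to work in a single fibre $\CC^{2n}$. In that fibre the complexified form reads
\[
\Omega_\CC\bigl((u,v),(u',v')\bigr)=u^\top v'-v^\top u'
\qquad (u,v,u',v'\in\CC^n),
\]
obtained by extending $\Omega=d\tilde\rho^i\wedge dy_i$ bilinearly (not sesquilinearly) over $\CC$.

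\emph{Lagrangian property.} For $(u,Mu),(u',Mu')\in L_M^{(+)}$ we get
\[
\Omega_\CC = u^\top M u' - (Mu)^\top u' = u^\top(M-M^\top)u' = 0,
\]
because $M\in\Sym(n,\CC)$. Since $\bar M$ is also symmetric, the same computation applies to $L_M^{(-)}$. Each graph is the image of the injective linear map $u\mapsto(u,Mu)$ (resp.\ $u\mapsto(u,\bar M u)$). Hence each has complex dimension $n=\tfrac12\dim_\CC\CC^{2n}$, and an isotropic subspace of half dimension is Lagrangian.

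\emph{Transversality.} Suppose $(u,Mu)=(u',\bar M u')$. Then $u=u'$ and $(M-\bar M)u=2i(\Im M)u=0$. Since $\Im M$ is nondegenerate, $u=0$, so $L_M^{(+)}\cap L_M^{(-)}=\{0\}$. Dimension count ($n+n=2n$) then gives the direct sum \eqref{eq:LM-splitting}.

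\emph{Projector.} Given $(a,b)\in\CC^{2n}$, we solve $(a,b)=(u,Mu)+(w,\bar M w)$. This gives $u+w=a$ and $Mu+\bar M w=b$, hence $(M-\bar M)u=b-\bar M a$, i.e.
\[
u=\tfrac{1}{2i}(\Im M)^{-1}(b-\bar M a).
\]
So $\Pi_M^{(+)}(a,b)=(u,Mu)$ is an explicit complex-linear map with constant coefficients, which yields a well-defined, fibrewise complex-linear projector onto $L_M^{(+)}$ along $L_M^{(-)}$.

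There is no substantive obstacle. The only point needing care is to use the $\CC$-bilinear extension $\Omega_\CC$ rather than a Hermitian form, since with the bilinear extension symmetry of $M$ is exactly what makes the graphs isotropic. One should also note that $\bar M$ here is the entrywise conjugate, which is again symmetric.
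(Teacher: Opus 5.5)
Your proof is correct and follows essentially the paper's argument: isotropy from the $\CC$--bilinear computation $u^\top(M-M^\top)u'=0$ together with half dimension, transversality from $(M-\bar M)u=2i(\Im M)u=0$, and the direct sum by dimension count. Your explicit projector formula is the one the paper records immediately after its proof in \eqref{eq:PiM-explicit}.
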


\begin{proof}
	For $u,v\in\CC^n$ one computes, using the standard complex bilinear pairing
	$\langle\cdot,\cdot\rangle$ on $\CC^n$ (no conjugation) and the induced formula
	$\Omega_\CC((a,b),(c,d))=\langle a,d\rangle-\langle c,b\rangle$,
	\[
	\Omega_\CC\bigl((u,Mu),(v,Mv)\bigr)
	=\langle u,Mv\rangle-\langle v,Mu\rangle
	=\langle u,(M-M^\top)v\rangle
	=0,
	\]
	since $M^\top=M$.  Thus $L_M^{(+)}$ is isotropic of complex rank $n$, hence Lagrangian; the
	same argument applies to $L_M^{(-)}$.  If $(u,Mu)=(v,\bar M v)$ then $u=v$ and $(M-\bar M)u=0$.
	Since $M-\bar M=2i\,\Im M$ and $\Im M$ is nondegenerate, it follows that $u=0$, hence
	$L_M^{(+)}\cap L_M^{(-)}=\{0\}$.  As both are rank $n$ complex subbundles of rank $2n$,
	this implies \eqref{eq:LM-splitting}.  The projector exists and is smooth because the splitting
	is constant in the chosen Darboux chart.
\end{proof}

For later use, it is convenient to record an explicit formula for $\Pi_M^{(+)}$ in Darboux
coordinates.  Given $(a,b)\in\CC^n\oplus\CC^n$, the unique decomposition
$(a,b)=(u,Mu)+(v,\bar M v)$ is determined by $u+v=a$ and $Mu+\bar M v=b$, equivalently
$(M-\bar M)u=b-\bar M a$.  Since $M-\bar M=2i\,\Im M$ is invertible, one finds
\begin{equation}\label{eq:PiM-explicit}
	u=(2i\,\Im M)^{-1}(b-\bar M a),
	\qquad
	\Pi_M^{(+)}(a,b)=(u,Mu).
\end{equation}

We now impose a complex eigenbundle admissibility restriction directly at the level of the
variational principle on $N_\CC$.  A $C^1$ curve $z:[t_0,t_1]\to N_\CC$ is called admissible if
\begin{equation}\label{eq:admissible-kinematics}
	\dot z(t)\in L_M^{(+)}\quad\text{for all }t\in[t_0,t_1],
\end{equation}
and an admissible variation along $z(\cdot)$ is a $C^1$ vector field $\delta z(t)\in T_{z(t)}N_\CC$
with fixed endpoints and values constrained by
\begin{equation}\label{eq:admissible-variations}
	\delta z(t)\in L_M^{(-)}\quad\text{for all }t\in[t_0,t_1],
	\qquad
	\delta z(t_0)=\delta z(t_1)=0.
\end{equation}
Stationarity of the Hamilton action under the restricted variations replaces the full Hamilton
equation by its projection onto the admissible Lagrangian sector.

\begin{proposition}[Projected Hamilton equation from complex eigenbundle admissibility]\label{prop:projected-Ham-M}
	Let $H:N\to\RR$ be real--analytic (in particular polynomial in Darboux coordinates), and denote
	by the same symbol its holomorphic extension to $N_\CC$.  Let $X_H$ be the complex Hamiltonian
	vector field on $N_\CC$ defined by $\iota_{X_H}\Omega_\CC=dH$.  A $C^1$ admissible curve $z(\cdot)$
	satisfying \eqref{eq:admissible-kinematics} is stationary for \eqref{eq:Ham-action-elliptic} under
	all admissible variations \eqref{eq:admissible-variations} if and only if it satisfies
	\begin{equation}\label{eq:projected-Ham-M}
		\dot z(t)=\Pi_M^{(+)}\,X_H\bigl(z(t)\bigr)\in L_M^{(+)}\big|_{z(t)},
		\qquad
		\textnormal{for all }t\in[t_0,t_1].
	\end{equation}
\end{proposition}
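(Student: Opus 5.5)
The plan is to compute the first variation of the Hamilton action \eqref{eq:Ham-action-elliptic} under restricted variations and to show that stationarity forces the residual $\dot z - X_H(z)$ to have no component in the admissible sector. Since $\Omega_\CC$ is constant in the Darboux chart, we may work with constant-coefficient formulas throughout.

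\textbf{Step 1: first variation.} For a variation $\delta z$ with fixed endpoints, the standard computation (Cartan's formula plus integration by parts) gives
\[
\delta\mathcal S=\int_{t_0}^{t_1}\bigl(\Omega_\CC(\delta z,\dot z)-dH(\delta z)\bigr)\,dt
=\int_{t_0}^{t_1}\Omega_\CC\bigl(\delta z,\ \dot z-X_H(z)\bigr)\,dt .
\]
Here we use $\Omega=-d\theta$, the definition \eqref{eq:XH-def-elliptic} extended holomorphically, and $\iota_{X_H}\Omega_\CC=dH$. Some care with the sign convention is needed, but only the pairing structure matters. Real-analyticity of $H$ ensures that $dH$ and $X_H$ make sense on $N_\CC$ and are holomorphic, so all computations are $\CC$-bilinear.

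\textbf{Step 2: localisation.} Stationarity for all admissible $\delta z$ is equivalent to
\[
\Omega_\CC\bigl(w,\ \dot z(t)-X_H(z(t))\bigr)=0\qquad\text{for all } w\in L_M^{(-)} \text{ and all } t.
\]
This follows from the fundamental lemma of the calculus of variations applied pointwise. We take $\delta z(t)=\phi(t)w$, where $\phi$ is a bump function vanishing at the endpoints and $w$ is a constant vector in $L_M^{(-)}$. This choice is admissible because $L_M^{(-)}$ is constant in the chart. Continuity of the integrand, which holds because $z$ is $C^1$, gives the pointwise statement. The converse direction is immediate from the integral formula.

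\textbf{Step 3: linear algebra.} By Proposition~\ref{prop:LM-splitting}, $L_M^{(-)}$ is Lagrangian, so $(L_M^{(-)})^{\Omega_\CC}=L_M^{(-)}$. The condition in Step 2 is therefore equivalent to $\dot z-X_H(z)\in L_M^{(-)}$. Decompose $X_H=\Pi_M^{(+)}X_H+(\id-\Pi_M^{(+)})X_H$ using \eqref{eq:LM-splitting}. The admissibility constraint \eqref{eq:admissible-kinematics} says $\dot z\in L_M^{(+)}$. Hence
\[
\dot z-\Pi_M^{(+)}X_H\in L_M^{(+)}\cap L_M^{(-)}=\{0\}
\]
by transversality, which yields \eqref{eq:projected-Ham-M}. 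Conversely, if $\dot z=\Pi_M^{(+)}X_H$, then $\dot z\in L_M^{(+)}$ and $\dot z-X_H=-(\id-\Pi_M^{(+)})X_H\in L_M^{(-)}$. By Step 2, the curve is then stationary.

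\textbf{Main obstacle.} The delicate point is Step 1 in the complexified setting. One must check that the action and its variation are well defined for complex curves with the holomorphic extensions of $\theta$ and $H$, and that the boundary term $\theta(\delta z)\big|_{t_0}^{t_1}$ vanishes. The plan is to take $\theta=y_i\,d\tilde\rho^i$ in Darboux coordinates and do the integration by parts explicitly in coordinates. This keeps everything polynomial and bilinear, so no genuine analytic issue arises beyond $C^1$ regularity. Step 3 is then routine, since it uses only the Lagrangian property and the transversality already established in Proposition~\ref{prop:LM-splitting}.
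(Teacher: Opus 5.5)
Your proposal is correct and follows the paper's proof. It uses the same first-variation identity and the same reduction to $\Omega_\CC(w,\dot z-X_H)=0$ for all $w\in L_M^{(-)}$. It also uses the same Lagrangian-plus-transversality argument; your self-annihilator formulation $(L_M^{(-)})^{\Omega_\CC}=L_M^{(-)}$ is equivalent to the paper's splitting $e=e_++e_-$ with a nondegenerate pairing between $L_M^{(-)}$ and $L_M^{(+)}$. Beyond the paper, you make the bump-function localisation and the converse direction explicit, which the paper leaves implicit.
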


\begin{proof}
	We extend $\theta$ and $\Omega$ $\CC$--bilinearly to $N_\CC$ and consider variations $z_\varepsilon$
	with real parameter $\varepsilon$, writing $\delta z=\frac{d}{d\varepsilon}\big|_{\varepsilon=0}z_\varepsilon$.
	Stationarity means $\frac{d}{d\varepsilon}\big|_{\varepsilon=0}\mathcal S[z_\varepsilon]=0$ in $\CC$.
	The first variation of \eqref{eq:Ham-action-elliptic} on $N_\CC$ is the standard identity
	\[
	\delta\mathcal S[z]
	=
	\Bigl[\theta(\delta z)\Bigr]_{t_0}^{t_1}
	+
	\int_{t_0}^{t_1}\Omega_\CC\bigl(\delta z,\dot z-X_H(z)\bigr)\,dt,
	\]
	which follows from $\Omega_\CC=-\,d\theta$ and $\iota_{X_H}\Omega_\CC=dH$.
	The boundary term
	vanishes by \eqref{eq:admissible-variations}, hence stationarity for all admissible $\delta z$
	is equivalent to
	\[
	\Omega_\CC\bigl(\delta z(t),\dot z(t)-X_H(z(t))\bigr)=0
	\qquad
	\forall\,t\in[t_0,t_1],\ \forall\,\delta z(t)\in L_M^{(-)}.
	\]
	Decompose $e(t):=\dot z(t)-X_H(z(t))$ according to \eqref{eq:LM-splitting} as
	$e(t)=e_+(t)+e_-(t)$ with $e_\pm(t)\in L_M^{(\pm)}$.  Since $L_M^{(-)}$ is Lagrangian,
	$\Omega_\CC(\delta z,e_-)=0$ for all $\delta z\in L_M^{(-)}$, and the condition reduces to
	$\Omega_\CC(\delta z,e_+)=0$ for all $\delta z\in L_M^{(-)}$.  Transversality implies that
	$\Omega_\CC$ pairs $L_M^{(-)}$ nondegenerately with $L_M^{(+)}$, hence $e_+(t)=0$ for all $t$,
	i.e.\ $\Pi_M^{(+)}e(t)=0$.  Applying $\Pi_M^{(+)}$ and using $\dot z(t)\in L_M^{(+)}$ yields
	$\dot z(t)=\Pi_M^{(+)}X_H(z(t))$, which is \eqref{eq:projected-Ham-M}.
\end{proof}

In the Darboux chart \eqref{eq:LM-def}, the kinematic constraint \eqref{eq:admissible-kinematics}
is the graph relation
\begin{equation}\label{eq:graph-kinematics}
	\dot y(t)=M\,\dot{\tilde\rho}(t).
\end{equation}
Thus Proposition~\ref{prop:projected-Ham-M} provides the general reduction principle: the
variational admissibility restriction converts Hamilton's equation into a projected evolution
tangent to the chosen complex Lagrangian sector.  The remainder of the elliptic construction is
then an explicit analysis of this projected dynamics in coordinates adapted to the graph
constraint.

When $\Im M$ is definite, one may (after a constant real symplectic change of Darboux
coordinates, i.e.\ an $\mathrm{Sp}(2n,\RR)$--change of linear coordinates preserving $\Omega$) normalise
$\Im M=-I$ and write $M=R-iI$ with $R^\top=R$.  In this normalised class, define the
complex linear combinations
\begin{equation}\label{eq:psi-phi-def}
	\psi:=(I-iR)\tilde\rho+i\,y,
	\qquad
	\phi:=(I+iR)\tilde\rho-i\,y.
\end{equation}
Differentiating and substituting \eqref{eq:graph-kinematics} gives
\begin{equation}\label{eq:phi-const}
	\dot\phi(t)=0,
\end{equation}
so admissible kinematics foliate the constrained dynamics by affine leaves $\{\phi=\phi_0\}$,
and on each leaf the projected equation \eqref{eq:projected-Ham-M} closes as an evolution
equation for the holomorphic coordinate $\psi$.

In the symplectically normalised coordinates \eqref{eq:psi-phi-def}, the admissibility restriction can be stated
equivalently as follows: \(\dot z\in L_M^{(+)}\) is equivalent to the graph relation \eqref{eq:graph-kinematics}, and
in the \((\psi,\phi)\) variables this is precisely the conservation law \(\dot\phi=0\).
 Thus admissible
kinematics foliate the constrained dynamics by affine leaves \(\{\phi=\phi_0\}\), and on each leaf the projected
equation \eqref{eq:projected-Ham-M} closes as a first--order evolution equation for the holomorphic coordinate
\(\psi\).

To connect this general reduction principle with standard complex quantum dynamics, we now specialise to a
distinguished polarisation.  For the lifted least--action Hamiltonian \eqref{eq:hamiltonian-fn}, the choice
$
M_\ast=(1-i)I
$
(i.e.\ \(R=I\) in \eqref{eq:psi-phi-def}) yields, on the invariant holomorphic leaf \(\{\phi=0\}\), an elliptic
Schr\"odinger equation:
\begin{proposition}[Schr\"odinger dynamics from the distinguished polarisation]
	\label{prop:Schrodinger-from-Mstar}
	Let
	\(
	H(\tilde\rho,y)=\tfrac12\|y\|^2+\langle y,\widehat V\tilde\rho\rangle
	\)
	be the lifted least--action Hamiltonian with
	\(\widehat V=\widehat S+\widehat F\),
	\(\widehat S^\top=\widehat S\), \(\widehat F^\top=-\widehat F\).
	Fix the complex polarisation determined by
	\[
	M_\ast=(1-i)I,
	\qquad \Im M_\ast=-I,
	\]
	and use the associated holomorphic/antiholomorphic coordinates (cf.~\eqref{eq:psi-phi-def} with \(R=I\))
	\[
	\psi:=(1-i)\tilde\rho+i\,y,
	\qquad
	\phi:=(1+i)\tilde\rho-i\,y.
	\]
	Then each affine leaf \(\{\phi=\phi_0\}\) is invariant under the projected Hamilton dynamics
	\eqref{eq:CLAR-projected-Hamilton}; in particular \(\{\phi=0\}\) is invariant.  On \(\{\phi=0\}\) one has
	\begin{equation}\label{eq:Schrodinger-Mstar-gauged}
		i\,\dot\psi=(I+\widehat S+i\widehat F)\psi.
	\end{equation}
	Equivalently, after the scalar rephasing \(\Psi(t):=e^{it}\psi(t)\), the evolution is the standard
	Schr\"odinger equation
	\begin{equation} \label{eq:schrödinger}
	i\,\dot\Psi=(\widehat S+i\widehat F)\Psi.
	\end{equation}
\end{proposition}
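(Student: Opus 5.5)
The plan is a direct computation: evaluate the projected Hamilton equation \eqref{eq:projected-Ham-M} of Proposition~\ref{prop:projected-Ham-M} explicitly for $M_\ast=(1-i)I$, and then rewrite it in the coordinates $(\psi,\phi)$.

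\textbf{Step 1 (projected vector field).} The Hamiltonian vector field of \eqref{eq:hamiltonian-fn}, holomorphically extended, is read off from \eqref{eq:ham-system}:
\[
X_H(\tilde\rho,y)=(a,b),\qquad a=\widehat V\tilde\rho+y,\qquad b=-\widehat V^{\!\top}y .
\]
Next I apply the explicit projector \eqref{eq:PiM-explicit}. Here $\bar M_\ast=(1+i)I$ and $2i\,\Im M_\ast=-2iI$, so
\[
u=\tfrac{i}{2}\bigl(b-(1+i)a\bigr),\qquad \dot{\tilde\rho}=u,\qquad \dot y=(1-i)u .
\]

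\textbf{Step 2 (invariance of the leaves).} Differentiating $\phi=(1+i)\tilde\rho-iy$ along the projected flow gives
\[
\dot\phi=(1+i)u-i(1-i)u=0 .
\]
This is just \eqref{eq:phi-const} specialised to $R=I$, so every leaf $\{\phi=\phi_0\}$ is invariant, and in particular $\{\phi=0\}$ is.

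Similarly, $\dot\psi=(1-i)u+i(1-i)u=(1-i)(1+i)u=2u$. Hence
\[
i\dot\psi=-2u=\widehat V^{\!\top}y+(1+i)\bigl(\widehat V\tilde\rho+y\bigr).
\]

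\textbf{Step 3 (close the equation on $\{\phi=0\}$).} On this leaf, $iy=(1+i)\tilde\rho$, so $y=(1-i)\tilde\rho$. Substituting into $\psi$ gives $\psi=(1-i)(1+i)\tilde\rho=2\tilde\rho$. The leaf is therefore parametrised by $\psi$ alone, with
\[
\tilde\rho=\tfrac12\psi,\qquad y=\tfrac12(1-i)\psi .
\]
Inserting these into the expression for $i\dot\psi$ gives
\[
i\dot\psi=\tfrac12\Bigl[(1-i)\widehat V^{\!\top}+(1+i)\widehat V+(1+i)(1-i)I\Bigr]\psi
=\tfrac12\Bigl[(\widehat V+\widehat V^{\!\top})+i(\widehat V-\widehat V^{\!\top})+2I\Bigr]\psi .
\]
Using $\widehat V+\widehat V^{\!\top}=2\widehat S$ and $\widehat V-\widehat V^{\!\top}=2\widehat F$, this is exactly \eqref{eq:Schrodinger-Mstar-gauged}.

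\textbf{Step 4 (rephasing).} Set $\Psi=e^{it}\psi$. Then $i\dot\Psi=-\Psi+e^{it}\,i\dot\psi=(\widehat S+i\widehat F)\Psi$, which is \eqref{eq:schrödinger}.

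\textbf{Main obstacle.} The only delicate points are bookkeeping. First, the sign and factor in $(2i\,\Im M_\ast)^{-1}$ must be handled correctly. Second, the leaf parametrisation $\tilde\rho=\psi/2$ must be checked to be consistent with $\phi=0$. I would sanity-check the projector by verifying that $u$ satisfies the graph relation \eqref{eq:graph-kinematics} and that $\dot\phi=0$ holds identically. Once these are confirmed, the remaining steps are linear algebra.
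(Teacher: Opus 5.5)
Your proposal is correct and matches the paper's proof step for step. Both compute $X_H=(a,b)$, apply the explicit projector \eqref{eq:PiM-explicit} with $\bar M_\ast=(1+i)I$, use the graph relation to obtain $\dot\phi=0$ and $\dot\psi=2\dot{\tilde\rho}$, and then substitute $\tilde\rho=\psi/2$, $y=M_\ast\tilde\rho$ on $\{\phi=0\}$. One cosmetic slip: in Step~2 the middle expression should read $i\dot\psi=2iu$, not $-2u$. The right-hand side $\widehat V^{\!\top}y+(1+i)(\widehat V\tilde\rho+y)$ that you write is nonetheless correct, so nothing downstream is affected.
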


\begin{proof}
	For \(H(\tilde\rho,y)=\tfrac12\|y\|^2+\langle y,\widehat V\tilde\rho\rangle\) the Hamiltonian vector field is
	\[
	X_H(\tilde\rho,y)=\bigl(a,b\bigr)
	=\bigl(y+\widehat V\tilde\rho,\,-\widehat V^{\!\top}y\bigr).
	\]
	The projected dynamics \eqref{eq:CLAR-projected-Hamilton} is
	\(\dot z=\Pi^{(+)}_{M_\ast}X_H(z)\), hence \(\dot z\in L^{(+)}_{M_\ast}\) and therefore
	\(\dot y=M_\ast\dot{\tilde\rho}\).  Differentiating the definitions of \(\phi\) and \(\psi\) gives
	\[
	\dot\phi=(1+i)\dot{\tilde\rho}-i\dot y=(1+i-iM_\ast)\dot{\tilde\rho}=0,
	\qquad
	\dot\psi=(1-i)\dot{\tilde\rho}+i\dot y=(1-i+iM_\ast)\dot{\tilde\rho}=2\dot{\tilde\rho}.
	\]
	Hence each leaf \(\{\phi=\phi_0\}\) is invariant and \(\dot{\tilde\rho}=\tfrac12\dot\psi\).
	
	Using the explicit projector formula \eqref{eq:PiM-explicit} with \(\bar M_\ast=(1+i)I\) and \(\Im M_\ast=-I\),
	the projected velocity satisfies
	\[
	\dot{\tilde\rho}
	=(2i\,\Im M_\ast)^{-1}\bigl(b-\bar M_\ast a\bigr)
	=\frac{1}{2i}\bigl(\bar M_\ast a-b\bigr).
	\]
	On the holomorphic leaf \(\{\phi=0\}\) one has \(y=M_\ast\tilde\rho\) and \(\tilde\rho=\psi/2\), hence
	\(a=\tfrac12(M_\ast+\widehat V)\psi\) and \(b=-\tfrac12\widehat V^{\!\top}M_\ast\psi\).  Substitution yields
	\[
	\dot\psi=2\dot{\tilde\rho}
	=\frac{1}{i}\Bigl(\bar M_\ast a-b\Bigr)
	=\frac{1}{i}\Bigl(I+\widehat S+i\widehat F\Bigr)\psi,
	\]
	i.e.\ \eqref{eq:Schrodinger-Mstar-gauged}.  The final Schr\"odinger form follows by the rephasing
	\(\Psi(t)=e^{it}\psi(t)\).
\end{proof}
To summarise, in the present \emph{elliptic} construction one supplies, as
additional modelling input, an \(\Omega\)--compatible complex Lagrangian sector
\(L_M^{(+)}\subset TN_\CC\).  In the adapted coordinates \((\psi,\phi)\) this
admissibility requirement is equivalent to the conservation law \(\dot\phi=0\).
Accordingly, the projected dynamics is foliated by invariant affine leaves
\(\{\phi=\phi_0\}\).  For the distinguished choice \(M_\ast=(1-i)I\), the leaf
\(\{\phi=0\}\) (vanishing anti--holomorphic coordinate) carries a Schr\"odinger
evolution with generator \(\widehat S+i\widehat F\), up to an additive scalar
multiple of the identity (equivalently, a global phase / projective gauge).

It is useful to contrast this with the \emph{hyperbolic} (para--K\"ahler) lift of
\autoref{sec:3}.  There the canonical Witt splitting is fixed by the
Fisher--Rao/Hessian data, and the deviation from para--unitarity is carried
entirely by the nilpotent shear.  On the {zero--residual leaf}
$
\Sigma_0=\{(\tilde\rho,y):y=0\},
$
the shear is inactive, the dynamics closes as a para--unitary block flow, and
one obtains a para--holomorphic (split--complex, \(\mathbb D\)) Schr\"odinger
representation (the ``para--Schr\"odinger'' equation (\ref{eq:para-schrodinger}).  Thus both sectors admit
a natural ``zero leaf'': \(\{\phi=0\}\) in the elliptic case and \(\Sigma_0\) in
the hyperbolic case.  The distinction is that \(\Sigma_0\) is canonical
from the lift itself, whereas \(\{\phi=0\}\) depends on the chosen complex
sector \(L_M^{(+)}\).

Consequently, elliptic coherence does \emph{not} follow from least--action
rationality alone: it requires an additional admissibility restriction selecting
a complex Lagrangian sector.  This motivates a modified least--action postulate
\emph{within} the selected holomorphic sector, which we formulate next as
coherent least--action rationality (CLAR).

\subsection{Coherent least--action rationality}
\label{subsec:CLAR}

The behavioural assumptions in \autoref{sec:2}, in particular the assumption on LAR in \autoref{subsec:least-action-rationality}, determine a canonical para--K\"ahler lift on
\(N:=T^*\widetilde{\mathcal M}\) and yield the lifted hyperbolic (Krein--space) model as derived in \autoref{sec:3}.  By contrast, an \emph{elliptic} (Hilbert--space) quantum model requires
additional structure not fixed by the Fisher--Rao data: as we saw above in \autoref{subsec:elliptic}, one must specify an \(\Omega\)--compatible complex
sector, equivalently a complex Lagrangian polarisation, and then restrict admissible evolutions to that
sector. In behavioural terms, elliptic coherence is therefore not an implication of Assumptions~A1--A3,
but an additional admissibility requirement.

We formulate this additional requirement as a modified least--action postulate on the same underlying
real symplectic phase space.  Let \(\theta\) be the tautological one--form on \(N\) with \(\Omega=-\,d\theta\),
let \(H:N\to\RR\) be the lifted Hamiltonian, and extend \((N,\Omega,\theta,H)\) to the
complexification \(N_\CC\).  Fix a complex bi--Lagrangian splitting of the complexified tangent bundle,
\[
TN_\CC=L^{(+)}\oplus L^{(-)},
\qquad
\Omega_\CC|_{L^{(\pm)}}\equiv 0,
\qquad
L^{(-)}=\overline{L^{(+)}}.
\]
In the constant Darboux setting used in \autoref{subsec:elliptic}, such a splitting is equivalently specified
by a matrix \(M\in\Sym(n,\CC)\) with \(\Im M\) nondegenerate via
\(L^{(+)}=L_M^{(+)}\), \(L^{(-)}=L_M^{(-)}\) as in \eqref{eq:LM-def}, with associated projector
\(\Pi^{(+)}=\Pi_M^{(+)}\) onto \(L^{(+)}\) along \(L^{(-)}\).

Consider the Hamilton action on \(C^1\) curves \(z:[t_0,t_1]\to N_\CC\),
\[
\mathcal S[z]:=\int_{t_0}^{t_1}\bigl(\theta(\dot z(t)) - H(z(t))\bigr)\,dt.
\]
A curve \(z(\cdot)\) is called \emph{coherent admissible} (for the chosen polarisation) if
\begin{equation}\label{eq:CLAR-admissible-curves}
	\dot z(t)\in L^{(+)}_{z(t)}
	\qquad\textnormal{for all }t\in[t_0,t_1],
\end{equation}
and an \emph{admissible variation} along \(z(\cdot)\) is a \(C^1\) field \(\delta z(t)\in T_{z(t)}N_\CC\) with
fixed endpoints and values constrained by
\begin{equation}\label{eq:CLAR-admissible-variations}
	\delta z(t)\in L^{(-)}_{z(t)},
	\qquad
	\delta z(t_0)=\delta z(t_1)=0.
\end{equation}

\medskip
\noindent
\textbf{Assumption 3'} [Coherent least--action rationality (CLAR)]\label{assump:CLAR}
\textit{
	Given a choice of complex polarisation \(TN_\CC=L^{(+)}\oplus L^{(-)}\), the decision maker realises
	coherent admissible episodes \(z(\cdot)\) that are stationary for the Hamilton action \(\mathcal S[z]\)
	under all admissible variations \eqref{eq:CLAR-admissible-variations}.
}

\medskip

By Proposition~\ref{prop:projected-Ham-M}, Assumption 3' is equivalent to the
\emph{projected Hamilton equation}
\begin{equation}\label{eq:CLAR-projected-Hamilton}
	\dot z(t)=\Pi^{(+)}X_H\bigl(z(t)\bigr)\in L^{(+)}_{z(t)},
	\qquad
	t\in[t_0,t_1],
\end{equation}
where \(X_H\) is the (complex) Hamiltonian vector field defined by \(\iota_{X_H}\Omega_\CC=dH\).
Thus CLAR replaces the full Hamilton evolution by its polarisation component: the complementary
\(L^{(-)}\) component of \(X_H\) is not followed, but excluded by admissibility.

When \(\Im M\) is definite (after normalization \(\Im M<0\)), the polarisation induces a positive definite
sesquilinear form on \(L^{(+)}\) via
\(\langle u,v\rangle_M:=-\,i\,\Omega_\CC(u,\bar v)\),
so that \eqref{eq:CLAR-projected-Hamilton} becomes a unitary evolution on the corresponding Hilbert
space.  For the lifted least--action Hamiltonian \eqref{eq:ham-system}, the distinguished choice
\(M_\ast=(1-i)I\) yields, on the invariant holomorphic leaf \(\{\phi=0\}\), the standard Schr\"odinger
dynamics with Hermitian generator \(\widehat S+i\widehat F\) (Proposition~\ref{prop:Schrodinger-from-Mstar}).
In particular, coherence in the elliptic model is an explicit behavioural restriction: it is imposed by
selecting a complex polarisation and applying the least--action principle \emph{within} its holomorphic
sector, rather than emerging from the hyperbolic LAR lift alone.

\section{Bounded Rationality}
\label{sec:BR}

This section reviews the least--action rationality (LAR) model through the lens of \emph{bounded rationality}.  Our guiding
theme is the separation between the latent information--geometric dynamics, formulated on amplitudes (and their lift), and the
epistemic readout on the simplex that produces observable choice frequencies.  We will primarily consider the hyperbolic
(split--complex/para--K\"ahler) regime developed above, where the lifted phase space
\[
N:=T^*\widetilde{\mathcal M}\subset T^*\RR^n
\]
carries its canonical symplectic form \(\Omega\) together with the para--K\"ahler structure induced by the Hessian data, and the
Hamiltonian LAR dynamics on \(N\) restricts on the zero--residual leaf to the para--unitary (split--complex Schr\"odinger--type)
sector.  A final subsection (\autoref{subsec:BR-elliptic-quantum}) returns briefly to the elliptic (complex, unitary) coherent
sector to contrast which bounded--rationality mechanisms persist under CLAR and which are specific to the hyperbolic lift.

As in other quantum--like approaches, two levels of state description coexist.  The
\emph{latent state} is an amplitude (or lifted amplitude--momentum pair) evolving on the
information--geometric lift, while the \emph{epistemic state} is a probability distribution on
the simplex $\mathcal M=\Delta^{n-1}_{>0}$ obtained by geometric projection (normalisation and squaring in a
chosen readout basis).  The latent variables need not be endowed with an ontological
interpretation; what matters for our purposes is that they support a canonical differential
geometry and a preferred LAR dynamics, whereas the epistemic variables are what is observable
as choice frequencies in a fixed context.  Bounded rationality is then understood as the
mismatch between these two descriptions: even when the latent layer obeys a clean variational
principle, the induced epistemic behaviour can be path dependent, history dependent, and
context dependent because of structural constraints in the latent generator and because of the
geometric reductions required to pass to observables.

We separate these mechanisms according to \emph{where} they arise: intrinsically on the epistemic simplex \(\mathcal M\),
intrinsically in the latent on--shell flow on the zero--residual leaf, intrinsically in the full lifted (off--shell)
phase--space dynamics, or from the geometric readout map that reduces latent amplitudes to observable probabilities.
Accordingly, we proceed as follows.  \autoref{subsec:non-integrability} studies epistemic non--integrability of the preference
form \(\beta\) on \(\mathcal M\).  \autoref{subsec:entropic-clock} studies the latent on--shell split
\(\widehat V=\widehat S+\widehat F\) on the zero--residual leaf and introduces a monotone entropic clock on rays.
\autoref{subsec:neutral-cone} studies off--shell deviation using the neutral form \(Q\) and the associated index \(\Lambda(t)\).
\autoref{subsec:geometry-readout} studies context dependence and hyperbolic interference arising from the readout family \(\pi_B\).
We conclude in \autoref{subsec:BR-elliptic-quantum} by returning briefly to the elliptic coherent sector (CLAR), to contrast
which bounded--rationality mechanisms persist under unitary Hilbert--space evolution and which are specific to the hyperbolic
lift.

\subsection{Non-integrability}
\label{subsec:non-integrability}

Assume $n\ge3$, so $\mathcal M=\Delta^{n-1}_{>0}$ has $\dim\mathcal M\ge2$ and admits nonconstant smooth loops and embedded disks.
Recall from \autoref{subsec:PreferenceOperators-full} that the induced epistemic preference field on lotteries is the
$1$--form $\beta=\mathcal T(\widehat V)\in\Omega^1(\mathcal M)$ obtained by restricting the spherical form
$\bar\alpha^{\mathbb S}$ to $\mathbb S^{n-1}_{+}$ and pulling it back along the square--root chart
$\iota:\mathcal M\to\mathbb S^{n-1}_{+}$, $\iota(q)=\sqrt q$.
All simplex--level differential identities involving $\beta$ (and hence $d\beta$) are understood on $\mathcal M$ and,
if a trajectory touches the boundary of $\Delta^{n-1}$, are to be read piecewise on each time interval of constant support
as in \autoref{rem:stratified-reading}.
By Proposition~\ref{prop:beta-decomp} this field splits canonically as
\[
\beta=\mathcal U+\mathcal R,
\qquad
\mathcal U=dU,
\qquad
\mathcal R=\beta-\mathcal U,
\qquad
d\beta=d\mathcal R.
\]
The conservative part $\mathcal U$ admits a global potential, while $\mathcal R$ captures the obstruction to global
integrability.  We package this obstruction as follows.

\begin{definition}[Co--utility bounded rationality]\label{def:co-utility}
	Epistemic preferences exhibit \emph{co--utility bounded rationality} if $\beta$ is not globally representable by a utility
	potential on $\mathcal M$, i.e.\ if there is no $U:\mathcal M\to\RR$ with $\beta=dU$.
	Since $\mathcal M$ is contractible, this is equivalent to $d\beta\not\equiv0$, equivalently to the existence of a smooth closed
	loop $\gamma\subset\mathcal M$ with $\oint_\gamma \beta\neq 0$ (and by Stokes' theorem
	$\oint_{\partial D}\beta=\int_D d\beta$ for any embedded $2$--disk $D$ spanning $\gamma$ and contained in $\mathcal M$; under
	boundary contacts this identity is read on each fixed--support stratum as in \autoref{rem:stratified-reading}).
\end{definition}

Following the geometric viewpoint of \cite{Russell1991}, we therefore interpret the $2$--form
\[
\mathcal K:=d\beta
\]
as a \emph{preference curvature}: it measures the failure of path independence, and its flux controls the accumulated loop
holonomy (regret) generated by sequential changes.  In our operator class the conservative part is precisely the utility channel,
so the obstruction is carried entirely by the co--utility channel: $d\beta=d\mathcal R$ by Proposition~\ref{prop:beta-decomp}.
Thus expected--utility behaviour corresponds to $\mathcal K\equiv0$, whereas nonzero curvature predicts path dependence and order
effects in sequential choice.  This packaging is consistent with classical regret/disappointment motivations
\cite{LoomesSugden1982,Bell1982}. The nonconservative component arises intrinsically as the
co--utility component induced by the same latent operator that also generates the utility potential.

\begin{remark}
	The discussion above can be viewed as an instantiation of a ``tower of irrationality''. As suggested by \cite{Russell1991}:
	one starts from a preference $1$--form $\beta$ and measures successive departures from classical potential representation by
	iterating the exterior derivative,
	\[
	\beta,\quad d\beta,\quad d(d\beta),\ \dots
	\]
	In the present setting the first step already exhausts the meaningful obstruction for epistemic preferences.  First, because
	$\mathcal M=\Delta^{n-1}_{>0}$ is contractible, the only de Rham obstruction to a global utility potential is the curvature
	$d\beta$; once $d\beta$ vanishes, $\beta$ is closed and hence exact.  Second, the exterior derivative tower always terminates
	immediately after the first obstruction for $1$--forms, since $d^2=0$ implies $d(d\beta)\equiv 0$.  What is model--dependent is
	therefore not the formal truncation, but the interpretation: in our construction the nonintegrable content of $\beta$ is
	canonically identified with the co--utility channel induced by the skew--symmetric part $\widehat F$ of the latent preference
	operator, and the preference curvature $d\beta$ measures the resulting holonomy (regret) around decision loops.
\end{remark}

\subsection{The entropic welfare clock on the zero--residual leaf}
\label{subsec:entropic-clock}

Recall that in our operator representation \(\widehat V=\widehat S+\widehat F\) the symmetric channel \(\widehat S\) encodes
evaluative norm change, while the skew channel \(\widehat F\) generates circulatory (regret--like) drift.  The purpose of the
present subsection is to make this separation explicit in polar variables and to introduce a canonical scalar clock on rays,
anchored by a spectral gauge choice that will later serve as a reference when we leave the zero--residual leaf.  The clock is
called \emph{entropic} in the sense that it is the log--partition normaliser in the quadratic readout (hence paired with
Shannon entropy by convex duality), and in the sense that its on--shell \emph{production rate} is generated by the evaluative
channel \(\widehat S\), not by the circulatory channel \(\widehat F\).

Recall that on the zero--residual leaf we set \(y\equiv 0\), so the lifted amplitude satisfies the linear on--shell dynamics
\eqref{eq:ham-system}, i.e.\ \(\dot{\tilde\rho}=\widehat V\,\tilde\rho\) with \(\widehat V=\widehat S+\widehat F\).
We write the polar decomposition
\[
r(t):=\|\tilde\rho(t)\|>0,
\qquad
\rho(t):=\frac{\tilde\rho(t)}{\|\tilde\rho(t)\|}\in\mathbb S^{n-1},
\qquad
\tilde\rho(t)=r(t)\rho(t).
\]
Differentiating \(\|\rho(t)\|^2\equiv 1\) yields \(\ip{\rho(t)}{\dot\rho(t)}=0\).  Substituting
\(\dot{\tilde\rho}=\dot r\,\rho+r\,\dot\rho\) into the on--shell evolution \(\dot{\tilde\rho}=\widehat V\tilde\rho\)
and separating radial and tangential components gives the polar identities
\begin{equation}\label{eq:r-dot-onshell}
	\dot r(t)=\ip{\rho(t)}{\widehat S\,\rho(t)}\,r(t),
\end{equation}
and
\begin{equation}\label{eq:rho-dot-onshell}
	\dot\rho(t)
	=\Bigl(\widehat F+\widehat S-\ip{\rho(t)}{\widehat S \rho(t)}\,I\Bigr)\rho(t),
	\qquad
	\ip{\rho(t)}{\dot\rho(t)}=0.
\end{equation}

The channel separation is already visible at the level of norm change: by skew--symmetry,
\(\ip{\rho}{\widehat F \rho}=0\) for all \(\rho\), so \(\widehat F\) contributes no \emph{direct} radial component, whereas
\(\widehat S\) completely determines the radial rate \(\dot r/r\) via \eqref{eq:r-dot-onshell}.  At the same time,
\(\widehat S\) also affects the tangential drift through \((\widehat S-\ip{\rho}{\widehat S\rho}I)\rho\), and the
instantaneous direction \(\rho(t)\) influences the radial rate through the Rayleigh quotient
\(\ip{\rho(t)}{\widehat S\rho(t)}\).  Circulation may therefore reshape the path \(\rho(t)\), but it cannot generate clock
production when \(\widehat S=0\).

The epistemic state is obtained by squaring coordinates,
\[
q_i(t):=\rho_i(t)^2,
\qquad
q(t)\in\Delta^{n-1},
\]
so simplex observables depend only on the ray \(\RR_{>0}\tilde\rho(t)\), not on the scale \(r(t)\).  A convenient way to
encode this ray--invariance is the identity--shift invariance of the simplex projection: for any \(c\in\RR\),
\[
\widehat S\mapsto \widehat S+cI
\quad\Longrightarrow\quad
\tilde\rho(t)\mapsto e^{ct}\tilde\rho(t)
\quad\Longrightarrow\quad
q_i(t)=\frac{\tilde\rho_i(t)^2}{\|\tilde\rho(t)\|^2}\ \text{ unchanged},
\]
and the direction equation \eqref{eq:rho-dot-onshell} is unchanged as well, since the shift cancels in
\(\widehat S-\ip{\rho}{\widehat S \rho}I\).  We therefore regard \(\widehat S\mapsto \widehat S+cI\) as a projective gauge.

We now introduce the scale variable that will serve as an intrinsic entropic (equivalently, welfare) clock.  Define the
normaliser
\[
\mathcal Z(t):=\|\tilde\rho(t)\|^{2}=\sum_{i=1}^n \tilde\rho_i(t)^2,
\qquad
\sigma(t):=\log\|\tilde\rho(t)\|=\tfrac12\log \mathcal Z(t),
\]
and the nonnegative weights \(w_i(t):=\tilde\rho_i(t)^2\), so that \(\mathcal Z=\sum_i w_i\) and \(q_i=w_i/\sum_j w_j\).
When \(w_i(t)>0\) we may write \(w_i=e^{\varphi_i}\) with \(\varphi_i(t):=\log w_i(t)\), giving
\[
\mathcal Z(t)=\sum_{i=1}^n e^{\varphi_i(t)},
\qquad
q_i(t)=\frac{e^{\varphi_i(t)}}{\sum_{j=1}^n e^{\varphi_j(t)}}.
\]
(These identities extend to the boundary by continuity, with the convention \(0\log 0:=0\).)  Thus \(\log \mathcal Z=\log\sum_i e^{\varphi_i}\)
is the log--sum--exp normaliser (the ``inclusive value'' in discrete choice), and it is canonically paired with Shannon entropy
by the variational identity
\[
\log\Bigl(\sum_{i=1}^n e^{\varphi_i}\Bigr)
=
\sup_{q\in\Delta^{n-1}}
\Bigl\{\sum_{i=1}^n q_i\varphi_i + \mathcal H(q)\Bigr\},
\qquad
\mathcal H(q):=-\sum_{i=1}^n q_i\log q_i,
\]
(with the convention \(0\log 0:=0\)).  Since \(q(t)\) is precisely the normalised weight vector, the supremum is attained
along the episode, yielding the free--energy identity
\begin{equation}\label{eq:free-energy-identity}
	\log\mathcal Z(t)=\sum_{i=1}^n q_i(t)\,\varphi_i(t)+\mathcal H\!\bigl(q(t)\bigr).
\end{equation}
This is the sense in which \(\sigma=\tfrac12\log\mathcal Z\) is simultaneously ``welfare'' (a log--partition potential) and
``entropic'' (paired with \(\mathcal H\) by convex duality), independently of whether the tangential drift contains
circulation (\(\widehat F\neq 0\)).

For dynamical purposes, it is convenient to anchor \(\sigma\) in a canonical way that respects the projective gauge above.
Let
\[
\lambda_{\min}:=\lambda_{\min}(\widehat S),
\qquad
\widehat S_{+}:=\widehat S-\lambda_{\min}I\succeq 0,
\]
and define the spectrally normalised \emph{entropic welfare clock}
\begin{equation}\label{eq:sigma-plus-def}
	\sigma_{+}(t):=\log\|\tilde\rho(t)\|-\lambda_{\min}\,t
	\;=\;\tfrac12\log \mathcal Z(t)-\lambda_{\min}\,t.
\end{equation}
This anchoring is projectively invariant: under \(\widehat S\mapsto\widehat S+cI\) one has
\(\log\|\tilde\rho(t)\|\mapsto \log\|\tilde\rho(t)\|+ct\) and \(\lambda_{\min}\mapsto\lambda_{\min}+c\), so \(\sigma_+\) is unchanged.
Along any zero--residual trajectory we have
\begin{equation}\label{eq:sigma-plus-dot}
	\dot\sigma_{+}(t)=\ip{\rho(t)}{\widehat S_{+}\rho(t)}\ge 0,
\end{equation}
so \(\sigma_{+}\) is nondecreasing and strictly increasing unless \(\rho(t)\) remains in the ground eigenspace
\(\ker(\widehat S_{+})\).  In particular, the clock production is driven by the evaluative channel: \(\widehat F\) contributes
no direct radial component, and when \(\widehat S=0\) one has \(\dot\sigma_{+}\equiv 0\).  Moreover, \eqref{eq:sigma-plus-dot}
depends only on the ray \(\rho(t)\), so up to an additive constant the clock is determined by the ray trajectory:
\[
\sigma_{+}(t)-\sigma_{+}(t_0)=\int_{t_0}^t \ip{\rho(s)}{\widehat S_{+}\rho(s)}\,ds.
\]
This is consistent with the auxiliary complex repackaging in \autoref{subsec:non-hermitian-packaging}, in which the packaged
generator \(\widehat H_{\C}\) is defined. If \(\widehat H_{\C}\) is hermitian  with respect to the standard complex inner product induced by the
Euclidean metric implies unitary evolution and hence norm preservation.  In particular, in the purely circulatory case
(\(\widehat S=0\)) the evolution is unitary and preserves the Euclidean norm.

Finally, mapping \eqref{eq:rho-dot-onshell} through \(q_i=\rho_i^2\) yields the induced simplex drift.  At the level of the
operator decomposition, this drift separates into an evaluative contribution driven by \(\widehat S\) and a circulatory
contribution driven by \(\widehat F\).  When \(n\ge3\), Proposition~\ref{prop:beta-decomp} identifies this separation on the
simplex by decomposing the induced preference $1$--form \(\beta=\mathcal T(\widehat V)\) into its potential part \(dU\)
(utility) and its complementary part (co--utility) under the regularity assumptions stated there.

The polar variables also anticipate what changes off--shell.  Along a general lifted episode one has
\(\dot{\tilde\rho}=y+\widehat V\tilde\rho\), so the radial rate acquires precisely the radial component of the residual,
\[
\dot r(t)=\ip{\rho(t)}{y(t)}+r(t)\ip{\rho(t)}{\widehat S\rho(t)}.
\]
Equivalently, at the level of the normaliser one finds
\[
\dot{\mathcal Z}(t)=2\,\ip{\tilde\rho(t)}{y(t)}+2\,\ip{\tilde\rho(t)}{\widehat S\,\tilde\rho(t)},
\qquad
\dot\sigma(t)=\ip{\rho(t)}{\widehat S\rho(t)}+\frac{\ip{\tilde\rho(t)}{y(t)}}{\mathcal Z(t)}.
\]
Thus the unique new scalar contribution is measured by the neutral pairing \(\ip{\tilde\rho(t)}{y(t)}\).  In the next
subsection we make this phase--space observable explicit via the null--cone index
\(\Lambda(t)=2\ip{\tilde\rho(t)}{y(t)}\) (so that \(\dot\sigma=\ip{\rho}{\widehat S\rho}+\Lambda/(2\mathcal Z)\)), and show how
it records action accumulation and enforces a one--way cone--crossing constraint for off--shell episodes.

\subsection{Structural hyperbolic bounded rationality: the null cone, action accumulation, and one--way cone crossing}
\label{subsec:neutral-cone}

The preceding subsections isolated bounded rationality mechanisms visible on the epistemic manifold \(\mathcal M\):
non--integrability is detected by the preference curvature \(d\beta\), and on the zero--residual leaf the latent split
\(\widehat V=\widehat S+\widehat F\) separates evaluative norm change from circulatory drift.  We now focus on a distinct,
specifically hyperbolic mechanism that is only visible in the full lifted dynamics: off--shell episodes with nonzero
residual are organised by an orthogonality relation in phase space, encoded by the canonical neutral form of the
para--K\"ahler lift.

Recall that we consider the Darboux chart \(Z=(\tilde\rho,y)\in T^*\widetilde{\mathcal M}\simeq\RR^{2n}\) of the lifted model, where \(y\)
denotes the residual (momentum) and \(t\mapsto Z(t)=(\tilde\rho(t),y(t))\) is a lifted episode satisfying
\eqref{eq:ham-system}.  By \autoref{cor:ambient-parasplitting-geom} and \eqref{eq:neutral-Q}, the Euclidean Hessian lift
carries the canonical neutral quadratic form
\[
Q(\tilde\rho,y)=2\langle \tilde\rho,y\rangle .
\]
Along an episode \(Z(t)=(\tilde\rho(t),y(t))\) we therefore consider the associated scalar observable
\begin{equation}\label{eq:Lambda-def}
	\Lambda(t):=(Q\circ Z)(t)=Q\bigl(\tilde\rho(t),y(t)\bigr)=2\langle \tilde\rho(t),y(t)\rangle.
\end{equation}
The level set
\[
\mathcal N:=\{Z\in T^*\widetilde{\mathcal M}:Q(Z)=0\}
=\{(\tilde\rho,y):\langle\tilde\rho,y\rangle=0\}
\]
is the \emph{decisional light cone}\footnote{\cite{Levin2019} define a \textit{cognitive light cone}, which is a different concept in another context.}, or simply the \emph{null cone}.  It contains the zero--residual leaf
\(\{y=0\}\), but is strictly larger: the null constraint \(\Lambda=0\) imposes only the orthogonality
\(\langle\tilde\rho,y\rangle=0\) and does not force \(y=0\).

The interpretation of the null constraint is an \emph{orthogonal splitting of bounded rationality deviation}.  For
\(\tilde\rho\neq0\), write \(r:=\|\tilde\rho\|>0\) and \(\rho:=\tilde\rho/r\in\mathbb S^{n-1}\).  Decompose the residual
into its component parallel to the instantaneous amplitude direction and its orthogonal complement,
\[
y_{\parallel}:=\langle\rho,y\rangle\,\rho,
\qquad
y_{\perp}:=y-y_{\parallel},
\qquad
\langle \rho,y_{\perp}\rangle=0.
\]
Then
\[
\Lambda=2\langle \tilde\rho,y\rangle=2r\,\langle\rho,y\rangle,
\]
so \(\Lambda\) measures exactly the \emph{radial} component \(y_{\parallel}\).  In particular,
\[
Z\in\mathcal N
\quad\Longleftrightarrow\quad
\langle\tilde\rho,y\rangle=0
\quad\Longleftrightarrow\quad
y_{\parallel}=0,
\]
i.e.\ on the null cone the instantaneous least--action deviation is purely tangential to the sphere of radius \(r\).

This distinction is behaviourally meaningful because simplex observables depend only on the ray \(\RR_{>0}\tilde\rho\) via the quadratic readout
\[
q_i=\frac{\tilde\rho_i^2}{\|\tilde\rho\|^2}=\rho_i^2,
\]
whereas the lifted model shows how deviation accumulates in phase space.  The quantity \(\Lambda\) isolates the component of
the residual that is aligned with the current amplitude direction and therefore separates ``radial'' deviation
(\(\Lambda\neq0\)) from ``purely redistributive'' deviation at fixed radius (\(\Lambda=0\)).

A central feature of the hyperbolic lift is that the corresponding episode--level index \(\Lambda\) satisfies an exact
balance law with a nonnegative source term.  Differentiating \eqref{eq:Lambda-def} along a solution and substituting
\(\dot{\tilde\rho}=y+\widehat V\tilde\rho\) and \(\dot y=-\widehat V^{\!\top}y\) from \eqref{eq:ham-system} yields
\[
\dot\Lambda(t)
=2\langle \dot{\tilde\rho}(t),y(t)\rangle+2\langle \tilde\rho(t),\dot y(t)\rangle
=2\|y(t)\|^2,
\]
hence the monotonicity identity
\begin{equation}\label{eq:Lambda-dot-neutral}
	\dot\Lambda(t)=2\|y(t)\|^2\ge 0.
\end{equation}
Integrating \eqref{eq:Lambda-dot-neutral} from \(t_0\) to \(t\) gives
\begin{equation}\label{eq:Lambda-integral-neutral}
	\Lambda(t)=\Lambda(t_0)+2\int_{t_0}^t\|y(\tau)\|^2\,d\tau,
	\qquad t\ge t_0.
\end{equation}
It is therefore natural to interpret \(\Lambda\) as a \emph{null--cone index} whose increment records the cumulative
deviation from the zero--residual leaf.  We name the associated nonnegative accumulation functional
\begin{equation}\label{eq:accum-residual}
	\mathcal A(t;t_0):=\int_{t_0}^t\|y(\tau)\|^2\,d\tau,\qquad t\in[t_0,t_1].
\end{equation}
Since \(y=\dot{\tilde\rho}-\widehat V\tilde\rho\) along any episode, one has
\(L(\tilde\rho,\dot{\tilde\rho})=\tfrac12\|y\|^2\), hence the action accumulated over \([t_0,t]\) is
\[
\int_{t_0}^t L\,d\tau=\tfrac12\,\mathcal A(t;t_0)=\tfrac14\bigl(\Lambda(t)-\Lambda(t_0)\bigr).
\]
Thus \(\Lambda\) is simultaneously a geometric phase--space index and a quantitative record of accumulated least--action
deviation.

The monotonicity of \(\Lambda\) imposes a one--way constraint relative to the null cone.  Fix \(t_0\).  If \(\Lambda(t_0)>0\),
then \(\Lambda(t)>0\) for all \(t\ge t_0\).  If \(\Lambda(t_0)=0\) and \(y(t_0)\neq 0\), then
\(\dot\Lambda(t_0)=2\|y(t_0)\|^2>0\), hence \(\Lambda(t)>0\) for all \(t>t_0\).  If \(\Lambda(t_0)<0\), then \(\Lambda\) can
cross \(0\) at most once for \(t\ge t_0\), and any crossing time \(t_*>t_0\) (if it exists) is uniquely characterised by
\[
\Lambda(t_*)=0
\quad\Longleftrightarrow\quad
\mathcal A(t_*;t_0)=\int_{t_0}^{t_*}\|y(\tau)\|^2\,d\tau=-\frac{\Lambda(t_0)}{2}.
\]
Finally, remaining on the null cone for all forward time is equivalent to vanishing residual: by \eqref{eq:Lambda-dot-neutral},
the condition \(\Lambda(t)\equiv \Lambda(t_0)\) for all \(t\ge t_0\) holds if and only if \(\|y(t)\|\equiv 0\), i.e.\
\(y(t)\equiv 0\), for all \(t\ge t_0\).  In particular, any episode that touches \(\mathcal N\) with nonzero residual departs
immediately into the region \(\{Q>0\}\), and cannot return to \(\{Q<0\}\) for forward time.

It is useful to contrast this lifted, episode--level constraint with the intrinsic simplex decompositions discussed earlier.
On \(\mathcal M\) the preference field \(\beta=\mathcal T(\widehat V)\) admits the canonical splitting
\(\beta=\mathcal U+\mathcal R\) with \(\mathcal U=dU\) and \(d\beta=d\mathcal R\) (Proposition~\ref{prop:beta-decomp}), so
co--utility bounded rationality is detected by the curvature \(d\beta\) and can be probed by loop holonomy.  By contrast,
\(\Lambda(t)=2\langle\tilde\rho(t),y(t)\rangle\) depends on the residual variable \(y\) and is therefore not determined by the
epistemic state \(q(t)\) alone.  Co--utility bounded rationality is thus a property of the epistemic preference \emph{field}
on \(\mathcal M\), whereas the null--cone index \(\Lambda\) is a \emph{trajectory} diagnostic of off--shell episodes in the lifted
model: it quantifies accumulated deviation and enforces the one--way cone--crossing constraint that is characteristic of the
hyperbolic least--action lift.

\begin{remark}[The two--outcome case and Lorentzian signature.]
	\label{rem:SO21-n2}
	For \(n=2\) the hyperbolic (split--complex) packaging admits a direct \((2{+}1)\)--dimensional Lorentzian
	interpretation.  After modding out the gauge shift \(\widehat V\mapsto \widehat V+cI\) (which is invisible
	under simplex projection), the effective traceless generator may be regarded as an element of
	\(\mathfrak{sl}(2,\RR)\), and one has the Lie--algebra isomorphism
	\[
	\mathfrak{sl}(2,\RR)\;\cong\;\mathfrak{so}(2,1),
	\]
	where \(\mathfrak{so}(2,1)\) is the Lorentzian Lie algebra in signature \((2,1)\).
	
	At the group level one has the standard double cover
	\[
	\Spin(2,1)\;\longrightarrow\;\mathrm{SO}^+(2,1),
	\qquad
	\ker=\{\pm 1\},
	\]
	and \(\Spin(2,1)\cong \mathrm{SL}(2,\RR)\).  In geometric--algebra terms, \(\Spin(2,1)\) is realised inside the
	even subalgebra \(\Cl^{+}(2,1)\) and acts on \(\RR^{2,1}\) by the sandwich map \(v\mapsto R v R^{-1}\).
	
	Geometrically, this is the hyperbolic analogue of the Bloch--sphere picture for a complex two--level system.
	In the split--complex representation the normalised, projectivised pure--state manifold for \(n=2\) identifies
	with a two--sheeted hyperboloid in \(\RR^{2,1}\), and changes of (para--unitary) frame act by Lorentz
	transformations \(\mathrm{SO}^+(2,1)\) on this ``hyperbolic Bloch'' geometry.  Thus the two--outcome hyperbolic
	setting carries a canonical Lorentzian \(\mathrm{SO}(2,1)\) symmetry, just as the elliptic (complex) two--outcome
	setting carries \(\mathrm{SO}(3)\). We will elaborate on this shortly in \autoref{subsubsec:born-readout}.
\end{remark}

\subsection{Geometry of readout}
\label{subsec:geometry-readout}

The mechanisms analysed so far are intrinsic either to the epistemic simplex dynamics (e.g.\ non--integrability of the induced
preference form) or to the lifted least--action dynamics (e.g.\ the on--shell channel split and the off--shell null--cone index).
We now isolate a distinct source of bounded rationality that is neither a property of the latent generator alone nor a property
of the simplex geometry alone, but of the \emph{geometric reduction} that connects them: observable choice frequencies are
obtained from amplitudes by a nonlinear readout map defined with respect to a fixed orthonormal readout basis of the latent
amplitude space.  Consequently, two latent episodes that are close (or even identical up to a latent gauge) can induce different
observable drifts if the readout basis is changed, and conversely a fixed latent episode can display context dependence solely
because the projection to probabilities is nonlinear.  Since this effect is already present in the \emph{real} (hyperbolic) LAR formulation, with linear amplitude dynamics together with a quadratic normalisation/squaring readout in an orthonormal basis, it can be analysed without invoking elliptic unitarity or any additional measurement postulates.

We therefore proceed in two steps.  First we record the probability readout in a form that is compatible with both the elliptic
and hyperbolic coherent descriptions by introducing a uniform phase--group notation.  We then study how varying the readout map
across contexts induces distinct observable drifts and produces the familiar cross--term mechanism usually called
(interference) when multiple amplitude contributions are coherently combined prior to quadratic normalisation.

\subsubsection{Probability readout, phase groups, and ambient signature}
\label{subsubsec:born-readout}

The epistemic readout is defined directly from the real amplitude variable \(\tilde\rho\). For \(\tilde\rho\neq 0\), set
\begin{equation}\label{eq:born-readout}
	\mathcal Z(\tilde\rho):=\|\tilde\rho\|^2=\sum_{k=1}^n(\tilde\rho_k)^2,
	\qquad
	q_i(\tilde\rho,y):=\frac{(\tilde\rho_i)^2}{\mathcal Z(\tilde\rho)}.
\end{equation}
Thus \(q(\tilde\rho,y)\) depends on \((\tilde\rho,y)\in N\) only through \(\tilde\rho\in\RR^n\); the residual variable \(y\)
affects \(q\) only through the induced trajectory \(t\mapsto(\tilde\rho(t),y(t))\).  The map \eqref{eq:born-readout} is
homogeneous of degree \(0\), hence invariant under \(\tilde\rho\mapsto\lambda\tilde\rho\) for \(\lambda\neq 0\), and it is
sign--blind in each component: \(\tilde\rho_i\mapsto-\tilde\rho_i\) leaves \(q\) unchanged.  Boundary events \(q_i=0\)
are equivalent to \(\tilde\rho_i=0\).

For notational comparison between the elliptic and hyperbolic sectors, let
\[
U(\CC):=\{u\in\CC^\times:\overline u\,u=1\}\cong S^1,
\qquad
U(\DD):=\{u\in\DD^\times:u^\#u=1\}.
\]
Here \(U(\DD)\) is the unit hyperbola, with two connected components; any choice of component is irrelevant for the readout,
since only the relation \(u^\#u=1\) enters below.  Given \(\tilde\rho\in\RR^n\setminus\{0\}\) and
\(u=(u_1,\dots,u_n)\in U(\mathbb K)^n\), define the associated \(\mathbb K\)--valued vector
\begin{equation}\label{eq:psi-from-rho}
	\psi=\psi(\tilde\rho;u)\in\mathbb K^n,
	\qquad
	\psi_i:=\tilde\rho_i\,u_i,
	\qquad i=1,\dots,n.
\end{equation}
Then, componentwise,
\[
\psi_i^*\psi_i=(\tilde\rho_i)^2
\quad (\mathbb K=\CC),
\qquad
\psi_i^\#\psi_i=(\tilde\rho_i)^2
\quad (\mathbb K=\DD),
\]
and therefore
\[
\sum_{k=1}^n\psi_k^*\psi_k=\mathcal Z(\tilde\rho)
\quad (\mathbb K=\CC),
\qquad
\sum_{k=1}^n\psi_k^\#\psi_k=\mathcal Z(\tilde\rho)
\quad (\mathbb K=\DD).
\]

Hence the quadratic readout \eqref{eq:born-readout} admits the the following Born-form identity
\begin{equation}\label{eq:born-form-K}
	q_i
	=
	\frac{\psi_i^*\psi_i}{\sum_{k=1}^n\psi_k^*\psi_k}
	\quad (\mathbb K=\CC),
	\qquad
	q_i
	=
	\frac{\psi_i^\#\psi_i}{\sum_{k=1}^n\psi_k^\#\psi_k}
	\quad (\mathbb K=\DD).
\end{equation}
Equation \eqref{eq:born-form-K} is an identity, not an additional postulate. In the elliptic sector, \cite{Caticha2021} likewise derives the Born-form expression rather than postulating it. Here the probabilities are defined by \eqref{eq:born-readout}, and \eqref{eq:born-form-K} merely expresses this quadratic readout as normalised component intensities for the vectors \(\psi\) of the form \eqref{eq:psi-from-rho}. In particular, \eqref{eq:born-form-K} is \emph{not} asserted for arbitrary \(\psi\in\mathbb K^n\).

On the chosen holomorphic eigenbundle one may take \(\psi=\Psi\in\CC^n\), and on the chosen para--holomorphic eigenbundle one
may take \(\psi=\Psi_{\DD}\in\DD^n\).  In both cases the probability rule is still \eqref{eq:born-readout}; only its
notation changes.  The elliptic/hyperbolic distinction therefore lies not in the quadratic readout
\eqref{eq:born-readout}--\eqref{eq:born-form-K}, but in the phase geometry and in the corresponding \(\mathbb K\)--Hermitian
packaging of the state evolution (cf.\ \eqref{eq:schrödinger}, \eqref{eq:para-schrodinger}).  For \(n=2\), this is the
Lorentzian \(\mathrm{SO}(2,1)\) carrier of Remark~\autoref{rem:SO21-n2}.

\begin{remark}[Hyperbolic Born form in the readout basis]
	\label{rem:hyperbolic-born-readout-basis}
	Hyperbolic quantum mechanics over \(\DD\) (e.g.\ \cite{Khrennikov2000hyperbolicquantummechanics}) often writes probabilities as
	\[
	q_i=\frac{c_i^\#c_i}{\sum_{k=1}^n c_k^\#c_k},
	\]
	with \(c_i\in\DD\) the expansion coefficients of a split--complex state in a chosen readout basis.  For a general
	\(\DD\)--vector, the quantities \(c_i^\#c_i\) may be indefinite.  In the present framework, however, the coefficients entering
	the readout are constrained to be of the form
	\[
	c_i=(\Psi_{\DD})_i=\tilde\rho_i u_i,
	\qquad
	u_i\in U(\DD),
	\qquad
	u_i^\#u_i=1.
	\]
	It follows immediately that
	\[
	c_i^\#c_i=(\tilde\rho_i)^2\ge 0,
	\]
	with no sign restriction on \(\tilde\rho_i\).  Thus the hyperbolic formula in \eqref{eq:born-form-K} is exactly equivalent to
	the defining readout \eqref{eq:born-readout}.
\end{remark}

\subsubsection{Contexts and hyperbolic interference}
\label{subsubsec:Contexts-hyperbolic-interference}

Having fixed the quadratic readout and its phase--group notation above, we now allow the readout itself
to vary.  A \emph{context} is specified by a choice of readout basis (equivalently, by a choice of
readout map), and the same latent amplitude state can therefore produce different observable probability
assignments in different contexts.  Because the readout is nonlinear, its differential depends on the
chosen context, so even when the latent dynamics is context--independent the induced simplex drift need
not be.  This readout--geometric dependence is the basic contextuality mechanism in our framework.

Recall that latent states are represented by normalised amplitudes
\(\rho\in\mathbb S^{n-1}\subset\RR^n\), and that observable probabilities arise from the quadratic
readout.  The canonical readout corresponds to the standard basis, i.e.\ \(q_k=\rho_k^2\).
To model contextual readout, we allow the readout basis to vary: a \emph{context} is an orthogonal matrix
\(B\in O(n)\) with columns \(b_1,\dots,b_n\).  The associated readout map is
\begin{equation}\label{eq:readout-piB}
	\pi_B:\mathbb S^{n-1}\to\Delta^{n-1},
	\qquad
	\pi_B(\rho)_k:=\langle b_k,\rho\rangle^2,
\end{equation}
so that \(\sum_k \pi_B(\rho)_k=\|\rho\|^2=1\).  The canonical context is \(B=I\), giving
\(\pi_I(\rho)_k=\rho_k^2\).  Equivalently, writing \(\rho^{(B)}:=B^\top\rho\), one has
\(\pi_B(\rho)_k=(\rho^{(B)}_k)^2\), i.e.\ the same quadratic readout applied in rotated coordinates; in
particular, for real contexts \(B\in O(n)\) the split--complex form \eqref{eq:born-form-K} remains
available by setting \(c^{(B)}_k:=\rho^{(B)}_k u_k\) with \(u_k\in U(\DD)\), so that
\[
(c^{(B)}_k)^*c^{(B)}_k=(\rho^{(B)}_k)^2=\pi_B(\rho)_k.
\]

Now let \(X\) denote the latent LAR vector field \eqref{eq:LAR-vector-field} on \(\mathbb S^{n-1}\).
In context \(B\), write \(q:=\pi_B(\rho)\in\Delta^{n-1}\).  The induced \emph{instantaneous observable
	drift} at \(\rho\) is the pushforward \(d\pi_{B,\rho}(X(\rho))\), which satisfies \(\sum_k \dot q_k=0\)
and hence defines an admissible simplex drift whenever \(q\in\mathcal M\).  Since \(\pi_B\) is
nonlinear, its differential depends on \(B\), so the observable simplex trajectory depends on which
readout basis \(B\) is used to read out the same latent amplitude episode.

This does not exhaust the effect of changing basis.  Even for a fixed context \(B\), the same quadratic
readout can produce cross terms whenever the \(B\)--coordinates of the latent state contain several
modal contributions before readout.  In that case squaring generates off--diagonal terms.  In
particular, this occurs whenever the chosen readout basis is not aligned with the spectral basis of the
on--shell generator.  In the split--complex extension \(\DD\), the same algebraic mechanism yields the
hyperbolic interference laws studied in the hyperbolic quantum literature
(e.g.\ \citep{Khrennikov2000hyperbolicquantummechanics}).

To make this explicit, on the zero--residual leaf one may use the auxiliary complex packaging
\eqref{eq:schrodinger-complex-packaging}.  The corresponding effective complex Hamiltonian
\(\widehat H_{\mathbb C}=i\,\widehat V\) is generically non--Hermitian whenever \(\widehat S\neq 0\).
Assuming \(\widehat V\) is diagonalizable over \(\mathbb C\), write
\[
\widehat V=P\Lambda P^{-1}.
\]
Then
\[
z_i(t)=\sum_a P_{ia}e^{\lambda_a t}c_a,
\qquad
w_i(t):=z_i(t)^2
=
\sum_{a,b}P_{ia}P_{ib}\,c_a c_b\,e^{(\lambda_a+\lambda_b)t}.
\]
The terms with \(a\neq b\) are precisely the interference terms.  Since \(\widehat V\) and the
on--shell initial data are real, the complex modal contributions recombine to a real trajectory
\(z(t)=\tilde\rho(t)\in\RR^n\).  The observable probabilities are therefore still computed by the
same readout, written equivalently in the form \eqref{eq:born-form-K} for representatives of the
type \eqref{eq:psi-from-rho}.

Thus, a misaligned context can produce interference, but the two notions should not be
identified: context refers to the chosen readout basis, whereas interference refers to the
off--diagonal cross terms generated within that fixed basis by the quadratic readout. In the real hyperbolic LAR
framework, both are intrinsic consequences of linear amplitude dynamics combined with the quadratic
readout map to the simplex, and neither requires complex Hilbert space, elliptic unitarity, or any
additional measurement postulate.

\subsection{Bounded rationality in elliptic quantum geometry}
\label{subsec:BR-elliptic-quantum}

So far in \autoref{sec:BR} we have worked in the hyperbolic information--geometric regime: a real para--K\"ahler LAR lift on
the doubled phase space \(N=T^*\widetilde{\mathcal M}\), and bounded rationality arising when the latent flow is reduced to the
epistemic simplex.  In that setting the latent split \(\widehat V=\widehat S+\widehat F\) separates evaluative and co--utility
channels, and a specifically hyperbolic, lifted diagnostic is available off--shell: the neutral form \(Q\) (and its index
\(\Lambda\)) controls residual accumulation through the doubled variable \(y\).

By contrast, the elliptic (complex, Hilbert--space) quantum model arises in our framework by imposing the coherent admissibility
restriction of Coherent Least--Action Rationality (CLAR).  One passes to the complexification \(N_\CC\) and restricts to a
chosen complex Lagrangian splitting \(TN_\CC=L^{(+)}\oplus L^{(-)}\) (cf.\ \autoref{subsec:CLAR} and \autoref{sec:4}).  On the
resulting holomorphic sector the reduced dynamics closes as the standard Schr\"odinger evolution with Hermitian packaged
generator  in (\ref{eq:schrödinger})
\[
\widehat H\;:=\;\widehat S+i\widehat F,
\]
hence is unitary with respect to the induced Hilbert inner product.  In particular, the Hilbert norm \(\|\psi(t)\|\) is
conserved along CLAR trajectories.  From the viewpoint of the present section, this means that the lifted, off--shell hyperbolic
mechanism of \autoref{subsec:neutral-cone} does not play the same role in the elliptic regime: the neutral index \(\Lambda\) is a
trajectory diagnostic tied to the real para--K\"ahler lift on the doubled variables \((\tilde\rho,y)\), whereas CLAR replaces the
full lifted flow by a polarization--projected, coherence--preserving evolution on \(N_\CC\).  
Nevertheless, bounded rationality in the elliptic regime still persists through epistemic non--integrability (curvature of the induced preference form) and through readout--induced contextuality, since both arise from the reduction from latent amplitudes to observable probabilities rather than from the off--shell hyperbolic lift.

To make this persistence precise, we now pass from the latent unitary CLAR evolution to its induced epistemic dynamics in a fixed readout context.  Since the off--shell lifted diagnostic \(\Lambda\) is absent in the elliptic regime, the relevant bounded--rationality mechanisms are those that survive the reduction to observable probabilities: the induced simplex drift, its Fisher--Rao dual preference form, and their dependence on the chosen readout context.

As in \autoref{subsec:geometry-readout}, fix an admissible readout context \(B=\{b_1,\dots,b_n\}\subset\RR^n\), i.e.\ the orthonormal basis used in the quadratic readout.  
In the elliptic model the
latent state is a complex amplitude \(\psi(t)\in\CC^n\setminus\{0\}\) evolving unitarily on a coherent sector, and the associated
epistemic state is obtained by the modulus--squared readout in the chosen basis,
\[
q_B(t)=\pi_B(\psi(t))\in\mathcal M,
\qquad
\pi_B(\psi)_k:=\frac{|\langle b_k,\psi\rangle|^2}{\|\psi\|^2},
\]
where we restrict attention to episodes for which \(q_B(t)\in\mathcal M=\Delta^{n-1}_{>0}\) (so Fisher--Rao geometry applies
without boundary issues).  For unitary CLAR trajectories \(\|\psi(t)\|\) is constant, so the normalisation is inessential, but we
keep it explicit to emphasise ray--invariance.

The induced simplex drift in context \(B\) is obtained by pushing forward the latent velocity,
\[
\dot q_B(t)=d\pi_{B,\psi(t)}\bigl(\dot\psi(t)\bigr),
\]
which is automatically tangent to the simplex hyperplane \(\sum_k q_k=1\).  Since \(\pi_B\) is nonlinear and not injective,
\(\dot q_B(t)\) generally depends on latent phase information in \(\psi(t)\).  Accordingly, the reduced drift is naturally viewed
as an object defined \emph{along a trajectory} rather than as an autonomous vector field on \(\mathcal M\).

To relate back to the epistemic analysis of \autoref{subsec:non-integrability}, one may associate to the reduced drift a
(preference) covector along the trajectory by Fisher--Rao duality,
\[
\beta_B(t)\;:=\;g_F\!\bigl(\dot q_B(t),\,\cdot\,\bigr)\in T^*_{q_B(t)}\mathcal M.
\]
Whenever the reduced drift admits a smooth representative one--form field \(\beta_B\in\Omega^1(\mathcal M)\) in a fixed context
(e.g.\ after choosing a coherent gauge/section on the holomorphic sector), the same integrability question arises as in the
hyperbolic case: the curvature \(d\beta_B\) detects epistemic path dependence in that context, while variation of \(B\) probes the
readout--geometric contextuality mechanism of \autoref{subsec:geometry-readout}.

\begin{definition}[Coherent bounded rationality in a fixed context]
	Assume the latent dynamics satisfy CLAR (so \(\psi(t)\) evolves unitarily on a holomorphic sector), and fix a readout context
	\(B\) for which the reduced drift admits a representative preference form \(\beta_B\in\Omega^1(\mathcal M)\).
	Assume moreover that \(\beta_B\) admits the canonical decomposition
	\[
	\beta_B=dU_B+\mathcal R_B
	\]
	as in Proposition~\ref{prop:beta-decomp}.  We say that the epistemic dynamics in context \(B\) exhibit
	\emph{coherent bounded rationality} if along every CLAR trajectory the evaluative rate vanishes,
	\[
	\frac{d}{dt}U_B\bigl(q_B(t)\bigr)=dU_B\bigl(\dot q_B(t)\bigr)=0.
	\]
	Equivalently, \(g_F\!\bigl(\grad_{g_F}U_B(q_B(t)),\,\dot q_B(t)\bigr)=0\) whenever \(q_B(t)\in\mathcal M\).
\end{definition}

Thus, in the elliptic regime, CLAR provides coherence--preserving (unitary) latent motion generated by the Hermitian operator
\(\widehat H=\widehat S+i\widehat F\), while epistemic bounded rationality is determined by how this motion is reduced to
observable probabilities by the nonlinear readout family \(\pi_B\). 
In fact, this is precisely the setting of the complex Hilbert--space quantum--cognition literature: bounded rationality is
compatible with coherence--preserving (unitary) latent evolution, and arises through geometric reduction, namely the nonlinear,
context--dependent readout family \(\pi_B\) that maps latent states to observable probabilities on the simplex.
What our framework adds in the hyperbolic (para--K\"ahler) regime is an additional, genuinely lifted mechanism that is invisible
under coherent elliptic restriction: the full least--action dynamics on the doubled variables \((\tilde\rho,y)\) admits the
neutral index \(\Lambda(t)=2\langle\tilde\rho(t),y(t)\rangle\), which provides an intrinsic accumulation diagnostic and a one--way
cone--crossing constraint for residual deviation.  This \(\Lambda\)--mechanism has no analogue in the unitary CLAR sector, where
one restricts attention to coherent holomorphic evolution and does not track episodes outside the non-zero residual leaf.

\section{Conclusions and Further Research}
\label{sec:conclusions}

The role of information theory in behavioural choice modelling has been mixed.
\citet{Luce2003} argued that, despite its early appeal, information theory offered
limited explanatory value for psychological theories of choice.  His critique was
structural: standard information--theoretic formulations treat alternatives as
unstructured elements of a finite set, whereas behavioural stimuli typically
come with internal relations (similarity, salience, or metric structure) that
shape observed choice patterns.  In this view, entropy--based approaches risk
capturing only a generic ``softening'' of choice, without explaining how context
and internal structure enter.

Interestingly, the same year,  Shannon entropy was reintroduced into models
of bounded rationality in a different role:  \citet{Sims2003} treated entropy not
as a description of stimuli but as a \emph{resource cost} governing internal
representation.  This reinterpretation sidesteps Luce's objection about
unstructured alternatives, yet it still operates directly on the simplex and
does not by itself supply an intrinsic geometry linking the latent layer of
representation to the observable layer of choice.

\paragraph{Main contribution.}
We develop a unified information--geometric theory of choice in which observable probabilities on the simplex are obtained by a
quadratic (Born--type) readout from a latent amplitude dynamics governed by a single \emph{preference operator}
\(\widehat V\).  The operator admits the canonical decomposition \(\widehat V=\widehat S+\widehat F\), and this split has direct
behavioural content.  The symmetric sector \(\widehat S\) is the \emph{evaluative} channel: it drives welfare/utility production
through norm change on rays, induces gradient--type drift, and generates a canonical monotone \emph{entropic welfare clock} given
by the log--partition normaliser.  This is the same log--sum--exp (inclusive--value) potential that underlies
entropy--regularised choice, free--energy formulations, and rational--inattention objective functions, and it is canonically
paired with Shannon entropy by convex duality.  The skew sector \(\widehat F\) is the \emph{circulatory} channel: it generates
redistributive drift without welfare production and induces co--utility/regret via non--integrable (holonomic) components of the
epistemic preference field.

Crucially, \(\widehat F\) identifies a genuine \emph{split--complex} (para--complex) sector of the lifted dynamics.  On the
cotangent lift \(N=T^*\widetilde{\mathcal M}\), equipped with its canonical para--K\"ahler structure, we derive the Hamiltonian
least--action equations and show that on the distinguished zero--residual leaf \(y=0\) the dynamics closes as a linear
\emph{para--Schr\"odinger} evolution.  This provides a principled motivation for working with real amplitudes in quantum--like
cognition models: the canonical hyperbolic lift already contains a coherent Schr\"odinger--type sector with a transparent
welfare--vs--circulation interpretation, without assuming complex Hilbert space.

The framework extends beyond the coherent leaf.  Away from \(y=0\), bounded rationality is organised by intrinsic
phase--space diagnostics of the hyperbolic lift, including the neutral pairing and its null--cone index, which record
accumulated least--action deviation and enforce one--way cone--crossing constraints for non-zero residual episodes.  Taken together, the
theory delivers an operator--geometric foundation in which welfare production, regret/co--utility, coherent split--complex
dynamics, arise from a single least--action principle, even outside the non-zero residual leaf.

These constructions also clarify the relationship between entropy--regularised choice and quantum--like cognition models.  The
approach is explicitly \emph{probabilities--first}, in the spirit of information--geometric reconstructions, but it does not
begin by postulating a complex Hilbert space.  Instead, the Fisher--Rao/Hessian data determine a canonical para--K\"ahler
structure on the lifted manifold, and observable probabilities are obtained directly on the real leaf via the same quadratic
normalisation.

At the same time, the familiar elliptic (Hilbert--space) regime can be accommodated \emph{within the same information--geometric
origin}.  What changes is not the underlying symplectic structure \((N,\Omega)\) but the class of admissible variations: an
elliptic model is obtained only after supplying additional admissibility input that selects an \(\Omega\)--compatible complex
Lagrangian sector \(L^{(+)}\subset TN_\CC\) and restricts trajectories accordingly.  Within such a sector the restricted dynamics
closes as a unitary Schr\"odinger--type evolution, and Born--type probabilities again arise from the same quadratic readout.  In
this precise sense, elliptic coherence is not a primitive consequence of least--action rationality; it is a coherent restriction
of the canonical hyperbolic lift, motivating coherent least--action rationality (CLAR) as a least--action postulate formulated
\emph{within} the selected complex sector.

Several directions for further research follow naturally.  On the applied side,
the operator--based (pairwise) structure suggests new estimators and model
classes that interpolate between additive utility, pairwise contextual
couplings, and regret--driven circulation, with clear links to existing
entropy--regularised and context--dependent choice models.  On the geometric
side, it would be valuable to characterise when (and in what sense) admissible
complex sectors can be selected in a data--driven way, and to understand how the
hyperbolic null--cone and intrinsic clocks interact with projection, aggregation
and measurement schemes on the simplex.  A related direction is to study
hybrid and open--system perturbations that interpolate between coherence--restricted
and unrestricted lifted dynamics (e.g.\ channel/Lindblad--type effects in
quantum--like decision dynamics; \citep[e.g.][]{AsanoEtAl2012}).

\setlength{\bibsep}{0pt plus 0.3ex}

\end{document}